\newtheorem{thm}{Theorem}[section]
\newtheorem{cor}[thm]{Corollary}
\newtheorem{lemma}[thm]{Lemma}
\newtheorem{property}[thm]{Property}
\theoremstyle{definition}
\newtheorem{definition}[thm]{Definition}
\newtheorem{example}[thm]{Example}
\newcommand\numberthis{\addtocounter{equation}{1}\tag{\theequation}}
\algnewcommand{\LineComment}[1]{\State \(\triangleright\) #1}
\DeclareMathOperator*{\argmax}{arg\,max}
\DeclareMathOperator*{\argmin}{arg\,min}
\title{Multi-Apartment Rent Division\thanks{Procaccia gratefully acknowledges research support by the National Science Foundation under grants IIS-2147187, IIS-2229881 and CCF-2007080; and by the Office of Naval Research under grant N00014-20-1-2488. Schiffer was supported by an NSF Graduate Research Fellowship. Zhang was supported by an NSF Graduate Research Fellowship.}}
\author{
    Ariel D. Procaccia\thanks{Paulson School of Engineering and Applied Sciences, Harvard University | \emph{E-mail}: \href{mailto:arielpro@seas.harvard.edu}{arielpro@seas.harvard.edu}.}
	\and
	Benjamin Schiffer\thanks{Department of Statistics, Harvard University | \emph{E-mail}: \href{mailto:bschiffer1@g.harvard.edu}{bschiffer1@g.harvard.edu}.}
	\and
	Shirley Zhang\thanks{Paulson School of Engineering and Applied Sciences, Harvard University | \emph{E-mail}: \href{mailto:szhang2@g.harvard.edu}{szhang2@g.harvard.edu}.}
}
\begin{document}

\begin{titlepage}
\maketitle

\setcounter{page}{0}
\thispagestyle{empty}

\begin{abstract}
    Rent division is the well-studied problem of fairly assigning rooms and dividing rent among a set of roommates within a single apartment. A shortcoming of existing solutions is that renters are assumed to be considering apartments in isolation, whereas in reality, renters can choose among multiple apartments. In this paper, we generalize the rent division problem to the multi-apartment setting, where the goal is to both fairly choose an apartment among a set of alternatives and fairly assign rooms and rents within the chosen apartment. Our main contribution is a generalization of envy-freeness called \textit{negotiated envy-freeness}. We show that a solution satisfying negotiated envy-freeness is guaranteed to exist and that it is possible to optimize over all negotiated envy-free solutions in polynomial time. We also define an even stronger fairness notion called \textit{universal envy-freeness} and study its existence when values are drawn randomly.
\end{abstract}

\end{titlepage}

\section{Introduction}

Rent division is a classic and intuitive problem within the space of fair division, in which a number of roommates are faced with the joint decision of how to assign rooms and split the total rent in a shared apartment. The problem is complicated by the fact that the rooms may vary widely and that players may have very different values for different rooms; for example, one room may have a better view, or one player may derive higher utility from having larger closets. Preferences over prices may also be complex, but it is commonly assumed that players' utilities are \emph{quasi-linear}: utility equals value minus price. The goal is to find an assignment of players and prices to rooms that takes such player preferences into account and satisfies a rigorous notion of fairness.  

The gold standard for fairness in rent division is \textit{envy-freeness}, which guarantees that each player has higher utility for their own room than for any other room, given the prices assigned to each room. In other words, in an envy-free allocation, no roommate would want to trade their room for any other room. Along with being easily justifiable~\citep{Pro19}, a solution which satisfies envy-freeness is also guaranteed to exist for the rent division problem \citep{Sven83} under the assumption of quasi-linearity. Furthermore, such a solution can be found in polynomial time \citep{Ara95}, and it is possible to optimize linear objectives over envy-free solutions in polynomial time as well \citep{GMPZ17}. Therefore, envy-freeness is both a compelling and computationally tractable fairness notion in the rent division setting.

The study of rent division is not driven merely by theoretical interest\,---\,it is a poster child for applications of fair division more broadly. In particular, the not-for-profit fair division website \emph{Spliddit}~\citep{GP14} operated between 2014 and 2022 and offered ``provably fair solutions to everyday problems.'' Its rent division application implemented the algorithm of \citet{GMPZ17}; among the five applications on Spliddit, it was the most popular, with more than 30,000 instances solved~\citep{PPZ22}.

A shortcoming of the prevalent approach to fair rent division, however, is that it assumes that the players have already chosen an apartment, and the only question remaining is how to divide the rooms and rent. By contrast, groups looking for an apartment are often not considering each potential apartment separately. Rather, many groups can further optimize by choosing among a set of available apartments which fit their budget and location constraints. This setting gives rise to new sources of complexity as the players are required not only  to assign rooms and prices in the chosen apartment, but also to collectively decide on which apartment to rent. Such a decision can be contentious. Imagine, for instance, three players whose apartment hunting priorities are a short commute, proximity to green spaces, and an exciting neighborhood, respectively. Suppose further that current apartment options include one apartment that is close to player $1$'s workplace, one which neighbors the largest park in the city, and one in the center of Restaurant Row. The rooms in each apartment may be asymmetric as well. Given all of these factors, how should the players decide which apartment to rent and who gets which room?

Crucially, it does not suffice to merely compute an envy-free solution in each individual apartment. Consider the following example:
\begin{example}\label{example:motivating}
    There are two players and two apartments. Each apartment has total rent $300$ and contains two symmetric rooms. The value of each player (rows) for each room (columns) is shown below:

    \parbox{.45\linewidth}{
        \centering
        \begin{tabular}{c | c c}
        & $r_{11}$ & $r_{12}$ \\
        \hline
        1 & 200 & 200 \\
        2 & 100 &   100  \\
        \end{tabular}
    }
    \parbox{.45\linewidth}{
        \centering
        \begin{tabular}{c | c c}
        & $r_{21}$ & $r_{22}$ \\
        \hline
        1 & 100 & 100 \\
        2 & 200 &   200  \\
        \end{tabular}
    }
\end{example}
In this example, the only solution which is individually envy-free in each apartment assigns equal rent in each apartment. However, the players will then disagree on which apartment to rent, as player $1$ will prefer apartment $1$ while player $2$ will prefer apartment $2$. It also seems intuitively unfair to assign equal rent in each apartment, as this would result in the two players having unequal utilities, even though their utility functions are symmetric. Therefore, a new fairness notion is needed for the multi-apartment rent division problem.

\subsection{Our Contributions}

First and foremost, we present a formal model of the rent division problem in the multi-apartment setting, and show that new fairness notions are necessary in this model. Our main contribution is a generalization of envy-freeness called \textit{negotiated envy-freeness}. We show that negotiated envy-freeness satisfies several desirable properties such as Pareto optimality and individual rationality, and reduces to envy-freeness in the single apartment setting. We provide an intuitive justification for negotiated envy-freeness based on negotiating rent between players who have different favorite apartments, and show how such negotiations allow players to reach a consensus apartment while maintaining fair rent burdens across players. We then show that a solution satisfying negotiated envy-freeness is guaranteed to exist and that it is possible to optimize linear objectives over all negotiated envy-free solutions in polynomial time, mirroring a similar result by \citet{GMPZ17} for the single-apartment setting. Finally, we introduce strong negotiated envy-freeness, a variant of negotiated envy-freeness which imposes additional fairness constraints on negotiations. 

We also explore \textit{universal envy-freeness}, which is the most direct generalization of envy-freeness. Unlike negotiated envy-freeness, however, we show that a universal envy-free solution is not guaranteed to exist (and, in fact, there are many instances where such a solution does not exist). Therefore, we instead study the probability of such a solution existing when players' utilities are drawn i.i.d. from an arbitrary distribution with a fixed number of players. For discrete distributions, we show that the probability that a universal envy-free solution exists approaches $1$ as the number of apartments approaches infinity. For continuous distributions, on the other hand, we show that the probability that a universal envy-free solution exists does not converge to either $0$ or $1$. However, if we add apartments one by one, the probability that there exists a stopping point where a universal envy-free solution exists converges to $1$, even for continuous distributions.

\subsection{Related Work}

As mentioned earlier, our work is most closely related to the paper of \citet{GMPZ17}. They study envy-free solutions for (single-apartment) rent division under quasi-linear utilities, and\,---\,building on earlier work by \citet{ADG91}\,---\,single out the \emph{maximin solution} (which maximizes the minimum utility subject to envy-freeness) as especially desirable. They also develop an algorithmic framework that allows for efficient computation of the maximin solution and a range of other objectives. 

By contrast, \citet{PPZ22} consider a different type of objective: they seek envy-free solutions that\,---\,rather than optimizing a welfare function\,---\,are robust to perturbations of the utilities. This objective is beyond the scope of our work.  

There are several approaches to rent division that relax the assumption of quasi-linear utilities or provide (incomparable) alternatives. One line of work allows players to express a (hard or soft) budget constraint~\citep{PVR18,AGGL+23,Vel22,Vel23}. \citet{ABR21} develop a fully-polynomial approximation scheme for envy-free rent division under the assumption that each player's utility (as a function of price) is continuous, monotone decreasing, and piecewise-linear. Finally, classic work by \citet{Su99} uses Sperner's Lemma to construct an algorithm for envy-free rent division under the ``miserly tenants'' assumption, which requires players to prefer a free room to any other room (even if its rent is \$1); while his approach is elegant, it has several shortcomings, including that preference elicitation requires repeated interaction with the players and that it is infeasible to optimize over envy-free solutions. \citet{Seg22}, however, shows that techniques developed for miserly tenants extend to the quasi-linear setting. All of these papers are orthogonal to ours, as we focus on the (widely used in practice) quasi-linear setting and instead extend the standard rent division problem to multiple apartments. 
	
It is worth noting that even in the basic setting of a single apartment and quasi-linear utilities, envy-freeness is incompatible with strategyproofness. For this reason, work on incentives in fair rent division is relatively limited. A notable exception is the work of \citet{Vel18}, who studies mechanisms whose equilibria give rise to envy-free solutions.

\section{Model}

A multi-apartment rent division instance is composed of a set of players $[n] = \{1,...,n\}$ and a set of apartments $[m] = \{1,...,m\}$, where each apartment $j$ consists of $n$ rooms $\{r_{j1},..., r_{jn}\}$. For each room $r_{jk}$ in apartment $j$, player $i$ has a non-negative value $V_i(r_{jk})$. Each apartment $j$ also has a total rent $R_j$. Unless otherwise noted, we will assume that $\sum_j \sum_k V_i(r_{jk}) = \sum_j R_j$ for all $i$. In other words, we do not assume that players have the same value for each apartment, but do assume that players have the same total value for all apartments under consideration. In our model, players are therefore able to express preferences over apartments, but their overall utility is still normalized as in the single apartment setting \citep{GMPZ17}. The entire instance can be represented by a valuation matrix $V \in \mathbb{M}_{n \times m \times n}(\mathbb{R}^+)$ and a rent vector $R \in \mathbb{R}^n$. 

An apartment assignment $A_j: [n] \to [n]$ is a mapping of players to rooms in apartment $j$, where $A_j(i)$ is the room assigned to player $i$ in apartment $j$. An assignment $A = \{A_1,...,A_m\}$ is the vector of mappings for all apartments. An assignment for apartment $j$ is \textit{welfare-maximizing} if it maximizes $\sum_{i = 1}^n V_i(A_j(i)) - R_j$ over all possible assignments in $j$. Rent is allocated via the price matrix $P \in \mathbb{R}^{m \times n}$, and we denote the price of a specific room $r$ as $P(r)$. We require that $\sum_{k \in [n]} P(r_{jk}) = R_j$ for any valid price matrix, and will refer to the price vector for apartment $j$ as $P_j$, where $P_j$ is the $j$th row of $P$. For a specific $A_j, P$, the quasi-linear utility of player $i$ is $U_i(A_j, P) = V_i(A_j(i)) - P(A_j(i))$. A \emph{solution} for the multi-apartment rent division instance is a tuple $(A, P, j^*)$, where $A$ and $P$ contain all assignments and prices and $j^*$ denotes the chosen apartment. We will often refer to a \emph{partial solution} $(A, P)$ where the apartment has not yet been chosen.

When there is only a single apartment ($m=1$), a solution $(A_1, P)$ is \textit{envy-free} (EF) if each player prefers her room over every other room. Formally, a solution is envy-free if for all $i, i' \in [n]$,
\begin{equation}\label{eq:one_apart}
    V_i(A_1(i)) - P(A_1(i)) \geq V_i(A_1(i')) - P(A_1(i')).
\end{equation}

In the single apartment setting, an envy-free solution always exists and can be found in polynomial time \citep{GMPZ17}. The goal of this paper is to generalize the notion of envy-freeness in Equation \eqref{eq:one_apart} to the multi-apartment setting when $m > 1$. In the multi-apartment setting, one starting point is to simply enforce the single apartment definition of envy-freeness for each apartment separately, i.e. for all $i,i' \in [n], j \in [m]$,
\begin{equation}\label{eq:ind_envy_free}
    V_i(A_j(i)) - P(A_j(i)) \geq V_i(A_j(i')) - P(A_j(i')).
\end{equation}
We will refer to partial solutions $(A,P)$ that satisfy Equation \eqref{eq:ind_envy_free} as \textit{individually envy-free}. However, as we showed in the introduction, it may be impossible for the players to agree on the final apartment choice $j^*$, and therefore this does not lead to an obviously fair solution. 

\section{Universal Envy-Freeness}\label{sec:uef}
We first consider \textit{universal envy-freeness}, a natural generalization of envy-freeness which captures the spirit of the original definition. Informally, a universal envy-free assignment guarantees that no player will want to switch her room in the chosen apartment with any room in any apartment. We formalize this definition below.
\begin{definition}
    A solution $(A, P, j^*)$ is \textbf{universal envy-free (UEF)} if for all $i,i' \in [n]$ and $j \in [m]$,
    \[
        V_i(A_{j^*}(i)) - P(A_{j^*}(i)) \geq V_i(A_j(i')) - P(A_j(i')).
    \]
\end{definition}

Unfortunately, a universal envy-free solution does not always exist, as can be seen in Example \ref{example:motivating}. In that example, the only way for both apartments to be individually envy-free is for rent to be assigned evenly in each apartment; i.e. that the price of each room is $150$. However, if rent is assigned in this way, then one of the players will be envious in the final apartment: if apartment $1$ were chosen, then player $2$ would rather have either room in apartment $2$, while if apartment $2$ were chosen, then player $1$ would rather have either room in apartment $1$. Therefore, no universal envy-free solution exists for Example \ref{example:motivating}. It turns out that this is not a knife's edge example, and that there are many instances which have no universal envy-free solution. For example, there are instances with no universal envy-free solution even when we restrict all players to have the same total value for each apartment ($\sum_{k = 1}^n V_i(r_{jk}) = R_j \: \forall i,j$). We note, however, that it is easy to check whether a universal envy-free solution exists via a simple linear program [Appendix \ref{app:check_uef}]. 

\subsection{Probabilistic Universal Envy-Freeness}

In computational fair division, when researchers were faced with the unfortunate non-existence of envy-free allocations of indivisible goods, they have asked whether such solutions are \emph{likely} to exist in random instances, at least in the large~\cite{DGKP+14,MS17,MS20}. In this section, we adopt the same approach in the context of universal envy-freeness. 

Specifically, we assume that the valuation matrix $V$ is drawn randomly, where each player's value for each room is drawn from a distribution $\mathcal{D}$ supported on $[0,1]$. In other words, for any player $i$ and room $r_{jk}$ in apartment $j$, the value of player $i$ for room $r_{jk}$ is $V_{i}(r_{jk}) \stackrel{i.i.d.}{\sim} \mathcal{D}$. Because each apartment is drawn symmetrically, we will also assume for this section only that the rent in every apartment is equal to $R$. Importantly, note that under this model we are no longer requiring player utilities to be normalized to add up to the total rent. However, as values are drawn i.i.d, every player will have the same total value for all of the rooms in expectation. 

Define $E_m$ as the event that there exists a universal envy-free solution with $m$ apartments when values are drawn i.i.d from distribution $\mathcal{D}$. First, we consider the simpler case when $\mathcal{D}$ is a discrete distribution with a finite number of values (see Appendix \ref{app:uef_proofs} for the proof).
\begin{thm}\label{lemma:discrete_UEF}
    If $V_{i}(r_{jk}) \stackrel{i.i.d.}{\sim} \mathcal{D}$ for all $i,j,k$ where $\mathcal{D}$ is a discrete distribution that takes on $\kappa < \infty$ distinct values, then for any constant $n$, $\Pr\left(E_m\right)  \xrightarrow[m \to \infty]{} 1$.
\end{thm}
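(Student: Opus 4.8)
The plan is to exploit the discreteness of $\mathcal{D}$ to locate, with probability tending to $1$, a single ``saturated'' apartment in which every player values every room at the largest possible value, and then to show that uniform pricing across all apartments turns this apartment into a universal envy-free choice. The key realization is that once all prices are set to $R/n$, every player's utility for any room collapses to value-minus-a-constant, so the only thing that matters is finding an apartment that simultaneously maxes out all players' values at once.

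First I would let $v_{\max}$ be the largest value in the support of $\mathcal{D}$ and set $p_{\max} = \Pr_{\mathcal{D}}(v_{\max})$, which is strictly positive precisely because $\mathcal{D}$ is supported on finitely many values; every realized value then satisfies $V_i(r_{jk}) \le v_{\max}$. Call apartment $j$ \emph{saturated} if $V_i(r_{jk}) = v_{\max}$ for all $i,k$. Since the $n^2$ values in each apartment are i.i.d., apartment $j$ is saturated independently with probability $q := p_{\max}^{\,n^2} > 0$, so the probability that at least one of the $m$ apartments is saturated is $1-(1-q)^m$.

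Next I would argue that the existence of a single saturated apartment already forces a universal envy-free solution. Take $j^*$ to be any saturated apartment, let $A$ be an arbitrary assignment, and set $P(r_{jk}) = R/n$ for every room in every apartment; this is a valid price matrix since the prices in each apartment sum to $R$. Under this pricing, player $i$'s utility in $j^*$ equals $v_{\max} - R/n$ (her assigned room has value $v_{\max}$), whereas her utility for the room $A_j(i')$ is $V_i(A_j(i')) - R/n \le v_{\max} - R/n$ for every $i'$ and $j$. Thus the universal envy-freeness inequality holds for all $i,i' \in [n]$ and $j \in [m]$, and in particular for $j = j^*$, so the assignment is envy-free within the chosen apartment. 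Combining the two steps gives $\Pr(E_m) \ge 1 - (1-q)^m$, which tends to $1$ as $m \to \infty$ for any fixed $n$.

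The argument is short, and I expect the only real subtlety to lie in the two structural observations that make it work: that a solution is free to set prices in \emph{every} apartment rather than only in $j^*$, so uniform pricing is admissible and reduces each outside option to $V_i(A_j(i')) - R/n$; and that isolating one good apartment suffices, so I never need to coordinate prices across apartments. The role of discreteness is exactly to guarantee $p_{\max} > 0$, giving each apartment a constant probability of being saturated, which is precisely what fails for continuous $\mathcal{D}$ and is consistent with the paper's later claim that the continuous case does not converge to $1$.
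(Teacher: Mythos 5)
Your proof is correct and takes essentially the same approach as the paper: find an apartment in which every player can be given a room of maximum value $v_{\max}$, set every price in every apartment to $R/n$, and conclude universal envy-freeness because no utility anywhere can exceed $v_{\max} - R/n$. The only difference is quantitative and harmless: the paper's event only asks for one assignment giving each player a max-valued room (probability at least $p^n$ per apartment), whereas your saturation event requires all $n^2$ values to equal $v_{\max}$ (probability $p^{n^2}$); both are constant in $m$, so the limit $1-(1-q)^m \to 1$ goes through identically.
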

The above result holds in the limit as $m$ goes to infinity, which is unrealistic in the real-world setting of searching for an apartment. However, a group of roommates may consider dozens of apartments in their search, in which case limit laws such as this result can become useful approximations. We would also ideally like to generalize this result to continuous distributions. Somewhat surprisingly, the same result does not hold in the case of continuous distributions supported on $[0,1]$. In fact for sufficiently large $m$, we can bound the probability that there exists a universal envy-free solution away from both $0$ and $1$ for a fixed value of $n$. Note that the lower bound is the same for any continuous distribution, while the upper bound is distribution-dependent. 

\begin{thm}\label{thm:probabilistic_uef}
  Suppose that $V_{i}(r_{jk}) \stackrel{i.i.d.}{\sim} \mathcal{D}$ for all $i,j,k$ where $\mathcal{D}$ is a continuous distribution supported on $[0,1]$. Then there exists a $p_0(n) > 0$ such that for any $m$ and $\mathcal{D}$,  $\Pr\left(E_m\right) \ge p_0(n)$. Furthermore, for any distribution $\mathcal{D}$, there exists a $p_1(n) < 1$ such that for any $m \ge n+1$, $\Pr\left(E_m\right) \le p_1(n)$.
\end{thm}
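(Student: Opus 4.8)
The plan is to first turn universal envy-freeness into a convex feasibility question about the profile of realized utilities, and then analyze that feasibility probabilistically. Fix a chosen apartment $j^*$ and suppose it is priced so that player $i$ receives utility $u_i$; I claim $(A,P,j^*)$ extends to a UEF solution precisely when, for every apartment $j$,
\[
\sum_{k=1}^n \max_{i\in[n]}\big(V_i(r_{jk})-u_i\big)\le R .
\]
Indeed, in any apartment other than $j^*$ the prices are free subject only to summing to $R$ (they may even be negative), and no player envies room $r_{jk}$ iff $P(r_{jk})\ge \max_i(V_i(r_{jk})-u_i)$; prices meeting these lower bounds and summing to $R$ exist iff the displayed inequality holds. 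Summing the envy-free inequalities over the $n$ rooms of a fixed apartment $j$ gives $u_i\ge \tfrac1n(\rho_{ij}-R)$, where $\rho_{ij}=\sum_k V_i(r_{jk})$, hence $u_i\ge \tfrac1n(S_i-R)$ with $S_i:=\max_j \rho_{ij}$. Summing over $i$ and using $\sum_i u_i\le \max_j T_j - R$ (where $T_j$ is the maximum matching value in apartment $j$) yields the scalar necessary condition
\[
\textstyle\sum_{i=1}^n S_i \;\le\; n\,\max_j T_j .
\]
This reduction, together with the observation that $j^*$ must be an apartment attaining $\max_j T_j$ (since envy-freeness forces a welfare-maximizing assignment in $j^*$), is the engine for both bounds.

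For the upper bound I would show that the necessary condition fails with probability bounded away from $0$, uniformly in $m\ge n+1$. Writing $\epsilon_i:=n-S_i\ge 0$ and $\eta:=n-\max_j T_j\ge 0$, the condition fails exactly when $\sum_i \epsilon_i< n\eta$, so it suffices to bound $\Pr(\sum_i\epsilon_i<n\eta)$ below. Intuitively, $\sum_i S_i$ lets each player select her personally best apartment independently, whereas $n\max_j T_j$ forces a single shared apartment to carry all players; when different players' favorite apartments are distinct and no apartment has an excellent matching, the former wins. Concretely, a base configuration with $n$ ``private-favorite'' apartments (each loved by one player, mediocre for the rest) already forces $\sum_i\epsilon_i<n\eta$, and because $\mathcal D$ is continuous the event $\{\sum_i\epsilon_i<n\eta\}$ is open and nonempty, hence has positive probability for each fixed $m\ge n+1$. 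The essential difficulty is \emph{uniformity in $m$}: as $m\to\infty$ both $\sum_i S_i$ and $n\max_j T_j$ converge to $n^2$, so one must compare the vanishing deficits $\epsilon_i$ (which depend only on player $i$'s own entries, hence are independent across $i$) and $\eta$ (a minimum-weight matching quantity coupling all entries). I would control these via extreme-value theory for the smallest order statistics of sums of $n$ i.i.d.\ ``gaps'' $1-V$, showing that $\sum_i\epsilon_i$ and $\eta$ have the same polynomial scale in $m$ and that $\liminf_m\Pr(\sum_i\epsilon_i<n\eta)>0$; combined with positivity at each finite $m$ this gives the uniform bound. The hypothesis $m\ge n+1$ enters here, since for $m\le n$ (and certainly $m=1$, where UEF always exists) there are too few apartments to separate the $n$ private optima from the shared matching optimum.

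For the lower bound I would exhibit a \emph{sufficient} event guaranteeing a UEF solution whose probability is bounded below uniformly in $m$ \emph{and} does not depend on $\mathcal D$. Using the reduction, it suffices to find $u$ that is envy-free in the matching-maximizing apartment $j^*$ and whose coordinates are large enough that $\sum_k\max_i(V_i(r_{jk})-u_i)\le R$ for every competing $j$. The crude bound $\sum_k\max_i(V_i(r_{jk})-u_i)\le n(1-\min_i u_i)$ is too lossy (it would demand $T_{j^*}\ge n$, which is impossible), so the key is to use that competing apartments are \emph{typically} far from all-ones: the available slack in each constraint is again governed by the deficits $\epsilon_i,\eta$ of the top order statistics. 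The plan is to identify a scale-invariant sufficient condition on these deficits and show it holds with limiting probability bounded below; because the condition compares quantities with identical extremal scaling in $m$, the common tail-dependent normalization cancels, so the resulting lower bound $p_0(n)$ is the same for every continuous $\mathcal D$. As with the upper bound, each finite $m$ contributes positive probability by continuity of $\mathcal D$, and the limit is handled by extreme-value asymptotics.

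The main obstacle for both directions is the same: everything concentrates at the boundary $n^2$ as $m$ grows, so the bounds live entirely in the regime of vanishing deficits, and one must (a) establish matching polynomial scalings for the independent row-deficits $\epsilon_i$ and the coupled matching-deficit $\eta$, and (b) push the resulting non-degenerate comparison through \emph{uniformly} in $m$ rather than merely for fixed large $m$. I expect step (a)---pinning down the extreme-value behavior of the minimum-weight matching deficit $\eta$ and relating it to the simpler row-sum deficits---to be the hardest part, and the distribution-freeness of the lower bound to follow precisely from the scale-invariance of the comparison once (a) is in hand.
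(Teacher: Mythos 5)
Your reduction is correct and is essentially the skeleton of the paper's own argument: pricing each non-chosen apartment $j$ at the per-room lower bounds $\max_i\bigl(V_i(r_{jk})-u_i\bigr)$ is exactly what the paper does, and your necessary condition $\sum_i S_i \le n\max_j T_j$ is the same pigeonhole inequality that drives the paper's non-existence event (there it is phrased via the ``maximum unbalanced welfare'' $MUW(j)=\sum_k\max_i V_i(r_{jk})-R$ in place of your row sums). The genuine gap is in the probabilistic core, which is where the entire difficulty of the theorem sits and which you explicitly defer. The paper never touches extreme-value theory: it conditions on the multiset of $n$ values drawn for each room and uses the fact that the assignment of those values to the $n$ players is uniform and independent across rooms. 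Its existence event is ``in the apartment maximizing $MUW$, the room-wise maxima are held by $n$ distinct players''; this is a pure rank event, so its probability is exactly $n!/n^n$ for every $m$ and every continuous $\mathcal{D}$, and on this event one prices the top apartment so that all utilities equal $MUW(j^*)/n$ and verifies your displayed inequality for every other apartment. That rank/exchangeability observation is the missing idea that makes $p_0(n)$ uniform in both $m$ and $\mathcal{D}$.

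Your proposed EVT route cannot supply that. First, the deficits $\epsilon_i$ and $\eta$ (with $n\mu_U$ rather than $n$ as the reference point when $\mathcal{D}$ does not reach $1$) are magnitude quantities, not rank quantities; their joint law, and hence any bound on $\Pr\bigl(\sum_i\epsilon_i<n\eta\bigr)$, genuinely depends on $\mathcal{D}$, and ``scale-invariance'' cannot rescue this because row sums and matching values do not commute with the probability integral transform. Second, an arbitrary continuous distribution on $[0,1]$ need not lie in any max-domain of attraction, so the ``matching polynomial scalings'' you want for the smallest order statistics need not exist at all; EVT is unavailable at the stated level of generality. Third, for the lower bound, patching together per-$m$ positivity (which is $\mathcal{D}$-dependent) with a $\liminf$ statement can at best give uniformity in $m$ for fixed $\mathcal{D}$, never the required uniformity over all $\mathcal{D}$. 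For the upper bound the theorem does allow $p_1(n)$ to depend on $\mathcal{D}$, and there the paper again avoids asymptotics: it combines the same rank trick with fixed quantiles of $\mathcal{D}$, requiring all room maxima in the first $n+1$ apartments (ordered by $MUW$) to exceed $\tfrac{3\mu_U+\mu_L}{4}$ while the non-maximal values in apartments $1,\dots,n$ fall below $\tfrac{\mu_U+3\mu_L}{4}$, which forces your necessary condition to fail with an $m$-free positive probability. If you replace your deficit comparison by rank and quantile events of this kind, your reduction goes through verbatim and yields both bounds.
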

\begin{proof}[Proof sketch]
    We provide a brief sketch of the proof of Theorem \ref{thm:probabilistic_uef} and defer the formal proof to Appendix \ref{app:uef_proofs}. The structure of the proof is to construct two events $\mathcal{F}$ and $\mathcal{E}$ (both with probabilities independent of $m$) such that under event $\mathcal{F}$ a universal envy-free solution always exists and under event $\mathcal{E}$ a universal envy-free solution never exists. The key idea behind both constructed events is to consider the highest welfare achievable by potentially non-bijective assignments that can assign multiple rooms within an apartment to the same player. Note that these are not necessarily valid regular assignments. We will denote the maximum total utility achievable by a potentially non-bijective assignment for a given apartment $j$ as $MUW(j)$. Suppose the apartments are numbered in order of $MUW$, and therefore apartment $1$ has the highest $MUW$. We define $\mathcal{F}$ as the event when the potentially non-bijective assignment with the highest total utility in apartment $1$ is actually bijective, and therefore is a valid regular assignment. Under event $\mathcal{F}$, a universal envy-free solution exists (Lemma \ref{lemma:muw}). We then show that $\Pr(\mathcal{F})$ is always positive and independent of $m$ (Lemma \ref{lemma:muw_prob}). To show non-existence of a universal envy-free solution, we construct the event $\mathcal{E}$ with the three following conditions on the first $n + 1$ apartments based on the $MUW$ ordering. The event $\mathcal{E}$ occurs when apartment $n+1$ is the unique maximum welfare apartment, the potentially non-bijective assignment which achieves $MUW(n+1)$ is a bijective assignment, and for every $j \in [n]$, the potentially non-bijective assignment which achieves $MUW(j)$ assigns every room to player $j$ (Lemma \ref{lemma:nouef}). Conditioned on event $\mathcal{E}$, we show that no universal envy-free solution exists. Once again, $\Pr(\mathcal{E})$ is always positive and independent of $m$.
\end{proof}

Theorem \ref{thm:probabilistic_uef}  implies that simply starting with a very large number of apartments is not sufficient for guaranteeing existence of a universal envy-free solution. However, the construction of event $\mathcal{F}$ in Theorem \ref{thm:probabilistic_uef} also implies the following. Suppose there are $n$ players trying to find a universal envy-free solution, starting with $m_0$ apartments. If new apartments are added one at a time, and we check for universal envy-freeness before each new apartment is added, then with probability $1$ this process will terminate with a universal envy-free solution in a finite number of apartments. The proof of this result leverages the fact that event $\mathcal{F}$ relies only on the ordering of utilities within the apartment with the highest value of $MUW$. This result is formally outlined in Corollary \ref{cor:probabilistic_uef} and proven in Appendix \ref{app:uef_proofs}.
\begin{cor}\label{cor:probabilistic_uef}
    Suppose there exists an infinite sequence of apartments with $V_i(r_{jk})$ drawn i.i.d from a continuous distribution $\mathcal{D}$. Then for any constant $m_0 > 0$,
    \[
        \Pr\left( \inf \left\{m \ge m_0 : E_m \right\} < \infty \right) = 1.
    \]
\end{cor}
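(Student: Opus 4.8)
The plan is to leverage the key structural feature of event $\mathcal{F}$ highlighted in the proof sketch: namely, that $\mathcal{F}$ is determined entirely by the internal ordering of utilities within a single apartment — the apartment achieving the highest $MUW$. Specifically, $\mathcal{F}$ asks whether the welfare-maximizing (potentially non-bijective) assignment in a given apartment happens to be bijective, which is a property of that apartment alone and does not depend on how many other apartments are present. By Lemma~\ref{lemma:muw}, whenever $\mathcal{F}$ holds a universal envy-free solution exists; and by Lemma~\ref{lemma:muw_prob}, $\Pr(\mathcal{F}) = q$ for some constant $q = q(n) > 0$ independent of the number of apartments.

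First I would reformulate the event $\{\inf\{m \ge m_0 : E_m\} < \infty\}$ in terms of per-apartment events. For each apartment $j$ in the infinite sequence, let $F_j$ denote the event that apartment $j$'s own welfare-maximizing non-bijective assignment is in fact bijective. Crucially, because values $V_i(r_{jk})$ are drawn i.i.d.\ \emph{across apartments}, the events $F_{m_0}, F_{m_0+1}, F_{m_0+2}, \dots$ are mutually independent, each occurring with probability $q > 0$. The central observation is that if \emph{any} single apartment $j^\dagger$ in the sequence satisfies $F_{j^\dagger}$, then by the argument behind Lemma~\ref{lemma:muw}, once we have added enough apartments so that $j^\dagger$ is present (and in particular once enough apartments are added that $j^\dagger$ attains the maximal $MUW$ among those present, which it eventually will, or more simply once the construction of Lemma~\ref{lemma:muw} can be applied), a universal envy-free solution exists. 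I would make this precise by noting that it suffices for there to exist any $m$ for which the configuration of apartments $1, \dots, m$ admits a universal envy-free solution, and that the existence of a bijective-welfare-maximizing apartment guarantees this.

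Once the event is reduced to ``at least one $F_j$ occurs among infinitely many independent trials each of probability $q > 0$,'' the conclusion follows immediately from the second Borel–Cantelli lemma (or simply the independence of the complements): since the $F_j$ are independent and $\sum_{j \ge m_0} \Pr(F_j) = \sum_{j \ge m_0} q = \infty$, we have $\Pr\big(\bigcap_{j \ge m_0} F_j^c\big) = \prod_{j \ge m_0}(1 - q) = 0$, so $\Pr\big(\bigcup_{j \ge m_0} F_j\big) = 1$. Hence with probability $1$ some apartment in the sequence triggers the existence of a universal envy-free solution, giving $\Pr(\inf\{m \ge m_0 : E_m\} < \infty) = 1$.

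The main obstacle, and the step requiring the most care, is the logical bridge in the second paragraph: verifying that the occurrence of $F_{j^\dagger}$ for a \emph{single} apartment in the infinite sequence genuinely forces $E_m$ to hold for some finite $m$. The subtlety is that $\mathcal{F}$ as originally defined in the sketch concerns the apartment of \emph{globally} highest $MUW$, so I must confirm that Lemma~\ref{lemma:muw}'s construction applies to \emph{any} apartment whose own welfare-maximizing assignment is bijective, regardless of whether it is the global $MUW$-maximizer among all apartments present. I expect this holds because a bijective welfare-maximizing assignment in apartment $j^\dagger$ lets us price that apartment to be individually envy-free and simultaneously set prices in all other apartments high enough that no player envies any room elsewhere — exactly the mechanism underlying Lemma~\ref{lemma:muw}. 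Confirming that this pricing argument is robust to the presence of arbitrarily many additional apartments (and does not secretly depend on $j^\dagger$ being the $MUW$-maximizer) is the crux; everything else is a routine Borel–Cantelli argument.
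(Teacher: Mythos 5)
You have a genuine gap, and it sits exactly at the step you flagged as the ``crux'': the claim that Lemma~\ref{lemma:muw}'s construction applies to \emph{any} apartment $j^\dagger$ whose MUW-achieving assignment is bijective is false, and the paper itself contains the counterexample. In any apartment $j$, the sum over rooms of $\max_i \left(V_i(r_{jk}) - P(r_{jk})\right)$ equals $MUW(j)$ no matter how the rent is split, so under any prices some player has utility at least $MUW(j)/n$ for some room of apartment $j$; if $MUW(j) > MUW(j^\dagger)$, that envy cannot be priced away while keeping the total utility in $j^\dagger$ at $MUW(j^\dagger)$. This is precisely the mechanism of event $\mathcal{E}$ (Definition~\ref{def:event_mathcal_E} and Lemma~\ref{lemma:nouef}): there, apartment $n+1$ \emph{does} have a bijective MUW-achieving assignment, yet no universal envy-free solution exists because apartments $1,\dots,n$ have strictly higher MUW. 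So ``some $F_j$ occurs'' does not imply ``$E_m$ holds for some $m$,'' and your Borel--Cantelli argument, while correct in itself, is applied to the wrong family of events. Your fallback remark --- that $j^\dagger$ ``eventually'' attains the maximal MUW once enough apartments are added --- is backwards: $MUW(j^\dagger)$ is fixed while the running maximum of MUW is non-decreasing, so an apartment that is not the current MUW-maximizer can never become one.

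The paper's proof repairs exactly this bridge by restricting to the subsequence of \emph{record} apartments $\ell_1 < \ell_2 < \cdots$ (those with $\argmax_{j \in [\ell_t]} MUW(j) = \ell_t$). For a record apartment, bijectivity of its MUW-achieving assignment does imply $E_{\ell_t}$ via Lemma~\ref{lemma:muw}, since it is the MUW-maximizer among the apartments present. Because bijectivity depends only on which player receives each room's values (not on the values themselves, hence not on the record structure), the events $\mathcal{F}_t$ are independent across $t$ with probability $n!/n^n$ each (Lemma~\ref{lemma:muw_prob}), and a geometric random variable is a.s.\ finite. The remaining work --- absent from your proposal --- is to show the record subsequence is a.s.\ infinite, which the paper does using continuity of $\mathcal{D}$: the MUW distribution is continuous, bounded above, and places positive mass arbitrarily close to its supremum, so new records occur infinitely often with probability~$1$. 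Any correct proof needs some device of this kind guaranteeing that the bijective apartment you exhibit dominates all others in MUW.
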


Conceptually, this corollary sends an encouraging message to apartment hunters: perseverance will likely lead to a fair outcome!

\section{Consensus and Negotiated Envy-Freeness}\label{sec:REF}

As universal envy-freeness is often too strong of a requirement, we would like to find a weaker condition that is always feasible, but still acts as a natural extension of envy-freeness in the single apartment setting. In this section, we introduce a fairness condition for the multi-apartment rent division problem that satisfies multiple desirable properties and is always guaranteed to exist. 

\subsection{Definition and Motivation}

First, we observe that a desirable condition in the multiple apartment setting is for all players to agree on the chosen apartment. In order for this to happen, every player must be at least as happy with their assignment in the chosen apartment as with their assignment in any other apartment. If this is the case, then we say that the chosen apartment is a \textit{consensus} apartment.

\begin{definition}
    A solution $(A, P, j^*)$ satisfies \textbf{consensus} if every player weakly prefers their assignment in $j^*$ to their assignment in any other apartment $j \neq j^*$. Formally, for every player $i$,
    \begin{equation}\label{eq:consensus}
        V_i(A_{j^*}(i)) - P(A_{j^*}(i)) \geq \max_{j \in [m]} V_i(A_{j}(i)) - P(A_{j}(i)). 
    \end{equation}
   A partial solution $(A,P)$ satisfies consensus if there exists an apartment $j^*$ satisfying Equation \eqref{eq:consensus}. We will refer to such a $j^*$ as a \textbf{consensus apartment} for $(A,P)$. 
\end{definition}

Given a partial solution that satisfies consensus, the set of consensus apartments is exactly the set of apartments which have the highest sum of player utilities. If there is more than one consensus apartment, every player has the same utility for their assigned room in each consensus apartment; the next lemma, whose proof is in Appendix \ref{app:highest_max_welfare}, formalizes this.

\begin{lemma}\label{lemma:highest_max_welfare}
    Let $(A, P)$ be a partial solution satisfying consensus. Apartment $j$ is a consensus apartment for $(A, P)$ if and only if 
    \[
        \sum_{i = 1}^n V_i(A_j(i)) - R_j \geq \sum_{i = 1}^n V_i(A_{j'}(i)) - R_{j'}
    \]
    for all $j' \neq j$. Furthermore, if $j_1$ and $j_2$ are both consensus apartments for $(A,P)$, then for all $i$, $U_i(A_{j_1}, P) = U_i(A_{j_2}, P)$.
\end{lemma}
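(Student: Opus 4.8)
The plan is to reduce the whole statement to one bookkeeping identity: because each assignment $A_j$ is a bijection between players and rooms and the room prices sum to the rent, the prices players pay in apartment $j$ total $R_j$, so the aggregate utility satisfies $\sum_{i=1}^n U_i(A_j, P) = \sum_{i=1}^n V_i(A_j(i)) - R_j$, which is exactly the quantity compared in the lemma. I will call this quantity the \emph{welfare} of apartment $j$. Everything then follows by passing between per-player utilities and their sums.

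For the forward direction of the equivalence, suppose $j$ is a consensus apartment. By definition, $U_i(A_j, P) \ge U_i(A_{j'}, P)$ for every player $i$ and every $j'$; summing over $i$ shows the welfare of $j$ is at least the welfare of $j'$, which is precisely the claimed inequality. For the reverse direction I would invoke the hypothesis that $(A,P)$ satisfies consensus, which furnishes some consensus apartment $j^*$. Summing its defining inequalities as above shows $j^*$ has maximum welfare. If $j$ also has welfare at least that of every other apartment, then in particular $j$ and $j^*$ have equal welfare. The key step is now this: since $j^*$ is a consensus apartment we have $U_i(A_{j^*}, P) \ge U_i(A_j, P)$ for every $i$, yet the two sides sum to the same welfare; a sum of nonnegative gaps equal to zero forces every gap to vanish, so $U_i(A_j, P) = U_i(A_{j^*}, P)$ for all $i$. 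Combining with $U_i(A_{j^*}, P) \ge U_i(A_{j'}, P)$ for all $j'$ (again because $j^*$ is consensus) gives $U_i(A_j, P) \ge U_i(A_{j'}, P)$ for all $i$ and $j'$, i.e. $j$ is a consensus apartment.

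The ``furthermore'' claim is immediate from the definition and does not even require the welfare identity: if $j_1$ and $j_2$ are both consensus apartments, then $U_i(A_{j_1}, P) \ge U_i(A_{j_2}, P)$ and $U_i(A_{j_2}, P) \ge U_i(A_{j_1}, P)$ for every $i$, so the two utilities coincide.

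I do not expect a genuine obstacle here, as the argument is short. The one point requiring care is the reverse direction: the equivalence is \emph{not} simply ``consensus apartment $\iff$ maximum welfare'' in general, since a maximum-welfare apartment need not be a consensus apartment when no consensus apartment exists at all. One must therefore remember to extract a consensus apartment $j^*$ from the consensus hypothesis first, and only then use the equal-welfare-plus-termwise-domination argument to upgrade maximal welfare back into the per-player consensus condition.
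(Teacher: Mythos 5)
Your proposal is correct and follows essentially the same route as the paper: both directions reduce to comparing the aggregate utility $\sum_{i=1}^n U_i(A_j,P) = \sum_{i=1}^n V_i(A_j(i)) - R_j$ across apartments, and the reverse direction extracts a consensus apartment $j^*$ from the hypothesis and uses equal sums plus termwise domination to force per-player equality (the paper phrases this via contradiction and the pigeonhole principle, but that is the same argument as your ``nonnegative gaps summing to zero must all vanish''). Your explicit handling of the ``furthermore'' clause and your closing remark about why the consensus hypothesis is genuinely needed are both consistent with, and slightly cleaner than, the paper's presentation.
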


Intuitively, consensus is a desirable property because a lack of consensus would imply that the players cannot decide on which apartment to rent. However, consensus alone is not sufficient. In the motivating example, for instance, a possible solution that satisfies consensus is to have player $1$ pay $300$ for room $r_{11}$ and $200$ for room $r_{21}$. However, this is clearly not a fair assignment, as player $1$ has $-100$ utility in both apartments while player $2$ has $100$ utility in both apartments, despite their valuations being perfectly symmetrical. This example can be further extended to make player $1$ arbitrarily unhappy in the consensus apartment. Therefore, we need a requirement stronger than just consensus to guarantee reasonable fairness for all players.

To prevent one player being significantly more unhappy in every apartment as in the previous example, we can instead try to reach consensus from a ``fair" starting partial solution. One such natural starting point is a partial solution $(A,P)$ which is individually envy-free. If this starting partial solution satisfies consensus, then the solution with a consensus apartment already satisfies universal envy-freeness. However, this starting partial solution may not satisfy consensus, as in Example \ref{example:motivating}. If no $j^*$ exists such that $(A,P,j^*)$ satisfies consensus, then we would like to adjust the rents $P$ in a fair way such that the resulting solution $(A,P',j^*)$ satisfies consensus. One way to do this is by negotiating a ``fair" compromise by conducting fair negotiations that change the prices $P$. An example of when negotiating prices may be useful is when apartment $j$ is player 1's favorite apartment and apartment $j'$ is player 2's favorite apartment. Then a negotiation between these two players and apartments could help balance the prices and either convince player $1$ to rent apartment $j'$ or convince player $2$ to rent apartment $j$, therefore bringing the partial solution closer to consensus.

We formalize this notion of negotiating as follows. Suppose we have a partial solution $(A,P)$ and a negotiation tuple $\tau = (\delta, i_1, i_2, j_1, j_2)$, where $\delta > 0$, $i_1, i_2 \in [n]$, $j_1, j_2 \in [m]$. Then a negotiation will consist of player $i_1$ increasing their rent in apartment $j_1$ by $\delta$ and decreasing their rent in apartment $j_2$ by $\delta$, while player $i_2$ decreases their rent in apartment $j_1$ by $\delta$ and increases their rent in apartment $j_2$ by $\delta$. Importantly, this negotiation only affects the players' rents and does not change the assignment $A$.  Formally, the partial solution  after making negotiation $\tau$ is $(A,P')$, where $P'(A_{j_1}(i_1)) = P(A_{j_1}(i_1)) + \delta$, $P'(A_{j_2}(i_1)) = P(A_{j_2}(i_1)) - \delta$, $P'(A_{j_1}(i_2)) = P(A_{j_1}(i_2)) - \delta$, $P'(A_{j_2}(i_2)) = P(A_{j_2}(i_2)) + \delta$, and for all other $(i,j)$ pairs, $P'(A_j(i)) = P(A_j(i))$. Note that each negotiation involves a player increasing rent in one apartment by $\delta$ and decreasing rent in another apartment by $\delta$, which enforces that the negotiations are fair. We define the partial solution  $(A,P)$ as \textit{reachable by negotiation }from the partial solution  $(A,Q)$ if there exists a series of $T$ negotiations  $\{(\delta^t, i_1^t, i_2^t, j_1^t, j_2^t)\}_{t=1}^T$ such that after making all $T$ negotiations starting from $(A,Q)$, the resulting partial solution is $(A,P)$. Returning to our concept of fairness, we want to consider partial solutions $(A,P)$ that are reachable by this form of fair negotiations from some individually envy-free starting partial solution $(A,Q)$. By construction of negotiations, the total utility of any player across all of their assigned rooms is the same in the final partial solution $(A,P)$ as in the individually envy-free starting partial solution $(A,Q)$. 

\begin{lemma}\label{obs:ref}
    A partial solution $(A,P)$ is reachable by negotiation from an individual envy-free starting partial solution $(A,Q)$ if and only if there exists an individually envy-free solution $(A,Q)$ such that for every player $i$, $ \sum_{j = 1}^m P(A_j(i)) = \sum_{j = 1}^m Q(A_j(i))$.
\end{lemma}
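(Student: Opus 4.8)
The plan is to recast the statement as a linear-algebraic fact about arrays with vanishing row and column sums. Fix the assignment $A$ and encode any valid price matrix $P$ by the $n \times m$ array whose $(i,j)$ entry is $P(A_j(i))$, the rent player $i$ pays in apartment $j$. In this encoding the constraint $\sum_k P(r_{jk}) = R_j$ says that column $j$ sums to $R_j$, and a single negotiation $\tau = (\delta, i_1, i_2, j_1, j_2)$ adds to the array the ``rectangle'' pattern $\delta(e_{i_1 j_1} - e_{i_1 j_2} - e_{i_2 j_1} + e_{i_2 j_2})$, where $e_{ij}$ denotes the array with a single $1$ in position $(i,j)$. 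This reformulation makes both the invariants and the construction transparent.

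For the forward (``only if'') direction I would track the two invariants preserved by every negotiation. Each rectangle pattern has all column sums equal to zero, so every intermediate array remains a valid price matrix; and each rectangle pattern has all row sums equal to zero (player $i_1$'s total changes by $+\delta - \delta = 0$, and likewise for $i_2$), so each player's total rent across apartments is unchanged. Hence if $(A,P)$ is reachable from an individually envy-free $(A,Q)$, then $\sum_j P(A_j(i)) = \sum_j Q(A_j(i))$ for every $i$, which is the right-hand side with the very same $Q$.

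The substantive direction is the converse. Given individually envy-free $(A,Q)$ with matching player totals, set $d(i,j) = P(A_j(i)) - Q(A_j(i))$. Since $P$ and $Q$ are both valid price matrices, every column of $d$ sums to $R_j - R_j = 0$, and since the player totals agree, every row of $d$ sums to $0$. The key step is to decompose $d$ into rectangle patterns greedily: for each $(i,j)$ in the top-left $(n-1)\times(m-1)$ block, apply one rectangle move on rows $\{i,n\}$ and columns $\{j,m\}$ of magnitude $|d(i,j)|$, oriented so as to zero out entry $(i,j)$. Crucially, such a move alters, besides $(i,j)$, only the entries $(i,m)$, $(n,j)$, and $(n,m)$, all of which lie in the last row or last column and hence outside the block; therefore the block entries are cleared independently and none is disturbed once set to zero. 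After the block is cleared, the vanishing row and column sums force the last row, the last column, and the corner entry to vanish as well, so $d$ becomes identically zero and $Q$ has been transformed into $P$ in at most $(n-1)(m-1)$ steps.

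The main obstacle is the requirement $\delta > 0$, which forbids using signed multiples of a single fixed rectangle. The resolution I would make explicit is that swapping $i_1 \leftrightarrow i_2$ negates the rectangle pattern, so for each nonzero $d(i,j)$ the required sign is realized with the legitimate magnitude $\delta = |d(i,j)| > 0$, while entries with $d(i,j) = 0$ are simply skipped. I would also remark that the model imposes no non-negativity constraint on prices (only the per-apartment sum condition), so every intermediate array produced along the sequence is automatically a valid partial solution and the construction never leaves the feasible region.
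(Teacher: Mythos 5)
Your proof is correct, and its skeleton matches the paper's: the forward direction follows from the invariance of per-player totals (and per-apartment totals) under each negotiation, and the converse is a constructive clearing argument in which apartment $m$ serves as the counterparty apartment and the final apartment is forced to match by conservation. The difference is in the granularity of the decomposition: the paper (Lemma~\ref{lemma:if_direction}) proceeds column-by-column, matching players whose prices must rise against players whose prices must fall within each apartment $j$ and offsetting everything in apartment $m$, whereas you clear the difference array entry-by-entry, designating player $n$ as a fixed universal counterparty so that each block entry is eliminated by exactly one rectangle move. Your variant buys some explicitness: it gives a concrete bound of $(n-1)(m-1)$ negotiations, replaces the paper's implicit appeal to the existence of a within-column pairing of positive and negative deviations with a single prescribed move per entry, and makes explicit both the sign issue (realizing either orientation with $\delta = |d(i,j)| > 0$ by swapping the two players) and the fact that intermediate price matrices need not be non-negative\,---\,points the paper leaves tacit. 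Both arguments close identically: once the $(n-1)\times(m-1)$ block is cleared, the vanishing row and column sums of the residual force the last row, last column, and corner entry to vanish.
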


\begin{proof}[Proof sketch]
    The ``only if" direction follows from the fact that each negotiation does not change any player's total rent for all $m$ of their assigned rooms. Therefore, if there exists a sequence of negotiations that reach $(A,P)$ starting from $(A,Q)$, then every player has the same total rent for their $m$ assigned rooms in $(A,P)$ and $(A,Q)$.

    The ``if" direction requires a more technical construction, so we provide a brief proof sketch here and leave the formal proof to Lemma \ref{lemma:if_direction} in Appendix \ref{app:trading_equivalence}.  Suppose we start with an assignment $A$ and price matrices $Q$ and $P$ such that $ \sum_{j = 1}^m P(A_j(i)) = \sum_{j = 1}^m Q(A_j(i))$. We want to construct a series of negotiations to transform $Q$ into $P$. We do this for each apartment $1$ through $m$, one at a time. First, we construct a series of negotiations involving only apartments $1$ and $m$ such that the resulting price matrix $Q_1$ has the same prices as $P$ for apartment $1$. This is possible because the two price matrices have the same total rent within each apartment.
    We repeat this process for apartments $j \in \{2,...,m-1\}$ by constructing negotiations between apartment $j$ and apartment $m$ such that the resulting price matrix $Q_j$ has the same prices as $P$ for apartments $1$ through $j$. We then show that the final set of negotiations between apartment $m-1$ and $m$  results in a post-negotiation price matrix equal to the desired price matrix $P$. Therefore, $(A,P)$ is reachable by negotiation from $(A,Q)$.
\end{proof}
We now formally present the notion of fairness that comes from starting at an individually envy-free solution and conducting a sequence of negotiations until reaching consensus.  

\begin{definition}\label{def:REF}
    A solution $(A, P, j^*)$ satisfies \textbf{negotiated envy-freeness} if $(A, P, j^*)$ satisfies consensus and there exists a price matrix $Q$ such that $(A, Q)$ is individually envy-free and for every player $i$, 
    \[
        \sum_{j = 1}^m P(A_j(i)) = \sum_{j = 1}^m Q(A_j(i)).
    \]
\end{definition}

By Lemma \ref{obs:ref}, a solution that satisfies negotiated envy-freeness also satisfies that $(A,P)$ is reachable by negotiations from an individually envy-free partial solution $(A,Q)$.

Note that the final chosen apartment and prices in a solution satisfying negotiated envy-freeness do not have a meaningful fairness interpretation in isolation; that is, without the context of the original problem. Fundamentally, negotiated envy-freeness should be viewed as constructing a fair compromise when no single solution exists that every player prefers. The fairness of a compromise cannot (and should not) be evaluated without considering the full context, because all players must give something up in order to reach a compromise. For example, consider two friends A and B committed to renting an apartment together. To convince B to rent an apartment closer to A's workplace, A might offer the larger bedroom to B while paying equal rent, even though A prefers the larger bedroom. To an outside observer, it seems unfair that $A$ pays equal rent but gets the smaller bedroom. However, in the context of the original decision, this was a natural compromise. Similarly, the fairness of the negotiated envy-free solution cannot be evaluated by the chosen apartment only, but also needs to account for the original set of apartments.

For the rest of this section, we will study solutions $(A,P,j^*)$ that satisfy negotiated envy-freeness, and argue that this is a good generalization of single apartment envy-freeness. 

\subsection{Properties of Negotiated Envy-Freeness}
We will first show that a solution which satisfies negotiated envy-freeness also satisfies several desirable properties. In particular, such a solution satisfies Pareto optimality and individual rationality, and reduces to an envy-free solution in the single-apartment setting. Proofs of the below can be found in Appendix \ref{app:proof_of_REF_lemmas}. 

\begin{property}[Pareto optimality]\label{prop:pareto}
    A solution $(A, P, j^*)$ which satisfies negotiated envy-freeness also satisfies Pareto optimality, in that there exists no other solution $(A', P', j')$ such that $U_i(A'_{j'}(i), P') \geq U_i(A_{j^*}(i), P) \quad \forall i$ and the inequality is strict for at least one player. 
\end{property}

\begin{property}[Individual rationality]
\label{prop:ir}
    A solution $(A, P, j^*)$ which satisfies negotiated envy-freeness also satisfies individual rationality, in that all players have non-negative utility for their assigned rooms in the chosen apartment $j^*$. 
\end{property}

\begin{property}[Reduces to single-apartment setting]\label{prop:single}
    In the single-apartment setting (when $m=1$), a solution $(A, P, j^*)$ satisfies negotiated envy-freeness if and only if $(A, P, j^*)$ is envy-free within the single apartment.
\end{property}

\subsection{Existence of Negotiated Envy-Freeness}
We have shown that solutions which satisfy negotiated envy-freeness have both an intuitive explanation based on fair negotiating and desirable properties including Pareto optimality and individual rationality. Crucially, and in contrast to universal envy-freeness, it is also always possible to find a solution which satisfies negotiated envy-freeness.

\begin{thm}\label{thm:ref_existence}
    There exists a solution which satisfies  negotiated envy-freeness for every multi-apartment rent division instance.
\end{thm}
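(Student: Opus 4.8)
The plan is to prove existence by an explicit, one-shot construction rather than by any iterative negotiation: I would build a single individually envy-free partial solution, then write down a closed-form price matrix that has the same per-player total rent (so it is legitimate as a negotiated solution by Definition \ref{def:REF}) and whose induced utilities make the highest-welfare apartment simultaneously optimal for every player.

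First I would fix the assignment and a starting point. For each apartment $j$ in isolation, invoke the single-apartment existence guarantee \citep{GMPZ17} to obtain an envy-free assignment-and-price pair $(A_j, Q_j)$; collecting these yields a partial solution $(A,Q)$ that is individually envy-free by construction. This freezes the assignment $A$, each player's total rent $\rho_i = \sum_j Q(A_j(i))$, and hence each player's total utility $U_i = \sum_j V_i(A_j(i)) - \rho_i$. Next, writing $W_j = \sum_i V_i(A_j(i)) - R_j$ for the welfare of apartment $j$, $\Delta = \sum_j W_j$, and $j^* \in \argmax_j W_j$, I would define target utilities and prices by
\[
    \hat u_{ij} = \frac{U_i}{m} + \frac{W_j}{n} - \frac{\Delta}{mn}, \qquad P(A_j(i)) = V_i(A_j(i)) - \hat u_{ij}.
\]

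The verification then splits into three routine checks. (a) $P$ is a valid price matrix: summing over players in apartment $j$ gives $\sum_i P(A_j(i)) = \sum_i V_i(A_j(i)) - W_j = R_j$. (b) $(A,P)$ matches the per-player totals of the starting point: summing over apartments gives $\sum_j P(A_j(i)) = \sum_j V_i(A_j(i)) - U_i = \rho_i$, so the matching-totals condition of Definition \ref{def:REF} holds with witness $Q$ (equivalently, by Lemma \ref{obs:ref}, $(A,P)$ is reachable by negotiation from $(A,Q)$). (c) Consensus holds at $j^*$: the utility gap $\hat u_{ij^*} - \hat u_{ij} = (W_{j^*} - W_j)/n \ge 0$ for every player $i$ and apartment $j$, since $j^*$ has maximum welfare, so every player weakly prefers $j^*$; this is also consistent with Lemma \ref{lemma:highest_max_welfare}. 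Together these give a negotiated envy-free solution $(A, P, j^*)$.

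The main obstacle is conceptual rather than computational: recognizing how much freedom the negotiations actually leave. Because prices (and therefore utilities) are unconstrained in sign, the set of utility matrices reachable from $(A,Q)$ is exactly the transportation-type affine space with column sums $W_j$ and row sums $U_i$, with no nonnegativity constraints to obstruct us. The key insight is that the product-form matrix above has the property that the gap between the $j^*$ column and any other column, $(W_{j^*}-W_j)/n$, is identical across all players and depends only on welfares; hence making $j^*$ a consensus apartment costs nothing against the row-sum (total-rent) constraints, so a single explicit formula suffices and no fixed-point or convergence argument is needed. The one point requiring care is confirming that the starting partial solution is genuinely individually envy-free, which is handled apartment-by-apartment in the first step.
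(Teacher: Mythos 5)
Your proposal is correct and is essentially the paper's own argument: the paper also starts from an individually envy-free $(A,Q)$, first equalizes utilities within each apartment (giving each player utility $W_j/n$ in apartment $j$) and then shifts each player's prices uniformly by $X_i/m$ to restore the per-player totals, which yields exactly your utility matrix $\hat u_{ij} = U_i/m + W_j/n - \Delta/(mn)$. Your only difference is presentational—you write the final price matrix in closed form in one shot rather than as a two-step construction—so the verification steps (validity, matching totals, consensus at the max-welfare apartment) coincide with the paper's.
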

\begin{proof}[Proof sketch]
We provide a brief sketch of the proof of Theorem \ref{thm:ref_existence} and defer the formal proof to Appendix \ref{app:proof_of_ref_existence}. To prove this result, we will construct  a solution $(A,P^*,j^*)$ that satisfies negotiated envy-freeness for any instance of the problem. To do this, we first start with a  partial solution $(A,Q)$ that is individually envy-free within each apartment. Such a solution is guaranteed to exist. We then consider the solution $(A,P)$, where $P$ is the price matrix that,  for each apartment, induces equal utilities for all players in that apartment.  $(A, P)$ satisfies consensus, as the apartment which gives the highest utility for any player will be a consensus apartment. While $(A,P)$ may not satisfy negotiated envy-freeness, we can redistribute the prices evenly in $(A,P)$ through negotiating to get a new solution $(A,P^*)$ that satisfies the following two properties. First, the utility of a given player for each apartment in $(A, P^*)$ will be the same as in $(A, P)$, except that the utility may be scaled up or down by the same additive factor for all apartments. Second, we have $ \sum_{j = 1}^m P^*(A_j(i)) = \sum_{j = 1}^m Q(A_j(i))$. By this construction, we can conclude by showing that $(A,P^*)$ is a valid price assignment and satisfies negotiated envy-freeness and consensus.
\end{proof}

\subsection{Polynomial-Time Optimization}

We have shown that there always exists a solution which satisfies  negotiated envy-freeness, and the proof provides a constructive way to find such a solution in polynomial time. However, there may be many solutions which satisfy negotiated envy-freeness. In the single apartment setting, there exists a polynomial-time algorithm that optimizes a linear objective function over all envy-free solutions in polynomial time \citep{GMPZ17}. This raises the question of whether it is also possible to find a solution that optimizes a linear objective function in polynomial time over all negotiated envy-free solutions. As in the single-apartment setting, objective functions of special interest include maximin, which maximizes the utility of the least happy player, and equitability, which minimizes the disparity between players' utilities. Our main theorem of this section, Theorem \ref{thm:max_obj}, generalizes Theorem 3.1 of \citet{GMPZ17} to the multi-apartment setting with negotiated envy-freeness. 

\begin{thm}\label{thm:max_obj}
    Let $f_1,...,f_t: \mathbb{R}^{n \times m} \to \mathbb{R}$ be linear functions, where $t$ is polynomial in $n$ and $m$. Given a multi-apartment rent division instance, a solution $(A, P, j^*)$ that maximizes the minimum of $f_q(U_1(A_{j^*},P),...,U_n(A_{j^*},P))$ over all $q \in [t]$ subject to negotiated envy-freeness can be computed in time polynomial in both $n$ and $m$.
\end{thm}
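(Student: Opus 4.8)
The plan is to fix an assignment $A$ and, for each candidate consensus apartment $j^*$, express the set of negotiated envy-free solutions with chosen apartment $j^*$ and assignment $A$ as a polytope in the price variables, then optimize the objective over each polytope and take the best. The key structural fact I would lean on is Lemma~\ref{obs:ref}, which characterizes negotiated envy-freeness in terms of the existence of an individually envy-free $Q$ sharing each player's total rent with $P$. This means I can encode ``$(A,P,j^*)$ is negotiated envy-free'' using two coupled sets of price variables $P$ and $Q$, both subject to linear constraints: the rent-summation constraints $\sum_k P(r_{jk}) = R_j$ and $\sum_k Q(r_{jk}) = R_j$ for each $j$; the individual envy-freeness inequalities \eqref{eq:ind_envy_free} on $Q$; the consensus inequalities \eqref{eq:consensus} on $P$ for the fixed $j^*$; and the per-player coupling $\sum_j P(A_j(i)) = \sum_j Q(A_j(i))$. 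All of these are linear in $(P,Q)$, so for a fixed $(A,j^*)$ the feasible region is a polytope and the max-min of linear objectives $f_q$ is a linear program solvable in polynomial time.

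**The central difficulty** is that there are exponentially many assignments $A$, so I cannot naively iterate over all $(A,j^*)$ pairs. This is exactly the obstacle \citet{GMPZ17} confront in the single-apartment case, and the plan is to adapt their resolution: I would first argue that it suffices to restrict attention to assignments $A$ in which, within the chosen apartment $j^*$, the assignment is welfare-maximizing. By Lemma~\ref{lemma:highest_max_welfare}, a consensus apartment is precisely one maximizing $\sum_i V_i(A_j(i)) - R_j$, and consensus (hence negotiated envy-freeness) forces the relevant apartment's assignment to maximize social welfare there; any envy-free sub-allocation must use a welfare-maximizing matching. I would then fix the within-$j^*$ assignment to a welfare-maximizing one (computable by a single bipartite matching) and argue that the assignments in the \emph{other} apartments can be folded into the price optimization, since the role of apartment $j\neq j^*$ in the constraints is only through the consensus inequalities and the individual-envy-freeness region for $Q$.

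**To handle the remaining assignments** cleanly, I would either (i) show that for each non-chosen apartment $j$ we may likewise restrict to a welfare-maximizing assignment without loss of generality, so that the only combinatorial choice is a polynomial number of matchings found up front, or (ii) follow the GMPZ technique of introducing the price variables jointly with a linear programming formulation whose optimal vertices correspond to consistent assignment choices, using the fact that envy-freeness pins down the assigned matching up to welfare-maximizing ties. The cleanest route is to prove a lemma that in any negotiated envy-free solution we may replace $A_j$ (for every $j$) by a welfare-maximizing assignment in apartment $j$ while preserving negotiated envy-freeness and all players' utilities in $j^*$; this reduces the search to computing one welfare-maximizing matching per apartment (total $m$ matchings, polynomial work), fixing $A$ entirely, and then solving, for each of the $m$ choices of $j^*$, the linear program described above.

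**Assembling the pieces**, the algorithm is: compute a welfare-maximizing assignment in each apartment to obtain $A$; then for each $j^* \in [m]$ solve the max-min LP over $(P,Q)$ with the four families of linear constraints above and objective $\max \min_{q\in[t]} f_q(U_1(A_{j^*},P),\dots,U_n(A_{j^*},P))$, handling the max-min by the standard epigraph trick of maximizing a scalar $z$ subject to $z \le f_q(\cdot)$ for all $q$; finally return the best solution across the $m$ linear programs. Correctness follows from the reduction lemma plus Lemma~\ref{obs:ref} and Lemma~\ref{lemma:highest_max_welfare}, and the running time is polynomial since we solve $m$ linear programs each of size polynomial in $n$, $m$, and $t$. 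I expect the main technical work to be proving the welfare-maximizing reduction lemma rigorously, in particular verifying that substituting a welfare-maximizing matching in a non-chosen apartment does not destroy the existence of a coupled individually envy-free $Q$; this is where the careful bookkeeping of total rents across apartments, as in the proof of Lemma~\ref{obs:ref}, will be needed.
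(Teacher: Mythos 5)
Your overall architecture is the same as the paper's: fix one welfare-maximizing matching per apartment (computable by $m$ bipartite matchings), write negotiated envy-freeness via Lemma~\ref{obs:ref} as linear constraints on a coupled pair of price matrices $(P,Q)$, add the consensus inequalities and the epigraph variable for the max-min, and solve a linear program. Your minor deviation of running one LP per candidate consensus apartment $j^*$ (rather than using Lemma~\ref{lemma:highest_max_welfare} to pin down a single $j^*$ in advance, as the paper does) is harmless --- the paper explicitly notes that iterating over all $j'\in[m]$ also works and is still polynomial.

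The genuine gap is that the entire burden of correctness rests on your ``reduction lemma'' --- that restricting to one particular welfare-maximizing assignment $A$ loses no negotiated envy-free solution --- and you leave it unproven, merely flagging it as the main technical work. This is precisely the paper's Lemma~\ref{lemma:multi_second_welfare}, and it is not a routine verification: unlike the single-apartment second welfare theorem, you cannot keep the same price matrix when switching assignments, so the lemma requires constructing a new one. The paper's construction is $P'(A'_j(i)) = V_i(A'_j(i)) - V_i(A_j(i)) + P(A_j(i))$, where $A'$ is any other assignment that is welfare-maximizing in every apartment. Validity of $P'$ (prices summing to $R_j$) follows because $A$ and $A'$ achieve the same welfare in each apartment; utilities $U_i(A_j,P)=U_i(A'_j,P')$ are preserved by construction, so consensus at $j^*$ carries over; and --- the step you correctly identified as delicate --- the coupled individually envy-free $Q$ survives because repeated application of the single-apartment result (Lemma~\ref{lemma:second_welfare}) gives both that $(A',Q)$ is still individually envy-free and that $V_i(A'_j(i)) - Q(A'_j(i)) = V_i(A_j(i)) - Q(A_j(i))$ for all $i,j$, whence $\sum_j P'(A'_j(i)) = \sum_j Q(A'_j(i))$. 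Note also that the direction you need is the converse of how you phrased it: the point is not that you may modify an arbitrary solution apartment-by-apartment, but that the \emph{global} optimum $(A^*,P^*,j^*)$ --- whose assignment is automatically welfare-maximizing everywhere, since negotiated envy-freeness demands an individually envy-free $Q$ --- can be transported onto \emph{your} fixed $A$ with identical utilities, so your LP's optimum is at least the global optimum. Without this lemma your algorithm is well-defined but unjustified; with it, your proof matches the paper's.
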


Under this formalization, the maximin objective function can be represented by the linear functions $f_i(U_1(A_{j^*},P),...,U_n(A_{j^*},P)) = U_i(A_{j^*}, P)$ for all $i \in [n]$, and equitability can be represented by  $f_{i,i'}(U_1(A_{j^*},P),...,U_n(A_{j^*},P)) = U_i(A_{j^*}, P) - U_{i'}(A_{j^*}, P)$  for all $i, i' \in [n]$. 

While we defer the proof of Theorem \ref{thm:max_obj} to Appendix \ref{app:thm:max_obj}, we will state the key lemma used in the proof (Lemma \ref{lemma:multi_second_welfare}) and the algorithm that achieves the result.

Informally, Lemma \ref{lemma:multi_second_welfare} states that the set of solutions which satisfy negotiated envy-freeness are equivalent for all welfare-maximizing assignments, in the sense that the same set of player utilities can always be found for any choice of welfare-maximizing assignment. Note that Lemma \ref{lemma:multi_second_welfare} has a similar flavor to the 2nd Welfare Theorem \citep{GMPZ17,mas1995microeconomic} (see Lemma \ref{lemma:second_welfare}). 
\begin{lemma}\label{lemma:multi_second_welfare}
    Let $A, A'$ be two assignments that maximize welfare in every apartment, and let $P$ be a price matrix such that $(A, P, j^*)$ satisfies negotiated envy-freeness. Then there exists a price matrix $P'$ such that $(A', P', j^*)$ satisfies negotiated envy-freeness and $U_i(A_j, P) = U_i(A_j', P')$ for all $i,j$. 
\end{lemma}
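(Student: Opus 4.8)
The plan is to prove Lemma~\ref{lemma:multi_second_welfare} by leveraging the characterization of negotiated envy-freeness from Definition~\ref{def:REF}, together with the classical 2nd Welfare Theorem applied apartment-by-apartment. Recall that $(A,P,j^*)$ satisfying negotiated envy-freeness means two things: first, that $(A,P)$ satisfies consensus at $j^*$; and second, that there is an individually envy-free witness $(A,Q)$ preserving each player's total rent, i.e. $\sum_j P(A_j(i)) = \sum_j Q(A_j(i))$ for all $i$. My goal is to transfer both of these ingredients from the assignment $A$ to the new welfare-maximizing assignment $A'$, constructing an appropriate $P'$ along the way.

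The key structural tool is the following consequence of the 2nd Welfare Theorem (Lemma~\ref{lemma:second_welfare}): since $A$ and $A'$ both maximize welfare within each apartment $j$, given any individually envy-free $(A,Q)$ there is a price matrix $Q'$ such that $(A',Q')$ is also individually envy-free and $U_i(A_j,Q) = U_i(A_j',Q')$ for all $i,j$. Applying this to the witness $Q$ of the original solution yields an individually envy-free witness $Q'$ for $A'$ that induces identical per-player, per-apartment utilities. I would first establish that this utility-matching across apartments is exactly what lets me define $P'$: I want to set $P'$ so that $U_i(A_j',P') = U_i(A_j,P)$ for all $i,j$, and then verify that $P'$ is a valid price matrix (each apartment's prices sum to $R_j$) and that $(A',P',j^*)$ is negotiated envy-free. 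Validity of $P'$ follows because within each apartment the utilities sum to the fixed welfare $\sum_i V_i(A_j'(i)) - R_j$, which equals the welfare under $A$ since both maximize it; so the induced prices automatically respect the rent constraint.

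With $P'$ defined by matching utilities, consensus at $j^*$ for $(A',P')$ is immediate, since consensus is a statement purely about per-apartment utilities (Equation~\eqref{eq:consensus}) and these are preserved by construction. The remaining obligation is to exhibit the individually envy-free witness preserving total rents, namely to show $\sum_j P'(A_j'(i)) = \sum_j Q'(A_j'(i))$ for every player $i$. Here I would chain the three utility-matching identities: from $U_i(A_j',P')=U_i(A_j,P)$ and $U_i(A_j,Q)=U_i(A_j',Q')$, together with the hypothesis $\sum_j P(A_j(i)) = \sum_j Q(A_j(i))$. Concretely, summing $U_i(A_j',P') = U_i(A_j,P)$ over $j$ and comparing with the sum of $U_i(A_j',Q') = U_i(A_j,Q)$ over $j$, the value terms $\sum_j V_i(A_j'(i))$ cancel on each side, reducing the claim to $\sum_j P(A_j(i)) = \sum_j Q(A_j(i))$, which is the given hypothesis. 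Thus $Q'$ serves as the required witness and $(A',P',j^*)$ is negotiated envy-free.

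The main obstacle, and the step I expect to require the most care, is the bookkeeping that guarantees $P'$ is a genuine valid price matrix rather than merely a collection of target utilities. I must confirm that within every apartment $j$ (not only the consensus apartment), the per-apartment utilities $U_i(A_j,P)$ sum to a value compatible with the rent $R_j$ under the reassigned rooms of $A'$; this is where the hypothesis that both $A$ and $A'$ maximize welfare in \emph{every} apartment is essential, since it forces $\sum_i V_i(A_j(i)) = \sum_i V_i(A_j'(i))$ for each $j$ and hence makes the utility profile realizable by valid prices. A secondary subtlety is ensuring the witness $Q'$ produced by the 2nd Welfare Theorem and the price matrix $P'$ are built from the \emph{same} assignment $A'$ so that the total-rent identity genuinely transfers; invoking Lemma~\ref{lemma:second_welfare} on the witness $Q$ (rather than on $P$ directly) is what keeps these aligned.
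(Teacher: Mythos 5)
Your proposal is correct and follows essentially the same route as the paper: the utility-matching definition of $P'$ is exactly the paper's explicit formula $P'(A'_j(i)) = V_i(A'_j(i)) - V_i(A_j(i)) + P(A_j(i))$, validity follows from equal welfare in every apartment, and the witness transfer via Lemma~\ref{lemma:second_welfare} plus the cancellation of value terms is the same chaining the paper performs (the paper just notes that the \emph{same} matrix $Q$ already serves as the individually envy-free witness for $A'$, which is the $Q'$ your argument produces).
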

The proof of Lemma \ref{lemma:multi_second_welfare} can be found in Section \ref{app:multi_second_welfare}. The algorithm used to prove Theorem \ref{thm:max_obj} is the following.
    \begin{algorithm}[tb]
    \caption{Optimizing an objective function subject to Negotiated Envy-Freeness}\label{algo:REF_consensus_opt}
    \begin{algorithmic}
    \For{$j \leftarrow 1$ to $m$}
        \State $A_j \gets$ welfare-maximizing assignment in $j$ via max-weight bipartite matching
    \EndFor
    \State $A \gets \{A_1,...,A_m\}$
    \State Choose $j^*_A$ such that $\sum_{i = 1}^n V_i(A_{j_A^*}(i)) - R_{j_A^*}\geq \max_j \sum_{i = 1}^n V_i(A_{j}(i)) - R_{j}$
    \begin{align*}
        P^*,Q^* &\gets \arg \max_{P,Q} \quad  Z \\
        \text{s.t.} \quad & Z \leq f_q(U_1(A_{j'},P),...,U_n(A_{j'},P)) \\
        & V_i(A_{j'}(i)) - P(A_{j'}(i)) \geq V_i(A_{j}(i)) - P(A_{j}(i)) \\
        & V_i(A_j(i)) - Q(A_j(i)) \geq V_i(A_j(i')) - Q(A_j(i'))  \\
        & \sum_j P(A_j(i)) = \sum_j Q(A_j(i)) \\
        & \sum_i P(A_j(i)) = R_j 
    \end{align*}
    \Return $(A, P^*, j^*)$ 
    \end{algorithmic}
    \end{algorithm}

\subsection{Strong Negotiated Envy-freeness}

We have so far defined two notions of fairness for multi-apartment rent division: universal envy-freeness and negotiated envy-freeness. Universal envy-freeness is simpler to interpret, but may not always exist. On the other hand, negotiated envy-freeness always exists, but there may be envy in the chosen apartment. In this section, we briefly describe an extension of negotiated envy-freeness (strong negotiated envy-freeness) that provides an interpretation for envy within the chosen apartment. Due to space constraints we defer the formal definition and proofs to Appendix \ref{sec:strong_REF}.

The definition of strong negotiated envy-freeness is similar to negotiated envy-freeness in that the solution $(A,P,j^*)$ must satisfy consensus and there must be an individually envy-free solution $(A,Q)$ that is reachable by negotiations. However, strong negotiated envy-freeness has an additional requirement that bounds the price differences between $P$ and $Q$ in the consensus apartment $j^*$. Due to this additional requirement, any envy in the final chosen apartment can be interpreted as being ``necessary" to reach consensus. We also show in Theorem \ref{thm:explainable} that a solution satisfying strong negotiated envy-freeness always exists and that optimizing an objective subject to strong negotiated envy-freeness can be done in polynomial time (as in Theorem \ref{thm:max_obj}).

\section{Discussion}

\subsection{Limitations}

Our model suffers the same limitations as that of \citet{GMPZ17}, including that envy-freeness is not strategy-proof and that the quasi-linear utility model is a strong simplifying assumption. We also acknowledge that negotiated envy-freeness, despite having a reasonable justification as reachable by negotiations, is less easily explainable as a fairness notion than envy-freeness, and that envy within the consensus apartment may still lead to discontent. From an application standpoint, it may be useful to let users find a negotiated envy-free solution themselves, by presenting users with individually envy-free solutions in each apartment and enabling the group to negotiate amongst themselves in the way dictated in the paper.

\subsection{Extensions}

We have required throughout this work that solutions to the multi-apartment rent division problem are of the form $(A, P, j^*)$, where $j^*$ is the chosen apartment. A natural generalization would be a distributional solution of the form $(A, P, D)$, where $D$ is a distribution over all $m$ apartments. We would then want to find a distributional solution $(A,P,D)$ such that in expectation, every player prefers their assignment to any other player's assignment. This notion of \textit{distributional envy-freeness} is a weaker notion of fairness than universal envy-freeness, and more instances of the problem have a distributional envy-free solution than a universal envy-free solution (including Example \ref{example:motivating}). See Appendix \ref{app:distributional} for more details on distributional envy-freeness.

In our work, we assumed that each group of $n$ players is determined to room together in an apartment with $n$ bedrooms. One extension is to have various sizes of apartments and allow the group of players to be split into multiple smaller apartments. This setting can be modeled as a cooperative game with transferable utility, and a natural question is then whether the core is always non-empty. In Appendix \ref{app:core}, we show that even when there are infinite copies of each apartment, there still exist instances where the core is empty. 

Another interesting question is how the solution changes as additional apartments are added (e.g. appear on the market). An additional apartment could change the set of negotiated envy-free solutions, which could in turn change the chosen apartment when optimizing objective functions such as maximin or equitability. We show in Appendix \ref{app:monotonicity} that the maximin solution under negotiated envy-freeness does not satisfy apartment monotonicity, in that an additional apartment could either raise or lower the achievable maximin value over all negotiated envy-free solutions.

In Section \ref{sec:uef}, we studied the existence of universal envy-free solutions when the values are drawn i.i.d. at random. In Appendix \ref{app:uef_corr}, we study the probability of the existence of a universal envy-free solution when there are two players with binary valuations. As the correlation between player values decreases, the probability of the existence of a universal envy-free solution generally increases. Interestingly, however, the increase is not monotonic. Therefore, sometimes the highest likelihood that there does not exist a universal envy-free solution occurs with correlation strictly between $-1$ and $1$, implying that correlation between players is not strictly better for finding a universal envy-free solution.

\section{Acknowledgements}

Procaccia gratefully acknowledges research support by the National Science Foundation under grants IIS-2147187, IIS-2229881 and CCF-2007080; and by the Office of Naval Research under grant N00014-20-1-2488. Schiffer was supported by an NSF Graduate Research Fellowship. Zhang was supported by an NSF Graduate Research Fellowship.

\newpage
\bibliographystyle{abbrvnat}
\bibliography{bib,abb,ultimate}

\newpage
\appendix
\section{Proof of Probabilistic Universal Envy-free Results}\label{app:uef_proofs}

\subsection{Proof of Theorem \ref{lemma:discrete_UEF}}
\begin{proof}
    Define $H$ as the maximum value of $\mathcal{D}$ and let $p$ be the probability that a random draw from $\mathcal{D}$ equals $H$. Define $E_H^j$ as the event that there exists an assignment $A^*_{j}$ for apartment $j$ such that for all players $i \in [n]$, $V_{i}(A_{j}(i)) = H$. For any $j$, $\Pr(E_H^j) \geq p^n$, and therefore $\Pr(\bigcup_j E_H^j) \ge 1-\left(1-p^n\right)^m$. Now we will show that $E_H^j$ implies that there exists a universal envy-free solution with apartment $j$ as the consensus apartment. Suppose we choose price matrix $P$ such that the price of every room in every apartment is equal to $R/n$. Then the solution $(A,P,j)$ where $A_{j} = A_j^*$ satisfies universal envy-freeness. This is because every player in apartment $j$ will have utility $H - \frac{R}{n}$, and no player can have more than utility $H - \frac{R}{n}$ for any room in any apartment because $H$ is the maximum of the distribution $\mathcal{D}$. Therefore, we have shown that $\Pr(E_m) \ge \Pr(\bigcup_j E_H^j) \ge 1-\left(1-p^n\right)^m   \xrightarrow[m \to \infty]{}  1$.
\end{proof}

\subsection{Proof of Theorem \ref{thm:probabilistic_uef}}

Before proving Theorem \ref{thm:probabilistic_uef}, we must introduce some additional notation that will allow us to define events under which a universal envy-free solution does and does not exist. Informally, we define the maximum unbalanced welfare of an apartment as the maximum possible welfare of that apartment if we could assign multiple rooms to the same player.

\begin{definition}\label{def:muw}
    Define the \textit{maximum unbalanced welfare (MUW)} of an apartment j as 
    \[
        MUW(j) = \left(\sum_{k} \max_{i \in [n]} V_i(r_{jk})\right) - R.
    \]
\end{definition}

In order to prove the lower bound in Theorem \ref{thm:probabilistic_uef}, we want to define an event $\mathcal{F}$ under which a universal envy-free solution exists for any number of apartments $m$. Furthermore, we want to be able to lower bound the probability of $\mathcal{F}$ by $p_0(n)$ that does not depend on $m$ or $\mathcal{D}$.

\begin{definition}\label{def:event_mathcal_F}
     Let  $j^*$ be the smallest $j$ such that $MUW(j) \ge MUW(j')$ for all apartments $j'$. Define $\mathcal{F}$ as the event that there exists an assignment $A^*_{j^*}$ such that
    \begin{equation}\label{eq:muw_assum}
        \sum_{i \in [n]} V_i(A^*_{j^*}(i)) - R = MUW(j^*).
    \end{equation}
\end{definition}

\begin{lemma}\label{lemma:muw}
    Conditioned on event $\mathcal{F}$, there exists a universal envy-free solution.
\end{lemma}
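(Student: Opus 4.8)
The plan is to exhibit an explicit universal envy-free solution that uses apartment $j^*$ as the chosen apartment and a price matrix built directly from the per-room maxima. Under event $\mathcal{F}$, the assignment $A^*_{j^*}$ is bijective and satisfies $\sum_{i} V_i(A^*_{j^*}(i)) - R = MUW(j^*)$. Since $MUW(j^*) + R = \sum_k \max_i V_i(r_{j^*k})$ is the sum of the per-room maxima, and every bijective assignment contributes to each room at most that room's maximum, equality forces $A^*_{j^*}$ to assign each room $r_{j^*k}$ to a player attaining $\max_{i} V_i(r_{j^*k})$. I would take $A$ to agree with $A^*_{j^*}$ on apartment $j^*$ and to be an arbitrary bijection on every other apartment.

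Next I would define, for every apartment $j$ and room $r_{jk}$,
\[
    P(r_{jk}) = \max_{i \in [n]} V_i(r_{jk}) - \frac{MUW(j)}{n}.
\]
The first check is that this is a valid price matrix: summing over the $n$ rooms of apartment $j$ gives $\sum_k \max_i V_i(r_{jk}) - MUW(j) = R$ by the definition of $MUW(j)$, so each apartment's prices sum to $R$ as required. The heart of the argument is then a two-sided utility comparison. Because $A^*_{j^*}$ assigns each room to a maximizing player, every player $i$ receives utility exactly $V_i(A^*_{j^*}(i)) - P(A^*_{j^*}(i)) = MUW(j^*)/n$ in the chosen apartment. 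Conversely, for any apartment $j$, room $r_{jk}$, and player $i$, the construction gives
\[
    V_i(r_{jk}) - P(r_{jk}) = V_i(r_{jk}) - \max_{i'} V_{i'}(r_{jk}) + \frac{MUW(j)}{n} \le \frac{MUW(j)}{n},
\]
since $V_i(r_{jk}) \le \max_{i'} V_{i'}(r_{jk})$. Combining these with the defining inequality $MUW(j) \le MUW(j^*)$ for all $j$ yields $V_i(A^*_{j^*}(i)) - P(A^*_{j^*}(i)) \ge V_i(A_j(i')) - P(A_j(i'))$ for every $i, i', j$, which is exactly universal envy-freeness.

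The step requiring the most care, and the main conceptual obstacle, is identifying the right prices for the \emph{non-chosen} apartments. The key realization is that one should not try to make those apartments individually envy-free, but rather should \emph{cap} the achievable utility of every room in apartment $j$ at the uniform value $MUW(j)/n$; the feasibility of doing this while keeping each column sum equal to $R$ is precisely what the maximality of $MUW(j^*)$ buys us, and it is also why the argument would break for an apartment that is not MUW-maximal. Two minor points close the proof: universal envy-freeness automatically subsumes consensus, so no separate verification is needed, and the particular bijection chosen on the non-chosen apartments is irrelevant because the prices bound the utility of every room there uniformly.
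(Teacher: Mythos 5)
Your proof is correct and takes essentially the same approach as the paper: your explicit prices $P(r_{jk}) = \max_{i} V_i(r_{jk}) - MUW(j)/n$ are exactly the utility-equalizing prices the paper obtains via its $\argmin$ construction, and your two-sided comparison (each player gets exactly $MUW(j^*)/n$ in the chosen apartment, while every room anywhere yields at most $MUW(j)/n \le MUW(j^*)/n$) is the paper's argument. The only cosmetic differences are that you write the prices in closed form rather than as a minimizer, and you correctly observe that the assignment on non-chosen apartments is irrelevant, whereas the paper (unnecessarily) takes welfare-maximizing assignments there.
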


    \begin{proof}
    Construct assignment $A$ as follows. Let $A_{j^*} = A^*_{j^*}$. For every other apartment $j$, let $A_j$ be a welfare-maximizing assignment for apartment $j$. Now, construct the price matrix $P$ as follows. Let
    \[
        P\left(A_{j^*}(i)\right) = V_i(A_{j^*}(i)) - \frac{MUW(j^*)}{n} \quad \forall \: i.
    \]
    Note that these are valid prices due to Equation \eqref{eq:muw_assum}. A consequence of this is
    \begin{equation}\label{eq:equal_util}
        U_i(A_{j^*}, P) = V_i(A_{j^*}(i)) - P\left(A_{j^*}(i)\right)  =  \frac{MUW(j^*)}{n} \quad \forall \: i.
    \end{equation}
    
    For every other apartment $j \ne j^*$, assign prices in a way that minimizes the maximum utility any player has for any room within that apartment. Formally, let $\mathcal{P}_j$ be the set of all possible prices for apartment $j$. Then
    \[
        P_j = \argmin_{P'_j \in \mathcal{P}_j} \max_{i,k \in [n]} V_i(r_{jk}) - P'_j(r_{jk}) \quad \forall \: j \neq j^*.
    \]
    Under this choice of prices, for any apartment $j \ne j^*$ and rooms $r_{jk_1},r_{jk_2}$ in apartment $j$,
    \[
        \max_{i \in [n]} V_i(r_{jk_1}) - P(r_{jk_1}) =  \max_{i \in [n]} V_i(r_{jk_2}) - P(r_{jk_2}).
    \]
    By Definition \ref{def:muw}, $MUW(j) = \sum_{k \in [1:n]} \max_{i \in [n]} (V_i(r_{jk}) - P(r_{jk}))$. Therefore, for every room $r_{jk}$ in apartment $j \ne j^*$, 
    \begin{equation}\label{eq:maxval}
        \max_{i \in [n]} V_i(r_{jk}) - P(r_{jk}) = \frac{MUW(j)}{n}.
    \end{equation}
    
    We will now show that $(A, P, j^*)$ satisfies universal envy-freeness. First, we will show that no player strictly prefers another room in apartment $j^*$ to their assigned room in apartment $j^*$. For any players $i, i'$,
    \begin{align*}
        V_i(A_{j^*}(i)) - P(A_{j^*}(i)) &= V_{i'}(A_{j^*}(i')) - P(A_{j^*}(i')) \\
        &\geq V_{i}(A_{j^*}(i')) - P(A_{j^*}(i')) \numberthis \label{eq:jstar_ef}\\
    \end{align*}
    The equality holds by an application of Equation \eqref{eq:equal_util}. The inequality holds by Equation \eqref{eq:muw_assum} and Definition \ref{def:muw}, which together imply that $V_{i'}(A_{j^*}(i')) \geq V_{i}(A_{j^*}(i'))$.
        
    To finish showing that $(A,P,j^*)$ is a universal envy-free solution, we must show that no player prefers any room in any $j \neq j^*$ to their assigned room in apartment $j^*$. By Equation \eqref{eq:maxval}, 
    \[
        U_i(A_j,P) = V_i(A_j(i)) - P(A_j(i)) \leq \frac{MUW(j)}{n} \quad \forall \: j \neq j^*.
    \]
    Furthermore, by Equation \eqref{eq:equal_util}, $U_i(A_{j^*},P)  = \frac{MUW(j^*)}{n}$ for every player $i$. By assumption, $MUW(j^*) \geq MUW(j) \: \forall \: j$, and therefore for every player $i$ and apartment $j \ne j^*$,
    \[
        U_i(A_{j^*},P)  = \frac{MUW(j^*)}{n} \ge \frac{MUW(j)}{n} \ge U_i(A_j,P)
    \]
    Therefore, $(A,P,j^*)$ is a universal envy-free solution.
\end{proof}

\begin{lemma}\label{lemma:muw_prob}
    Suppose that $V_{i}(r_{jk}) \stackrel{i.i.d.}{\sim} \mathcal{D}$ for all $i,j,k$ where $\mathcal{D}$ is a continuous distribution supported on $[0,1]$. Then for all $m$, $\Pr\left(\mathcal{F}\right) \ge \frac{n!}{n^n}$.
\end{lemma}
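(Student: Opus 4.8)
The plan is to show that $\Pr(\mathcal{F})$ equals \emph{exactly} $n!/n^n$ (hence is bounded below by it, independent of $m$) by conditioning on the column maxima of all apartments and exploiting the symmetry of the i.i.d. model across players. First I would record the a.s. facts coming from continuity of $\mathcal{D}$: with probability $1$ there are no ties, so within each room $k$ of each apartment $j$ there is a unique maximizer $\pi_j(k) := \argmax_{i} V_i(r_{jk})$, and the values $MUW(1),\dots,MUW(m)$ are all distinct, so $j^*$ from Definition~\ref{def:event_mathcal_F} is uniquely determined. I would then characterize $\mathcal{F}$ combinatorially: a bijective assignment attains $MUW(j^*) = \sum_k \max_i V_i(r_{j^*k}) - R$ if and only if it sends every room to its unique maximizer, which is possible precisely when the map $k \mapsto \pi_{j^*}(k)$ is a permutation of $[n]$ (if two rooms share a maximizer, any bijection assigns one of them to a strict non-maximizer, so welfare is a.s. strictly below $MUW(j^*)$). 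Thus, up to a null set, $\mathcal{F}$ is exactly the event that $\pi_{j^*}$ is a bijection.

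The key step is a conditional-independence claim. Let $M_{jk} := \max_i V_i(r_{jk})$ be the column maxima, and condition on the entire collection $\{M_{jk}\}_{j,k}$. I claim that, given these values, the maximizer identities $\{\pi_j(k)\}$ are mutually independent over all $(j,k)$, each uniform on $[n]$. This holds because distinct columns involve disjoint sets of the underlying i.i.d.\ variables (so they are independent), and within a single column the values $V_1(r_{jk}),\dots,V_n(r_{jk})$ are i.i.d.; by exchangeability of the players the argmax is uniform on $[n]$, and it remains uniform after conditioning on the realized value of the maximum. The crucial observation is that $MUW(j) = \sum_k M_{jk} - R$ is a function of the column maxima \emph{only}, so $j^*$ is measurable with respect to the conditioning $\sigma$-algebra: conditioning on $\{M_{jk}\}$ pins down $j^*$ while revealing nothing about \emph{which} player achieves each maximum.

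Putting these together, conditioned on $\{M_{jk}\}$ (which already fixes $j^*$), the $n$ identities $\pi_{j^*}(1),\dots,\pi_{j^*}(n)$ are i.i.d.\ uniform on $[n]$. The probability that $n$ i.i.d.\ uniform draws from $[n]$ are all distinct, i.e.\ form a permutation, is $n!/n^n$. Since this conditional probability equals $n!/n^n$ for almost every realization of the column maxima, integrating yields $\Pr(\mathcal{F}) = n!/n^n \ge n!/n^n$, with no dependence on $m$ or on $\mathcal{D}$ beyond continuity, as required.

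The main obstacle is the conditional-independence claim, and in particular arguing cleanly that selecting $j^*$ by its $MUW$ value introduces no bias into the distribution of the maximizers inside apartment $j^*$. The conceptual content is that $MUW$ is a symmetric function of the players, so although it singles out \emph{which} apartment is extremal, it conveys no information about \emph{who} attains each room's maximum; formalizing this through the conditioning on $\{M_{jk}\}$ together with row exchangeability is the heart of the argument. The combinatorial characterization of $\mathcal{F}$ and the permutation count are then routine.
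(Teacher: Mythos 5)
Your proof is correct and follows essentially the same route as the paper: both characterize $\mathcal{F}$ as the event that the room-wise maximizers in apartment $j^*$ form a bijection, and both exploit that $j^*$ is determined by player-symmetric information (the column maxima), so the maximizer identities remain i.i.d.\ uniform on $[n]$ and the probability is exactly $n!/n^n$. Your conditioning on $\{M_{jk}\}$ is simply a more explicitly measure-theoretic formalization of the paper's ``draw $n$ values per room, then map them to the players uniformly at random'' resampling argument.
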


\begin{proof}
    Let  $j^*$ be the smallest $j$ such that $MUW(j) \ge MUW(j')$ for all apartments $j'$. By construction, $\mathcal{F}$ only depends on the values in apartment $j^*$. Again by Equation \eqref{eq:muw_assum} and Definition \ref{def:muw}, $\mathcal{F}$ occurs if and only if there exists an assignment $A^*_{j^*}$ such that
    \[
        V_i(A^*_{j^*}(i)) \geq V_{i'}(A^*_{j^*}(i)) \quad \forall \: i,i'.
    \]
    Now consider the following process of generating an i.i.d. valuation matrix from $\mathcal{D}$. First, draw $n$ i.i.d. values for each room from $\mathcal{D}$. Then, map these $n$ values to the $n$ players uniformly at random.

    Note that whether event $\mathcal{F}$ occurs does not depend on the \textit{which} values are drawn for rooms in apartment $j^*$, but rather depends only on the \textit{mapping} of those values to players. To compute $\Pr(\mathcal{F})$, it suffices to compute the probability that for each room in apartment $j^*$, the maximum of the $n$ values is mapped to a different person. Let $\pi$ be a permutation from $[n]$ to $[n]$. Then the probability that, for all $i$, the maximum value of room $r_{j^*\pi(i)}$ is mapped to player $i$ is exactly $\frac{1}{n^n}$. Since there are $n!$ such permutations $\pi$, we can conclude that $\Pr(\mathcal{F}) = \frac{n!}{n^n}$.
\end{proof}

We now define an event $\mathcal{E}$ under which there is no universal envy-free solution. We show in Lemma \ref{lemma:nouef} that the probability of this event is bounded away from $0$.

\begin{definition}\label{def:event_mathcal_E}
    Let $m \ge n+1$ and $n \ge 2$. Assume w.l.o.g that the $m$ apartments $1,2,...,m$ are ordered by MUW, i.e. $MUW(j) \ge MUW(j')$ for $j \le j'$. Define $\mathcal{E}$ as the event that the following all hold:
    \begin{enumerate}
        \item $MUW(1) > MUW(2) > ... > MUW(m)$ (i.e. inequalities are strict).
        \item Define $A_j^*$ be a welfare maximizing assignment for apartment $j$. Then
        \[
             \sum_{i} V_i(A_{n+1}^*(i)) - R > \sum_{i} V_i(A_j^*(i)) - R   \quad \: \forall j.
        \]
        \item For every apartment $j \in [n]$, 
        \[
            V_j(r_{jk}) \ge V_i(r_{jk}) \quad \forall \: i,k \in [n].
        \]
        In other words, the $jth$ player has the highest value over all players for every room in apartment $j$.
    \end{enumerate}
\end{definition}

\begin{lemma}\label{lemma:nouef}
    Suppose that $m \ge n+1$ and $n \ge 2$ and event $\mathcal{E}$ holds. Then there is no universal envy-free solution. Furthermore, $\Pr(\mathcal{E}) > 0$ and $\Pr(\mathcal{E})$ depends only on $n$ and $\mathcal{D}$. 
\end{lemma}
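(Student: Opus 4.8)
The plan is to establish the two assertions in turn: that $\mathcal{E}$ rules out every universal envy-free solution, and that $\Pr(\mathcal{E})$ is positive with a lower bound that does not degrade as $m$ grows.

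For non-existence, I would argue by contradiction. Suppose $(A,P,j^*)$ is universal envy-free and write $u_i := V_i(A_{j^*}(i)) - P(A_{j^*}(i))$ for the utility of each of the $n$ players. The single fact I would extract from the UEF inequalities is that, since each $A_j$ is a bijection onto the rooms of apartment $j$, they are equivalent to $u_i \ge V_i(r) - P(r)$ for \emph{every} room $r$ in \emph{every} apartment. First I would apply this room-wise inequality to a welfare-maximizing assignment $A_{n+1}^*$ of apartment $n+1$: summing $u_i \ge V_i(A_{n+1}^*(i)) - P(A_{n+1}^*(i))$ over $i$ and using that a bijection's prices sum to $R$ gives $\sum_i u_i \ge \sum_i V_i(A_{n+1}^*(i)) - R =: W^*(n+1)$. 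Since also $\sum_i u_i = \sum_i V_i(A_{j^*}(i)) - R \le W^*(j^*)$, condition 2 (apartment $n+1$ is the strict welfare maximizer) forces $j^* = n+1$ and $\sum_i u_i = W^*(n+1)$.

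Next I would exploit condition 3: for each $j \in [n]$, player $j$ attains $\max_i V_i(r_{jk})$ in every room of apartment $j$, so applying the room-wise inequality to player $j$ and summing over the $n$ rooms yields $n\,u_j \ge \sum_k V_j(r_{jk}) - R = MUW(j)$, i.e. $u_j \ge MUW(j)/n$. Summing over $j \in [n]$ and substituting $\sum_i u_i = W^*(n+1)$ gives $W^*(n+1) \ge \tfrac1n \sum_{j=1}^n MUW(j)$. But $MUW(n+1) \ge W^*(n+1)$ always (since $MUW$ permits non-bijective assignments), and condition 1 makes the $MUW$ ordering strict, so $MUW(j) > MUW(n+1) \ge W^*(n+1)$ for every $j \in [n]$; hence $\tfrac1n\sum_{j=1}^n MUW(j) > W^*(n+1)$, contradicting the previous inequality. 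This completes the non-existence argument, which invokes each of the three conditions exactly once.

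For the probability claim, condition 1 holds almost surely because $\mathcal{D}$ is continuous, so distinct apartments have distinct $MUW$ with probability $1$. Positivity of $\Pr(\mathcal{E})$ for each fixed $m$ is straightforward by exhibiting a positive-probability set of valuation profiles realizing conditions 2 and 3 (for instance, $n$ apartments each dominated by a distinct player with near-maximal $MUW$, and a balanced apartment $n+1$ whose welfare exceeds theirs, which needs $n \ge 2$). The hard part will be making the lower bound \emph{uniform in $m$}, since condition 2 asks apartment $n+1$ to beat all $m$ apartments in welfare, a requirement that naively weakens as $m\to\infty$. To handle this I would lower-bound $\Pr(\mathcal{E})$ by the probability of the sub-event on which, additionally, the assignment achieving $MUW(n+1)$ is bijective, so that $W^*(n+1) = MUW(n+1)$. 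On this sub-event every apartment ranked below $n+1$ in $MUW$ order automatically satisfies $W^*(j) \le MUW(j) < MUW(n+1) = W^*(n+1)$, so condition 2 against those $m-n-1$ apartments comes for free, and what remains constrains only the top $n+1$ apartments. One then shows the probability of the required pattern among the $n+1$ highest-$MUW$ apartments is bounded below uniformly in $m$ — it is positive at each finite $m$ and converges to a positive limit as $m\to\infty$ by an extreme-value argument — yielding $\Pr(\mathcal{E}) \ge p(n,\mathcal{D}) > 0$. I expect this decoupling, using bijectivity of the $MUW(n+1)$-optimal assignment to remove condition 2's dependence on the bulk of the apartments, to be the crux of the uniform bound.
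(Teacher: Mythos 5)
Your non-existence argument is correct and is essentially the paper's own proof, just streamlined: you force $j^* = n+1$ via the welfare comparison with condition 2, derive $u_j \ge MUW(j)/n$ for each $j \in [n]$ from condition 3 (your sum over rooms is the paper's pigeonhole step), and contradict the strict $MUW$ ordering of condition 1. Likewise, your key idea for the probability half---restricting to the sub-event on which the $MUW(n+1)$-achieving assignment is bijective, so that condition 2 against all apartments ranked below $n+1$ follows for free from the $MUW$ ordering---is exactly the paper's event $E_2$ and plays the same role there.

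The gap is the step you defer to ``an extreme-value argument,'' and it is not a small one: it is where most of the paper's work happens. Two concrete difficulties live there. First, the top $n+1$ apartments are \emph{selected} by their $MUW$ ranks, so their valuations are not unconditioned i.i.d.\ draws, and any bound on your ``required pattern'' must contend with this selection bias. The paper dissolves it with a two-stage generating process: for each room, first draw the maximum value from $\mathcal{D}_{\mathrm{max}}$, then draw the other $n-1$ values conditioned to lie below it, and only then map the $n$ values to players uniformly at random. Since $MUW$ is a function of the room maxima alone, the ranking is fixed before the player mapping is revealed, so the domination events (condition 3) and the bijectivity event ($E_2$, probability exactly $n!/n^n$) are independent of the ranking and of $m$. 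Second, even granting that player $j$ dominates apartment $j$ for $j\in[n]$, that apartment's best \emph{bijective} welfare is not automatically below apartment $n+1$'s: you need the non-dominant players' values in apartments $1,\dots,n$ to be quantitatively small and the matched values in apartment $n+1$ to be quantitatively large, uniformly in $m$. The paper achieves this with explicit thresholds (all room maxima in the top $n+1$ apartments exceed $\tfrac{3\mu_U+\mu_L}{4}$, and all non-dominant values in the top $n$ apartments lie below $\tfrac{\mu_U+3\mu_L}{4}$, where $\mu_U,\mu_L$ are the essential supremum and infimum of $\mathcal{D}$), combined with the inequality $(n-1)\tfrac{\mu_U+3\mu_L}{4}+\mu_U < n\cdot\tfrac{3\mu_U+\mu_L}{4}$, valid precisely because $n \ge 2$; each threshold event has probability bounded below by a function of $n$ and $\mathcal{D}$ only. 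Your convergence-to-a-positive-limit claim is plausibly true, but a rigorous version of it would have to supply exactly these two ingredients, so as written the probability half of your proposal asserts the crux rather than proving it.
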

\begin{proof}  
    We first show that there does not exist a universal envy-free solution of the form $(A, P, j^*)$ with $j^* \neq n+1$. Suppose we have a universal envy-free solution $(A,P,j^*)$ where $j^* \ne n+1$. Then by definition of universal envy-freeness, $U_i(A_{j^*}, P) \ge U_i(A'_{n+1},P)$ for every assignment $A'_{n+1}$ in apartment $n+1$. Let $A_{n+1}^*$ be a welfare-maximizing assignment in apartment $n+1$ and let $A_{j^*}^*$ be a welfare-maximizing assignment in apartment $j^*$. Then we must have that 
    \[
        \sum_i U_i(A^*_{j^*}, P) \ge \sum_i U_i(A_{j^*}, P)  \ge \sum_i U_i(A_{n+1}^*,P),
    \]
    however this is a contradiction with the second condition of $\mathcal{E}$.
    
    Now we will show that there are no universal envy-free solutions of the form $(A,P,n+1)$. If $(A,P,n+1)$ is a universal envy-free solution, then
    \[
        U_i(A_{n+1}, P) \ge U_i(A_{j}, P) \quad \forall j.
    \]
   By the third condition of $\mathcal{E}$, for any $i \in [n]$ and any price matrix $P$,
   \[
    \sum_{k} U_i(A_i, P) = \sum_{k} V_i(r_{ik}) - R = MUW(i).
   \]
   Therefore by the pigeonhole principle, there must be at least one room in apartment $i$ for which player $i$ has utility at least $\frac{MUW(i)}{n}$. Putting these last two equations together, if $(A,P,n+1)$ satisfies universal envy-freeness, then
    \begin{equation}\label{eq:utillowerbound}
        \frac{MUW(i)}{n} \leq U_i(A_{n+1}, P) \quad \forall i \in [n].
    \end{equation}
    By Definition \ref{def:muw}, it must be the case that
    \begin{equation}\label{eq:utilupperbound}
        \sum_{i=1}^n U_i(A_{n+1}, P) \le MUW(n+1) \quad  \forall i \in [n].
    \end{equation}
    Combining Equations \eqref{eq:utillowerbound} and \eqref{eq:utilupperbound} gives 
    \[
       \sum_{i=1}^n \frac{MUW(i)}{n} \leq \sum_{i=1}^n U_i(A_{n+1}, P) \leq  MUW(n+1).
    \]
    However, by the definition of event $\mathcal{E}$ we have $MUW(i) > MUW(n+1)$ for all $i$. Therefore, we have reached a contradiction and can conclude that under event $\mathcal{E}$, there is no universal envy-free solution.

    To lower bound the probability of $\mathcal{E}$, it will be helpful to consider the following process of generating an i.i.d. valuation matrix. We define $\mathcal{D}_{\mathrm{max}}$ as the distribution of the maximum of $n$ draws from $\mathcal{D}$. For each room $r_{jk}$ in each apartment $j$, first draw one value $v_{\mathrm{max}}(r_{jk}) \sim \mathcal{D}_{\mathrm{max}}$. Next, draw another $n-1$ values from $\mathcal{D}$ conditioned on each value being at most $v_{\mathrm{max}}(r_{jk})$. Finally, map these $n$ values to the $n$ players uniformly at random.
    
    Now, we will define a more complicated distribution-specific event $E = E_0 \cap E_1 \cap E_2 \cap E_3$ such that $E \subseteq \mathcal{E}$. We will lower bound $\Pr(E)$, which will immediately yield a lower bound for $\Pr(\mathcal{E})$. Define $\mu_U$ and $\mu_L$ as the maximum and minimum of the distribution $\mathcal{D}$, i.e. $\mu_U = \inf \left\{ x : \Pr_{X \sim \mathcal{D}} (X \le x) = 1\right\}$ and $\mu_L = \inf \left \{ x : \Pr_{X \sim \mathcal{D}} (X \le x) > 0\right\}$. Consider the following events $E_0, E_1, E_2$, and $E_3$:
    \begin{itemize}
        \item Define $E_0$ as the event that there are no two apartments $j$ and $j'$ such that $MUW(j') = MUW(j)$. Because $\mathcal{D}$ is continuous, $\Pr(E_0) = 1$.
        \item Define $E_1$ as the event that for every room $r_{jk}$ in apartments $j=1,2,...n+1$, the value of $v_{\mathrm{max}}(r_{jk})$ is greater than $\frac{3\mu_U + \mu_L}{4}$. 
        
        For any single draw from $\mathcal{D}_{\mathrm{max}}$, the probability of being greater than $\frac{3\mu_U + \mu_L}{4}$ is at least the probability that a random draw from $\mathcal{D}$ is greater than $\frac{3\mu_U + \mu_L}{4}$ because $\mathcal{D}_{\mathrm{max}}$ stochastically dominates $\mathcal{D}$. Therefore, the probability that $v_{\mathrm{max}}(r_{jk})$ is greater than $\frac{3\mu_U + \mu_L}{4}$ for a single room $r_{jk}$ is at least $\Pr_{x \sim \mathcal{D}}\left(x \ge \frac{3\mu_U + \mu_L}{4}\right)$. There are $n(n+1)$ rooms in the first $n+1$ apartments, and the maximum value for each room is independent across rooms, and therefore 
        \begin{equation}\label{eq:E1}
            \Pr(E_1) \ge \Pr_{x \sim \mathcal{D}}\left(x \ge \frac{3\mu_U + \mu_L}{4}\right)^{n(n+1)} > 0.
        \end{equation}
        \item Define $E_2$ as the event that in apartment $n+1$, there exists a permutation $\pi: [n] \to [n]$ such that for all $i \in [n]$, player $i$ has the maximum value for room $\pi(i)$. The probability of event $E_2$ is $\frac{n!}{n^n}$ as in the proof of Lemma \ref{lemma:muw}. Therefore
        \begin{equation}\label{eq:E2}
            \Pr(E_2) = \frac{n!}{n^n}.
        \end{equation}
        \item Define event $E_3$ as the event that for every apartment $j \in [1,...,n]$ and player $i \ne j$, the value of player $i$ for every room in apartment $j$ is at most $\frac{\mu_U + 3\mu_L}{4}$. 
        
        We will lower bound the probability of event $E_3$ conditioned on event $E_1$. Consider any fixed room $r_{jk}$. Conditioned on event $E_1$, the maximum value in room $r_{jk}$ is at least $\frac{3\mu_U + \mu_L}{4}$. Therefore, in order to have $V_i(r_{jk}) \le \frac{\mu_U + 3\mu_L}{4}$ for all $i \ne j$, player $j$ must have the highest value among all players for room $r_{jk}$. Furthermore, the other $n-1$ values for room $r_{jk}$ must all be  less than $\frac{\mu_U + 3\mu_L}{4}$. The other $n-1$ values for room $r_{jk}$ are drawn independently. Therefore, the probability that the other $n-1$ values for room $r_{jk}$ are all less than $\frac{\mu_U + 3\mu_L}{4}$ is
        {\footnotesize
        \begin{align*}
            \Pr_{x \sim \mathcal{D}}\left(x \le \frac{\mu_U + 3\mu_L}{4} \middle| x \le v_{\mathrm{max}}(r_{jk}), E_1\right)^{n-1}  &= \Pr_{x \sim \mathcal{D}}\left(x \le\frac{\mu_U + 3\mu_L}{4} \middle| x \le v_{\mathrm{max}}(r_{jk}), v_{\mathrm{max}}(r_{jk}) \ge \frac{3\mu_U + \mu_L}{4} \right)^{n-1}  \\
            &\ge  \Pr_{x \sim \mathcal{D}}\left(x \le \frac{\mu_U + 3\mu_L}{4}\right)^{n-1} \\
            &> 0.
        \end{align*}}
        
        Note that the probability that player $j$ has the highest value for room $r_{jk}$ among all $n$ players is $1/n$, and that this event is independent of event $E_1$. Putting this all together, we have 
        \[
            \Pr\left(V_i(r_{jk}) \le \frac{\mu_U + 3\mu_L}{4} \: \: \forall i \ne j \: \middle | \: E_1\right) \ge \frac{1}{n}\Pr_{x \sim \mathcal{D}}\left(x \le \frac{\mu_U + 3\mu_L}{4}\right)^{n-1}.
        \]
        There are $n^2$ such rooms $r_{jk}$ in the first $n$ apartments. Furthermore, for two rooms $r \ne r'$, the event that for $V_i(r) \le \frac{\mu_U + 3\mu_L}{4}$ for all $i \ne j$ is independent of the event that $V_i(r') \le \frac{\mu_U + 3\mu_L}{4}$ for all $i \ne j$. Therefore, we have that
        \begin{equation}\label{eq:E3}
            \Pr(E_3 | E_1) \ge \left(\frac{1}{n}\Pr_{x \sim \mathcal{D}}\left(x \le \frac{\mu_U + 3\mu_L}{4}\right)^{n-1}\right)^{n^2}.
        \end{equation}
    \end{itemize}

     Recall that $E = E_0 \cap E_1 \cap E_2 \cap E_3$. 
     
     Next, we will show that $E \subseteq \mathcal{E}$. The first condition of Definition \ref{def:event_mathcal_E} is satisfied under event $E$ by definition of event $E_0$. Recall as argued above that events $E_1$ and $E_3$ together imply that player $j$ must have the highest value among all players for every room in apartment $j$ for $j \in [n]$. This implies that under event $E$ the third condition of Definition \ref{def:event_mathcal_E} is satisfied. Finally, we will show that under event $E$, the second condition of Definition \ref{def:event_mathcal_E} is satisfied. Under events $E_1$ and $E_2$, there exists an assignment $A_{n+1}^*$ in apartment $n+1$ such that every player $i$ has value at least $\frac{3\mu_U + \mu_L}{4}$ for room $A_{n+1}^*(i)$. This implies that
     \begin{equation}\label{eq:welfare_n+1}
         \sum_{i=1}^n V_i(A_{n+1}^*(i)) - R \ge n \cdot \frac{3\mu_U + \mu_L}{4} - R.
     \end{equation}
     For every assignment $A_j$ for apartment $j \in [n]$, under event $E_3$, every player $i \ne j$ has value at most $\frac{\mu_U + 3\mu_L}{4}$ for room $A_j(i)$. Because $\mathcal{D}$ is bounded, player $j$ has value at most $\mu_U$ for room $A_j(j)$. Let $A_j^*$ be a welfare-maximizing assignment in apartment $j$. Putting this all together, we have that for any $j \in [n]$, conditioned on event $E_3$
     \begin{equation}\label{eq:welfare_j}
         \sum_{i=1}^n V_i(A_{j}^*(i)) - R \leq (n-1) \cdot \frac{\mu_U + 3\mu_L}{4} + \mu_U - R.
     \end{equation}
     Because $\mu_U > \mu_L$, we note that $(n-1) \cdot \frac{\mu_U + 3\mu_L}{4} + \mu_U <  n \cdot \frac{3\mu_U + \mu_L}{4}$ for $n \ge 2$. This combined with Equations  \eqref{eq:welfare_n+1} and \eqref{eq:welfare_j} implies that for any $j \in [n]$, conditioned on event $E$,
     \[
       \sum_{i=1}^n V_i(A_{j}^*(i)) - R  < \sum_{i=1}^n V_i(A_{n+1}^*(i)) - R.
     \]
     Now we need to show the same result for $j > n+1$. For any apartment $j$ and any assignment $A_j$, $\sum_{i=1}^n V_i(A_{j}(i)) - R  \le MUW(j)$. Furthermore, event $E_2$ implies that $\sum_{i=1}^n V_i(A_{n+1}^*(i)) - R = MUW(n+1)$. Let $A_j^*$ be a welfare-maximizing assignment in apartment $j$. Putting this together with event $E_0$ and that the apartments are ordered by MUW, we have that conditioned on events $E_0$ and $E_2$, for any $j > n+1$,
     \begin{align*}
         \sum_{i=1}^n V_i(A_{j}^*(i)) - R  &\le MUW(j) \\
         &< MUW(n+1) \\
         &=  \sum_{i=1}^n V_i(A_{n+1}^*(i)) - R.
     \end{align*}
     We have now exactly shown that the second condition for event $\mathcal{E}$ is satisfied conditioned on event $E$. Therefore, we have shown that $E \subseteq \mathcal{E}$. 
     
     Now all that remains to be done is to lower bound the probability of event $E$. Note that by construction, event $E_2$ is independent of events $E_1$ and $E_3$. Using this with the fact that $\Pr(E_0) = 1$ and Equations \eqref{eq:E1}, \eqref{eq:E2}, \eqref{eq:E3}, we have that
    \begin{align*}
        \Pr(E_0 \cap E_1 \cap E_2 \cap E_3) &= \Pr(E_2)\cdot \Pr(E_1)  \cdot \Pr(E_3 | E_1) \\
        &\ge \frac{n!}{n^n} \cdot \Pr_{x \sim \mathcal{D}}\left(x \ge \frac{3\mu_U + \mu_L}{4}\right)^{n(n+1)} \cdot\left(\frac{1}{n}\Pr_{x \sim \mathcal{D}}\left(x \le \frac{\mu_U + 3\mu_L}{4}\right)^{n-1}\right)^{n^2} \\
        &:= p(n).
    \end{align*}
    Therefore, we can also conclude that $\Pr(\mathcal{E}) \ge \Pr(E) \ge p(n)$ which only depends on the distribution $\mathcal{D}$ and $n$.

\end{proof}

\subsection{Proof of Corollary \ref{cor:probabilistic_uef}}

\begin{proof}
        Let $\ell_1,\ell_2,...$ be an a strictly increasing sequence of integers greater than $m_0$ such that $\argmax_{j \in [\ell_t]} MUW(j) = \ell_t$ for all $t$. Define $\mathcal{F}_t$ as the event from \ref{def:event_mathcal_F} considering only the first $\ell_t$ apartments. By definition, $\mathcal{F}_t$ only depends on the mapping of values in apartment $\ell_t$. Therefore event $\mathcal{F}_{t}$ and $\mathcal{F}_{t'}$ are independent for any $t \ne t'$. By Lemma \ref{lemma:muw}, $\Pr(\mathcal{F}_t) = \frac{n!}{n^n}$ for all $t$. Define $T$ as the random variable that is the smallest $t$ such that $\mathcal{F}_t$ holds. $T$ is a geometric random variable with probability $\frac{n!}{n^n}$, therefore with probability $1$, $T$ is finite. By definition of $\mathcal{F}_T$, there must exist a universal envy-free solution for apartments $1,..,\ell_T$. Putting this all together, with probability $1$, there will exist a universal envy-free solution for apartments $1,...,\ell_T$ for some finite $T$. 

        Now we need to show that for any finite $T$, with probability $1$ there exists a strictly increasing sequence $\ell_1,...,\ell_T$ such that $\argmax_{j \in [\ell_t]} MUW(j) = \ell_t$. Denote the distribution of the $MUW$ of an apartment with randomly drawn values as $\mathcal{M}$. As before, define $\mu_U= \inf \left\{ x : \Pr_{X \sim \mathcal{D}} (X \le x) = 1\right\}$. By definition, if $Z \sim \mathcal{M}$ then $Z \le n \cdot \mu_U-R$. Because $\mathcal{D}$ is continuous, $\mathcal{M}$ is also continuous, which implies $\Pr_{Z \sim \mathcal{M}}(Z < n \cdot \mu_U - R) = 1$. Furthermore, if $\epsilon > 0$, then $\Pr_{Z \sim \mathcal{M}}(Z > n \cdot \mu_U - R- \epsilon) > 0$. This implies that for any finite $T$,  with probability $1$ there will be a sequence $\ell_1,\ell_2,...,\ell_T$ such that $\ell_t = \argmax_{j \in [\ell_t]} MUW(j)$ for all $t \le T$. Combining this with the result in the first paragraph gives the desired result.
\end{proof}

\subsection{Checking Existence of a Universal Envy-free Solution}\label{app:check_uef}

We can check whether a given multi-apartment rent division instance has a universal envy-free solution using the following linear program. Note that the  linear program has no objective, as we are only checking for feasibility.
\begin{align*}
    & \max_{j^*, P} \mathbf{0} \\
    \text{s.t. } & \sum_i P(A_j(i)) = R_j & \forall \: j & \qquad \text{[payments equal rent]} \\
    & V_i(A_{j^*}(i)) - P(A_{j^*}(i)) \geq V_i(A_j(i')) - P(A_j(i')) & \forall \: i, i', j & \qquad \text{[universal EF]} \\
\end{align*}

\section{Proofs of Lemmas from Section \ref{sec:REF}}\label{app:proof_of_REF_lemmas}

\subsection{Proof of Lemma \ref{lemma:highest_max_welfare}}\label{app:highest_max_welfare}

Suppose that $(A, P)$ is a partial solution satisfying consensus. We first prove that consensus can only occur in an apartment $j$ if for all $j' \neq j$,
\begin{equation}\label{eq:max_welfare_assignment}
    \sum_{i = 1}^n V_i(A_j(i)) - R_j \geq \sum_{i = 1}^n V_i(A_{j'}(i)) - R_{j'}.
\end{equation}

Consider any apartment $j$ such that Equation \eqref{eq:max_welfare_assignment} does not hold. This implies that there is some apartment $j'$ such that $\sum_{i = 1}^n V_i(A_{j'}(i)) - R_{j'} > \sum_{i = 1}^n V_i(A_j(i)) - R_j$. This implies that 
\[
    0 < \sum_{i = 1}^n V_i(A_{j'}(i)) - R_{j'} - \sum_{i = 1}^n V_i(A_j(i)) - R_j =  \sum_{i = 1}^n U_i(A_{j'}(i)) - \sum_{i = 1}^n U_i(A_{j}(i)).
\]
By the pigeonhole principle, at least one player must have strictly higher utility in $j'$ than $j$. Therefore, $j$ cannot be a consensus apartment.

We now prove that if Equation \eqref{eq:max_welfare_assignment} holds for apartment $j$, then apartment $j$ must be a consensus apartment. Let $j^*$ be the consensus apartment under $(A,P)$. We will show that if apartment $j$ satisfies Equation \eqref{eq:max_welfare_assignment}, then for all $i$,
\begin{equation}\label{eq:contra}
    U_i(A_{j}, P) = U_i(A_{j^*}, P),
\end{equation}
which directly implies that $j$ must be a consensus apartment. Proof by contradiction. Assume that apartment $j$ satisfies Equation \eqref{eq:max_welfare_assignment} but does not satisfy Equation \eqref{eq:contra}.  Because $j^*$ is a consensus apartment, by the first part of this proof we have that $\sum_{i = 1}^n V_i(A_{j^*}(i)) - R_{j^*} \geq \sum_{i = 1}^n V_i(A_{j}(i)) - R_{j}$. This combined with the assumption that Equation \eqref{eq:max_welfare_assignment} holds for apartment $j$ implies that $\sum_{i = 1}^n V_i(A_{j}(i)) - R_{j} = \sum_{i = 1}^n V_i(A_{j^*}(i)) - R_{j^*}$, which immediately gives $\sum_{i = 1}^n U_i(A_{j}, P) = \sum_{i = 1}^n U_i(A_{j^*}, P)$. If Equation \eqref{eq:contra} does not hold, then there exists some player $i'$ such that $U_{i'}(A_{j^*}, P) > U_{i'}(A_{j}, P)$. This implies that
\[
    \sum_{i \neq i'} U_i(A_{j}(i)) - \sum_{i \neq i} U_i(A_{j^*}(i)) > 0.
\]
By the pigeonhole principle, at least one player (which is not $i'$) must therefore have strictly higher utility in $j$ than $j^*$. This is a contradiction with the fact that $j^*$ is a consensus apartment. 

\subsection{Proof of Lemma \ref{obs:ref}}\label{app:trading_equivalence}

\begin{lemma}[``If" direction of Lemma \ref{obs:ref}]\label{lemma:if_direction}
        A partial solution $(A,P)$ is reachable by negotiations from an individual envy-free starting partial solution $(A,Q)$ if $\sum_{j = 1}^m P(A_j(i)) = \sum_{j = 1}^m Q(A_j(i))$ for all players $i$.
\end{lemma}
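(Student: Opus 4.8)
The plan is to recast the claim as a statement about matrices and then give an explicit, apartment-by-apartment construction of the required negotiations. Write $p_{ij} = P(A_j(i))$ and $q_{ij} = Q(A_j(i))$ for the rent paid by player $i$ on their assigned room in apartment $j$. A single negotiation $(\delta, i_1, i_2, j_1, j_2)$ alters exactly the four entries $(i_1,j_1),(i_1,j_2),(i_2,j_1),(i_2,j_2)$ by $+\delta,-\delta,-\delta,+\delta$, so it preserves every row sum $\sum_j p_{ij}$ (the total rent of player $i$) and every column sum $\sum_i p_{ij}$ (which equals $R_j$ for a valid price matrix). Since both $(A,Q)$ and the target $(A,P)$ are valid price matrices they share the column sums $R_j$, and the hypothesis $\sum_j p_{ij} = \sum_j q_{ij}$ gives equal row sums; hence the difference matrix $D$ with entries $p_{ij} - q_{ij}$ has all row sums and all column sums equal to $0$. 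Note that the individual envy-freeness of $Q$ is not used in the construction; it is merely carried along from the hypothesis.

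I would then eliminate $D$ one apartment (column) at a time, always routing the adjustments through apartment $m$ as a \emph{buffer}. Fix a pivot player, say player $n$, and process apartments $j = 1,2,\dots,m-1$ in order. When processing apartment $j$, for each player $i \in \{1,\dots,n-1\}$ with current discrepancy $d_i := p_{ij} - q^{\mathrm{cur}}_{ij} \neq 0$, apply one negotiation between players $i$ and $n$ and between apartments $j$ and $m$: if $d_i > 0$ take $(\delta, i_1, i_2, j_1, j_2) = (d_i, i, n, j, m)$, and if $d_i < 0$ take $(-d_i, n, i, j, m)$. In either case $\delta > 0$ and player $i$'s rent in apartment $j$ moves exactly to $p_{ij}$. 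Crucially, each such negotiation touches only apartments $j$ and $m$, so it leaves the already-matched apartments $1,\dots,j-1$ and the not-yet-processed apartments $j+1,\dots,m-1$ untouched.

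It remains to argue that the two ``free'' coordinates fall into place automatically. After the inner loop for apartment $j$, players $1,\dots,n-1$ match $p_{\cdot j}$; since every negotiation keeps the column sum at $R_j$ and $\sum_{i=1}^n p_{ij} = R_j$, player $n$'s rent in apartment $j$ is forced to equal $p_{nj}$ as well, so all of apartment $j$ now matches $P$. Once apartments $1,\dots,m-1$ are matched, consider apartment $m$: because every negotiation preserved each player's total rent, $\sum_{j=1}^m (\text{current }p_{ij}) = \sum_{j=1}^m q_{ij} = \sum_{j=1}^m p_{ij}$, and subtracting the $m-1$ already-matched terms forces the apartment-$m$ entry to equal $p_{im}$ for every $i$. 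Thus the final price matrix equals $P$, so $(A,P)$ is reachable by negotiation from $(A,Q)$.

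The hard part is not any single calculation but the bookkeeping: confirming that buffering through apartment $m$ never disturbs an already-matched apartment, and that the last player of each apartment and the entire apartment $m$ are matched for free rather than by explicit negotiation. The two zero-sum invariants (row sums and column sums) are precisely what drive these automatic matches, which is why I would isolate them at the very start. The degenerate cases are immediate: if $m = 1$ or $n = 1$, the validity and total-rent constraints already force $P = Q$, and the empty sequence of negotiations suffices.
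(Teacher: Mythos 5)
Your proof is correct and follows essentially the same route as the paper's: both eliminate the discrepancy one apartment at a time, routing every adjustment through apartment $m$ as a buffer, and both rely on the column-sum (total rent per apartment) and row-sum (total rent per player) invariants to force the last entries of each apartment, and all of apartment $m$, to match automatically. Your only deviation is making explicit the negotiations whose existence the paper asserts from the balance of positive and negative discrepancies\,---\,you route everything through a fixed pivot player $n$\,---\,which is a cosmetic refinement rather than a different argument.
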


\begin{proof}
    As in the problem statement, let $(A,Q)$ be an individually envy-free partial solution and $(A,P)$ be a partial solution such that $ \sum_{j = 1}^m P(A_j(i)) = \sum_{j = 1}^m Q(A_j(i))$ for all players $i$. We will provide a constructive algorithm for finding a series of negotiations that transform the partial solution $(A,Q)$ into the partial solution $(A,P)$. Specifically, we will find a sequence of partial solutions $(A,Q), (A,Q_1), (A,Q_2),...,(A,Q_{m-1})$ such that the partial solution $(A,Q_1)$ is reachable by negotiations from $(A,Q)$, the partial solution $(A,Q_j)$ is reachable from $(A,Q_{j-1})$ for all $j \le m-1$, and $(A,Q_{m-1}) = (A,P)$.

    For any price matrix $P'$, define the $n$ by $m$ matrix $\Delta_P(P')$ as
    \[
        \Delta_P(P')_{ij} =  P'(A_i(j)) - P(A_i(j)).
    \]
    Because $P$ and $P'$ are both valid price matrices, the sum of the prices within any apartment is equal to the rent of that apartment for both $P$ and $P'$, which implies that for all $j$,
    \begin{equation}\label{eq:within_apt}
        \sum_{i=1}^n \Delta_P(P')_{ij} = 0.
    \end{equation}
    If we apply Equation \eqref{eq:within_apt} to apartment $1$ and price matrix $Q$, we observe that $ \sum_{i=1}^n \Delta_P(Q)_{i1} = 0$. This implies that 
    \begin{equation}
        \sum_{i=1, \Delta_P(Q)_{i1} > 0}^n \Delta_P(Q)_{i1} = -\sum_{i=1, \Delta_P(Q)_{i1} < 0}^n \Delta_P(Q)_{i1}.
    \end{equation}
    Therefore, there must exist a sequence of negotiations that occur only between apartment $1$ and apartment $m$ that result in a partial solution $(A,Q_1)$ such that $\Delta_P(Q_1)_{i1} = 0$ for all $i$. By definition of a valid negotiation, it must still be true that $ \sum_{j = 1}^m P(A_j(i)) = \sum_{j = 1}^m Q_1(A_j(i))$ after each negotiation. Furthermore, by construction we have $Q_1(A_i(1)) = P(A_i(1))$ for all players $i$.

    The same logic allows us to conclude that there exists a sequence of negotiations between apartment $2$ and apartment $m$ that transform the partial solution $(A,Q_1)$ into the partial solution $(A,Q_2)$ such that  $ \sum_{j = 1}^m P(A_j(i)) = \sum_{j = 1}^m Q_2(A_j(i))$ and $\Delta_P(Q_2)_{i2} = 0$ for all players $i$. Note that by this construction, in the partial solution $(A,Q_2)$, we have  $Q_2(A_i(j)) = P(A_i(j))$ for all players $i$ and $j \le 2$. 
    
    We can recursively continue this negotiating process between apartments $j$ and $m$ up until $j = m-1$. At that point, we will have a partial solution $(A,Q_{m-1})$ reachable by negotiations from $(A,Q)$ such that the following two equations hold:
    \begin{equation}\label{eq:mminusone}
        \sum_{j = 1}^m P(A_j(i)) = \sum_{j = 1}^m Q_{m-1}(A_j(i)) \quad \forall i
    \end{equation}
    \begin{equation}\label{eq:mminusoneb}
        Q_{m-1}(A_i(j)) = P(A_i(j)) \quad \forall i, \forall j \le m-1
    \end{equation}
    Note that plugging Equation \eqref{eq:mminusoneb} into Equation \eqref{eq:mminusone} gives the following result.
    \[
        Q_{m-1}(A_i(m)) = P(A_i(m)) \quad \forall i
    \]
    Therefore, we have shown that $Q_{m-1} = P$. Since $(A,Q_{m-1})$ was reachable by negotiations from $(A,Q)$, this implies that $(A,P)$ is reachable by negotiations from $(A,Q)$, which is the result we wanted to show.
\end{proof}

\subsection{Proof of Property \ref{prop:pareto}}
\begin{proof}
    We know that an envy-free solution in a single apartment $j$ must include a welfare-maximizing assignment for $j$~\cite{GMPZ17}. A negotiated envy-free solution $(A, P, j^*)$ thus must include a welfare-maximizing assignment in every apartment, as the assignment $A$ satisfies for some price matrix that $(A,Q)$ is individually envy-free. From Lemma \ref{lemma:highest_max_welfare}, we know that consensus can only occur in an apartment with the highest welfare across all apartments. Because $A$ must contain only welfare-maximizing assignments and $j^*$ must be the apartment with the highest welfare over all apartments, assignment $A_{j^*}$ must achieve the maximum welfare among all possible assignments and apartments. This implies that $(A, P, j^*)$ is Pareto optimal.
\end{proof}

\subsection{Proof of Property \ref{prop:ir}}
\begin{proof}
    Let $(A, P, j^*)$ be a solution that satisfies negotiated envy-freeness, and consider the bundle of rooms $\{A_j(i)\}_{j=1}^m$ which is assigned to player $i$ under assignment $A$. We say that player $i$'s total utility for the bundle $\{A_j(i')\}_{j=1}^m$ of player $i'$ is equal to the sum of the utilities that player $i$ has for each room in $\{A_j(i')\}_{j=1}^m$. In an individually envy-free partial solution $(A,Q)$, every player $i$ has weakly higher utility for her own bundle $\{A_j(i)\}_{j=1}^m$ than any other player's bundle $\{A_j(i')\}_{j=1}^m$. By definition of negotiated envy-freeness, every player in the partial solution $(A,P)$ has the same total utility for every bundle as in some individually envy-free partial solution $(A,Q)$. This implies that in partial solution $(A,P)$, player $i$ has weakly higher total utility for the bundle $\{A_j(i)\}_{j=1}^m$ than the bundle $\{A_j(i')\}_{j=1}^m$. The sum of each player's utility over all players' bundles is $0$ by the assumption that each player's total value for all rooms is equal to the total rent. This implies that player $i$'s average utility over all bundles is $0$, which in turn means that her value for $\{A_j(i)\}_{j=1}^m$, her favorite bundle, must be at least $0$ as well. Finally, we know that $j^*$ is a consensus apartment, which means that player $i$'s favorite room within her bundle is in apartment $j^*$. By the same averaging argument, we conclude that player $i$ has non-negative utility for her assigned room in apartment $j^*$, or equivalently that $U_i(A_{j^*}(i)) \geq 0$.
\end{proof}

\subsection{Proof of Property \ref{prop:single}}
\begin{proof}
    In the single-apartment setting, consensus is always satisfied and no negotiations are possible. Therefore, any envy-free solution within the one apartment satisfies consensus and hence satisfies negotiated envy-freeness. For the other direction, any negotiated envy-free solution must be an individually envy-free solution as no negotiations are possible, and individual envy-freeness implies envy-freeness in the lone apartment.
\end{proof}
\subsection{Proof of Theorem \ref{thm:ref_existence}}\label{app:proof_of_ref_existence}
\begin{proof}
    Our proof is by construction of such a solution $(A,P^*,j^*)$ for any multi-apartment rent division instance. Note that throughout the proof, we will be referring to the same assignment $A$, but different price matrices $Q, P,$ and $P^*$. 
    
    We begin with a partial solution $(A, Q)$ which is individually envy-free. Such a partial solution $(A,Q)$ always exists because an envy-free solution always exists in the one apartment setting. As in the proof of individual rationality (Property~\ref{prop:ir}), we define player $i$'s bundle as $\{A_j(i)\}_{j=1}^m$, i.e., the set of rooms assigned to player $i$ across all apartments. The price of each bundle will be the sum of prices of rooms in that bundle; for notational convenience, we will overload the operator $P$ and let $P(\{A_j(i)\}_{j=1}^m)=\sum_{j = 1}^m P(A_j(i))$. Our goal will be to find a price matrix $P^*$ such that $Q(\{A_j(i)\}_{j=1}^m) = P^*(\{A_j(i)\}_{j=1}^m)$ and there exists a consensus apartment $j^*$ in the partial solution $(A, P^*)$. This would then give a negotiated envy-free solution $(A,P^*, j^*)$.

    Let $\mathcal{P}_j$ be the set of all price vectors in apartment $j$ that add up to the total rent. First, consider the partial solution $(A, P)$ where $P$ satisfies for all apartments $j$,
    \[
        P_j \in \arg\max_{P'_j \in \mathcal{P}_j} \min_i V_i(A_j(i)) - P'_j(A_j(i)).
    \]
    This partial solution guarantees that for each apartment $j$, the utilities of all players for their assigned rooms in $j$ will be equal. Therefore, the apartment $j^*$ with the highest value of $\min_i V_i(A_{j^*}(i)) - P_{j^*}(A_{j^*}(i))$ will be a consensus apartment. 
    
    Unfortunately, it is not necessarily true that $Q(\{A_j(i)\}_{j=1}^m) = P(\{A_j(i)\}_{j=1}^m)$ for all $i$. Therefore, we will construct another price assignment $P^*$ from $P$ that has this desired property. For every player $i$, define $X_i = Q(\{A_j(i)\}_{j=1}^m) - P(\{A_j(i)\}_{j=1}^m)$.
    
    We define the new price matrix $P^*$ such that $P^*(A_j(i)) = P_i + \frac{X_i}{m}$ for every $i,j$. In other words, we increase each of player $i$'s prices in $P$ by $\frac{X_i}{m}$. By this construction, we have that $P^*(A_j(i)) - P(A_j(i)) = P^*(A_{j'}(i)) - P(A_{j'}(i)) \quad \forall j, j' \in [m], \forall i \in [n]$.
    
    This implies that each player's preferences over rooms are the same in $P$ and $P^*$, and therefore the consensus apartment $j^*$ for $(A, P)$ is also a consensus apartment for $(A, P^*)$. Furthermore, we have for every $i$ that
    \begin{align*}
        P^*(\{A_j(i)\}_{j=1}^m) &= \sum_{j = 1}^m \left( P(A_j(i)) + \frac{X_i}{m} \right) \\
        &= X_i + P(\{A_j(i)\}_{j=1}^m) \\
        &= Q(\{A_j(i)\}_{j=1}^m)
    \end{align*}
    as desired. 
    
    Our final step is to show that $P^*$ is a valid price matrix in the sense that the prices in each apartment $j$ add up to the rent $R_j$. Because we started from a valid price matrix $P$, for each apartment $j$ we have that
    \[
        \sum_{i = 1}^n P^*(A_j(i)) = \sum_{i = 1}^n \left(P(A_j(i)) + \frac{X_i}{m} \right) = R_j + \frac{1}{m} \sum_{i =1}^n X_i.
    \]
    Therefore, it suffices to prove that $\sum_{i =1}^n X_i = 0$. We know that both $P$ and $Q$ are valid price matrices, and so we can conclude that
    \begin{align*}
        \sum_{i =1}^n X_i &= \sum_{i =1}^n Q(\{A_j(i)\}_{j=1}^m) - \sum_{i =1}^n P(\{A_j(i)\}_{j=1}^m)  \\
        &= \sum_{i =1}^n \sum_{j=1}^m Q(A_j(i)) -  \sum_{i =1}^n  \sum_{j=1}^m P(A_j(i)) \\
        &= \sum_{j=1}^m \sum_{i =1}^n Q(A_j(i)) - \sum_{j=1}^m \sum_{i =1}^n   P(A_j(i)) \\
        &= \sum_{j =1}^m R_j -  \sum_{j =1}^m R_j \\
        &= 0.
    \end{align*}
    Thus, $P^*$ is a valid price matrix. In summary, we have shown that $(A,P^*)$ is a valid partial solution, that $(A,P^*)$ satisfies consensus, and that $Q(\{A_j(i)\}_{j=1}^m) = P^*(\{A_j(i)\}_{j=1}^m)$. Therefore, letting $j^*$ be a consensus apartment for $(A,P^*)$, we have that $(A,P^*,j^*)$ is a solution that satisfies negotiated envy-freeness.
\end{proof}

\subsection{Proof of Lemma \ref{lemma:multi_second_welfare}}\label{app:multi_second_welfare}
\begin{proof}
    Define $P'$ as follows:
    \[
        P'(A'_j(i)) = V_i(A'_j(i)) - V_i(A_j(i)) + P(A_j(i)).
    \]
    We will prove the following three claims regarding $P'$, which will together prove the theorem statement. First, we show that $P'$ is a valid price matrix, i.e. for all apartments $j$, $\sum_i P'(A'_j(i)) = R_j$. Next, we show that for all players $i$ and apartments $j$, $U_i(A_j, P) = U_i(A_j', P')$. Finally, we show that $(A', P', j^*)$ satisfies negotiated envy-freeness. 

    We first show that $P'$ is a valid price matrix. For every $j$, we have
    \begin{align*}
        \sum_{i = 1}^n P'(A'_j(i)) &= \sum_{i = 1}^n V_i(A'_j(i)) - \sum_{i = 1}^n V_i(A_j(i)) + \sum_{i = 1}^n P(A_j(i)) \\
        &= \sum_{i = 1}^n P(A_j(i)) \\
        &= R_j 
    \end{align*}
    where we used that $A, A'$ are  welfare-maximizing assignments and that $P$ is a valid price matrix. 

    Next, we show that $U_i(A_j, P) = U_i(A_j', P')$ for all $i$ and $j$. By our choice of $P'$, we have
    \[
        U_i(A_j, P) = V_i(A_j(i)) - P(A_j(i)) = V_i(A_j'(i)) - P'(A_j'(i)) = U_i(A_j', P')
    \]
    as desired. Note that this implies that $(A', P', j^*)$ satisfies consensus because $(A, P, j^*)$ satisfies consensus.

    We next show that $(A', P', j^*)$ satisfies negotiated envy-freeness. We already showed that $(A',P',j^*)$ satisfies consensus. As $(A, P, j)$ satisfies negotiated envy-freeness, we know that there must be some price matrix $Q$ such that $(A, Q)$ is individually envy-free and $\sum_j Q(A_j(i)) = \sum_j P(A_j(i))$. Through repeated applications of Lemma \ref{lemma:second_welfare}, we observe that $(A', Q)$ is individually envy-free as well. In order to show that $(A', P',j^*)$ satisfies negotiated envy-freeness, it therefore suffices to show that for all players $i$, $\sum_{j = 1}^m P'(A'_j(i)) = \sum_{j = 1}^m Q(A'_j(i))$. 
    
    Because $A', A$ are welfare maximizing assignments and $(A,Q)$ is individually envy-free, Lemma \ref{lemma:second_welfare} implies that for any $j$, $V_i(A'_j(i)) - Q(A'_j(i)) = V_i(A_j(i)) - Q(A_j(i))$ (which is used in the final line below). Furthermore, recall that we chose $Q$ such that $\sum_j Q(A_j(i)) = \sum_j P(A_j(i))$ (which is used in the second line below). By construction of $P'$ and $Q$, we can therefore conclude that
    \begin{align*}
        \sum_{j = 1}^m P'(A'_j(i)) &= \sum_{j = 1}^m \big(V_i(A'_j(i)) - V_i(A_j(i)) + P(A_j(i))\big) \\
        &= \sum_{j = 1}^m \big(V_i(A'_j(i)) - V_i(A_j(i)) + Q(A_j(i))\big) \\
        &= \sum_{j = 1}^m Q(A'_j(i)).
    \end{align*}
    Therefore, we have shown that $(A',P',j^*)$ satisfies negotiated envy-freeness.
\end{proof}

\subsection{Proof of Theorem \ref{thm:max_obj}}\label{app:thm:max_obj}
\begin{proof}

In order to prove Theorem \ref{thm:max_obj}, we recall the following key result from the single apartment setting which implies that we can start from any welfare-maximizing assignment when optimizing over envy-free solutions. This is necessary because the number of welfare-maximizing assignments could potentially be exponential in $n$. For the reader's convenience, we will state and give a proof sketch of this result below.

\begin{lemma}[2nd Welfare Theorem \cite{GMPZ17,mas1995microeconomic}]\label{lemma:second_welfare}
    For a single apartment $j$, if $(A_j, P_j)$ is an envy-free solution and $A_j'$ is a welfare-maximizing assignment, then for all $i \in [n]$,
    \[
        V_i(A_j(i)) - P(A_j(i)) = V_i(A_j') - P(A_j'(i)).
    \]
    This further implies that $(A_j', P_j)$ is an envy-free solution.
\end{lemma}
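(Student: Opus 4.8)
The plan is to exploit the two defining hypotheses separately: that $(A_j, P_j)$ is envy-free and that $A_j'$ is welfare-maximizing. First I would apply envy-freeness of $(A_j, P_j)$ to the particular room $A_j'(i)$ that player $i$ receives under the alternative assignment. Since $A_j'(i)$ is some room of the apartment (hence occupied by some player under $A_j$), envy-freeness yields, for every player $i$,
\[
    V_i(A_j(i)) - P(A_j(i)) \ge V_i(A_j'(i)) - P(A_j'(i)).
\]
Summing these $n$ inequalities over all players is the key move. Because both $A_j$ and $A_j'$ are bijections from players to rooms, the room multisets $\{A_j(i)\}_i$ and $\{A_j'(i)\}_i$ each coincide with the full set of rooms in apartment $j$, so $\sum_i P(A_j(i)) = \sum_i P(A_j'(i)) = R_j$. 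The price terms therefore cancel in the sum, leaving $\sum_i V_i(A_j(i)) \ge \sum_i V_i(A_j'(i))$.

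The second ingredient reverses this inequality. Since welfare equals $\sum_i V_i(\cdot) - R_j$ and $R_j$ is fixed, maximizing welfare is the same as maximizing $\sum_i V_i(\cdot)$; as $A_j'$ is welfare-maximizing and $A_j$ is merely one feasible assignment, we get $\sum_i V_i(A_j'(i)) \ge \sum_i V_i(A_j(i))$. Combining the two directions forces equality of the sums, hence $\sum_i \big(V_i(A_j(i)) - P(A_j(i))\big) = \sum_i \big(V_i(A_j'(i)) - P(A_j'(i))\big)$. Now I would invoke the standard ``a sum of nonnegative terms that equals zero must have every term zero'' argument: each summand on the left dominates its counterpart on the right by the displayed envy-freeness inequality, yet the totals agree, so every inequality holds with equality. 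This establishes the claimed per-player identity $V_i(A_j(i)) - P(A_j(i)) = V_i(A_j'(i)) - P(A_j'(i))$ for all $i$.

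For the final assertion that $(A_j', P_j)$ is itself envy-free, I would fix players $i, i'$ and chain the per-player identity with a second application of envy-freeness of $(A_j, P_j)$, this time against the room $A_j'(i')$:
\[
    V_i(A_j'(i)) - P(A_j'(i)) = V_i(A_j(i)) - P(A_j(i)) \ge V_i(A_j'(i')) - P(A_j'(i')),
\]
which is precisely the envy-freeness condition for $(A_j', P_j)$. I do not anticipate a genuine obstacle; the only step requiring care is the price cancellation, which hinges on recognizing that both assignments are bijections realizing the same total rent $R_j$ — this is exactly what reduces the welfare comparison to a clean comparison of total values and lets the tight-inequality argument go through.
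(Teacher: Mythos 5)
Your proposal is correct and follows essentially the same argument as the paper: envy-freeness gives the per-player inequality $U_i(A_j,P_j)\ge U_i(A_j',P_j)$, welfare-maximization of $A_j'$ gives the reverse inequality in aggregate (your explicit price cancellation is exactly why the paper can compare total utilities directly), so all inequalities are tight, and envy-freeness of $(A_j',P_j)$ then follows by chaining the equality with envy-freeness of $(A_j,P_j)$.
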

\begin{proof}
    It holds that $U_i(A_j, P_j) \ge U_i(A_j', P_j)$ for all players $i$ because $(A_j,P_j)$ is envy-free. We also know that because $A_j'$ is a welfare-maximizing assignment, $\sum_{i=1}^n U_i(A_j,P_j) \le \sum_{i=1}^n U_i(A_j', P_j)$. Therefore, no player $i$ can have $U_i(A_j, P_j) > U_i(A_j', P_j)$, as this would imply that there is some player $i'$ such that  $U_{i'}(A_j, P_j) < U_{i'}(A_j', P_j)$. Therefore, we can conclude that $U_i(A_j, P_j) = U_i(A_j', P_j)$  for all players $i$. Since $(A_j, P_j)$ was an envy-free solution, these utilities being equal implies that $(A_j', P_j)$ must be an envy-free solution as well.
\end{proof}

In Lemma \ref{lemma:second_welfare}, if $(A_j,P_j)$ is envy-free, then a different maximum-welfare assignment $A'_j$ immediately yields an envy-free solution $(A'_j, P_j)$ where all players have the same utilities. However, in the multi-apartment setting, if $(A,P,j^*)$ satisfies negotiated envy-freeness, we cannot simply use a different welfare-maximizing assignment $A'$ with the same price matrix to find a negotiated envy-free solution $(A', P, j^*)$. Surprisingly, however, we are able to construct another price matrix $P'$ such that $(A', P', j^*)$ and $(A,P,j^*)$ have the same utilities for every player in their assigned room in every apartment. This is exactly the result of Lemma \ref{lemma:multi_second_welfare}.

    For a rent division instance $V$, we are able to find some maximum welfare assignment $A$ in polynomial time using maximum weight bipartite matching separately in each apartment. Fix any apartment $j'$. We can find the optimal price matrix $P$ for assignment $A$ subject to negotiated envy-freeness with $j'$ as the consensus apartment in polynomial time using the following linear program. Note that this linear program will have no solution if there does not exist a price matrix $P$ such that $j'$ is a consensus apartment for $(A,P)$.

    \begin{align*}\label{lp:1}
        \max_{P,Q} \quad & Z \\
        \text{s.t.} \quad & Z \leq f_q(U_1(A_{j'},P),...,U_n(A_{j'},P)) & \forall \: q & \qquad  \text{[min over $f_q$]}\\
        & V_i(A_{j'}(i)) - P(A_{j'}(i)) \geq V_i(A_{j}(i)) - P(A_{j}(i)) & \forall \: i, j & \qquad \text{[consensus in $j'$]}\\
        & V_i(A_j(i)) - Q(A_j(i)) \geq V_i(A_j(i')) - Q(A_j(i')) & \forall \: i, i', j & \qquad \text{[$Q$ is individually EF]} \\
        & \sum_j P(A_j(i)) = \sum_j Q(A_j(i))& \forall \: i & \qquad \text{[negotiated EF]} \\
        & \sum_i P(A_j(i)) = R_j & \forall \: j & \qquad \text{[payments equal rent]} \\
        & & \tag{LP1}
    \end{align*}
     In order to find the optimal value of the objective function over all negotiated envy-free solutions with assignment $A$, we also need to optimize the objective over all choices of apartment $j'$. We could simply run this linear program for all $j' \in [m]$ and choose the $j'$ that achieves the highest objective value. However, we will instead choose $j'$ in such a way that we only need to run this linear program once. From Lemma \ref{lemma:highest_max_welfare}, we know that if $(A,P)$ satisfies consensus, then apartment $j_A^*$ is a consensus apartment for $(A,P)$ if and only if $\sum_{i = 1}^n V_i(A_{j_A^*}(i)) - R_{j_A^*} = \max_j \sum_{i = 1}^n V_i(A_{j}(i)) - R_{j}$. Furthermore, Lemma \ref{lemma:highest_max_welfare} implies that for any two consensus apartments $j_1,j_2$ in $(A,P)$, every player must have the same utility in $j_1$ as they do in $j_2$. Therefore, in the linear program we can set $j'$ equal to any $j_A^*$ which satisfies $\sum_{i = 1}^n V_i(A_{j_A^*}(i)) - R_{j_A^*}\geq \max_j \sum_{i = 1}^n V_i(A_{j}(i)) - R_{j}$, and this will be equivalent to optimizing over all choices of $j'$.
    
    Finally, we show that the optimal solution assuming an arbitrary maximum welfare assignment $A$ is in fact a globally optimal solution over all negotiated envy-free solutions. Let the globally optimal solution over all negotiated envy-free solutions be $(A^*, P^*, j^*)$. By Lemma \ref{lemma:multi_second_welfare}, we know there exists a $P'$ such that $U_i(A^*_j, P^*) = U_i(A_j, P') \: \forall i, j$. Note that this implies that $j^*$ is a consensus apartment for $(A, P')$. Then, because $(A,P,j_A^*)$ was the optimal solution for all negotiated envy-free solutions with assignment $A$, we know that
    \begin{align*}
        &f_q(U_1(A_{j_A^*},P),...,U_n(A_{j_A^*},P)) \\
        &\geq f_q(U_1(A_{j^*},P'),...,U_n(A_{j^*},P')) \\
        &= f_q(U_1(A^*_{j^*},P^*),...,U_n(A^*_{j^*},P^*))
    \end{align*}
    Therefore, it suffices to run the linear program with any arbitrary welfare-maximizing assignment $A$. The full algorithm is shown in the body in Algorithm \ref{algo:REF_consensus_opt}.
\end{proof}

\section{Distribution Envy-Freeness}\label{app:distributional}

 Up until this point, we have required that the solution be of the form $(A,P,j)$, where $j$ is the chosen apartment. In this section, we will consider the natural generalization of a distributional solution of the form $(A,P,D)$, where $D$ is a distribution over all $m$ apartments. Generalizing the notion of envy-freeness in the one apartment case, we want to choose a distribution $D$ such that no player is envious of any other player in expectation. Formally, we want to choose a distribution $\mathcal{D}$ such that
\begin{equation}\label{eq:distr_EF_pref}
                \sum_{j=1}^m D[j] \cdot \left(V_i(A_j(i)) - P(A_j(i))\right) \ge \max_{i'} \sum_{j=1}^m D[j] \cdot \left(V_i(A_j(i')) - P(A_j(i'))\right) \quad \forall \: i.
\end{equation}
This requirement alone is actually easy to satisfy, as we could simply choose any individually envy-free partial solution $(A,Q)$. Then for any distribution $D$, no player will be envious in expectation under distributional solution $(A,P,D)$. In fact, equation \eqref{eq:distr_EF_pref} is too easy to satisfy because it does not have any consensus-like requirement that players do not prefer other distributions over apartments. Therefore, we also want to require that every player $i$ prefers the distribution $D$ to any other distribution $D'$ under partial solution $(A,P)$. Formally, we will require that
\begin{equation}\label{eq:dist_cons}
                \sum_{j=1}^m D[j] \cdot \left(V_i(A_j(i)) - P(A_j(i))\right) \ge \max_{D'} \sum_{j=1}^m D'[j]  \left(V_i(A_j(i)) - P(A_j(i))\right) \quad \forall \: i.
\end{equation}
We will define a distributional solution $(A,P,D)$ as\textit{ distribution envy-free (DEF) }if Equations \eqref{eq:distr_EF_pref} and \eqref{eq:dist_cons} are both satisfied. Note that if the solution  $(A,P,j^*)$ is universal envy-free, then the distributional solution $(A,P,D)$, where $D$ chooses apartment $j^*$ with probability $1$, must be distribution envy-free. Furthermore, distribution envy-free and consensus have the following relationship.

\begin{lemma}
    If the distributional solution $(A,P,D)$ is distribution envy-free, then any solution of the form $(A,P,j^*)$ for any $j^*$ with non-$0$ weight under distribution $D$ will satisfy consensus.
\end{lemma}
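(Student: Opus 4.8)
The plan is to use only the consensus-style condition \eqref{eq:dist_cons} of distribution envy-freeness; the across-player condition \eqref{eq:distr_EF_pref} plays no role here. Throughout, abbreviate player $i$'s utility in apartment $j$ by $U_i(A_j,P) = V_i(A_j(i)) - P(A_j(i))$, so that the consensus condition \eqref{eq:consensus} for a candidate apartment $j^*$ reads $U_i(A_{j^*},P) \ge \max_{j \in [m]} U_i(A_j,P)$ for every player $i$. First I would simplify the right-hand side of \eqref{eq:dist_cons}: maximizing $\sum_{j} D'[j]\, U_i(A_j,P)$ over all distributions $D'$ is a linear objective over the probability simplex, so its optimum is attained at a vertex, i.e.\ a degenerate distribution placing all weight on a single apartment. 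Hence $\max_{D'}\sum_{j} D'[j]\, U_i(A_j,P) = \max_{j} U_i(A_j,P)$, and \eqref{eq:dist_cons} becomes, for every player $i$,
\[
    \sum_{j=1}^m D[j]\, U_i(A_j,P) \ \ge\ \max_{j \in [m]} U_i(A_j,P).
\]

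The key observation is then that this inequality is forced to be an equality. Since $D$ is itself a probability distribution, the left-hand side is a convex combination of the values $\{U_i(A_j,P)\}_{j}$ and therefore cannot exceed their maximum. Combined with the displayed inequality, this pins down $\sum_{j} D[j]\, U_i(A_j,P) = \max_{j} U_i(A_j,P)$ for every $i$. I would next invoke the elementary fact that a convex combination equals the maximum of its entries only if every entry receiving positive weight attains that maximum: concretely, for every player $i$ and every apartment $j$ with $D[j] > 0$, we must have $U_i(A_j,P) = \max_{j'} U_i(A_{j'},P)$.

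Finally I would fix any $j^*$ with $D[j^*] > 0$ and read off the conclusion. The previous step, applied with $j = j^*$, gives for every player $i$ that $U_i(A_{j^*},P) = \max_{j'} U_i(A_{j'},P) \ge U_i(A_j,P)$ for all $j$, which is exactly the consensus condition for $(A,P,j^*)$. There is no serious obstacle in this argument; the only point requiring care is recognizing that the two inequalities — one from the distribution-envy-freeness constraint and one from $D$ being a genuine probability distribution — squeeze together into an equality, which in turn forces every apartment in the support of $D$ to be simultaneously utility-maximizing for all players.
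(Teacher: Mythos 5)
Your proof is correct and rests on the same key fact as the paper's: degenerate (point-mass) distributions are admissible competitors in Equation \eqref{eq:dist_cons}, so each player's expected utility under $D$ must be at least her best single-apartment utility, which forces every apartment in the support of $D$ to be utility-maximizing for every player. The paper packages this as a proof by contradiction (constructing the point mass on a strictly preferred apartment), while you run the equivalent direct squeeze argument; the substance is the same.
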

\begin{proof}
    Proof by contradiction. Suppose $(A,P,D)$ is a distributional solution that is distribution envy-free, and there exists a $j^*$ that has positive probability under distribution $D$, but $(A,P,j^*)$ does not satisfy consensus. Because $(A,P,j^*)$ does not satisfy consensus, there must exist some player $i$ and apartment $j' \ne j^*$ such that $U_i(A_{j'}, P) \ge U_i(A_{j}, P)$ for all apartments $j$ and $U_i(A_{j'}, P) > U_i(A_{j^*}, P)$. Consider the distributional solution $(A,P,D')$, where $D'$ with probability $1$ chooses apartment $j'$. Then
\begin{align*}
                \sum_{j=1}^m D[j] \cdot \left(V_i(A_j(i)) - P(A_j(i))\right) <& \sum_{j=1}^m D[j] \cdot \left(V_i(A_{j'}(i)) - P(A_{j'}(i))\right) \\
                &= V_i(A_{j'}(i)) - P(A_{j'}(i)) \\
                &= \sum_{j=1}^m D'[j]  \left(V_i(A_j(i)) - P(A_j(i))\right).
\end{align*}
This violates Equation \eqref{eq:dist_cons} and is a contradiction to the assumption that $(A,P,D)$ is distribution envy-free.
\end{proof}

Distribution envy-freeness can be seen as a compromise between consensus and universal envy-freeness. To better understand the relationship between universal envy-freeness and distribution envy-freeness, recall  Example \ref{example:motivating}. In this example, the two players have symmetric utility functions, but each prefers a different apartment. Despite this symmetry, there does not exist a universal envy-free solution for this example. However, there does exist a distributional solution that is distribution envy-free. One such solution is to assign rents equal to the assigned player's values in every room, and then take the distribution $D$ to be uniform over the two apartments. Under this solution, both players have utility of $0$ for both of their assigned rooms. Furthermore, both players in expectation have utility $0$ for the other player's assigned rooms under distribution $D$. Therefore Equations \eqref{eq:distr_EF_pref} and  \eqref{eq:dist_cons} are both satisfied and this solution is distribution envy-free. Furthermore, this solution is intuitively fair, in the sense that the problem is perfectly symmetric and the solution also treats both players perfectly symmetrically. 

 Unfortunately, a distributional solution that is distribution envy-free does not always exist, as can be seen by Example \ref{example:EEF}. However, Equations \eqref{eq:dist_cons} and \eqref{eq:distr_EF_pref} are both linear constraints on the prices $P$. Therefore, as for universal envy-freeness, it is possible check for existence of a distributional solution satisfying distribution envy-freeness in polynomial time.

\section{Apartment Monotonicity of Negotiated Envy-Freeness}\label{app:monotonicity}

In this section we explore whether adding more apartments always implies higher objective values for the optimal solution under negotiated envy-freeness. Specifically, suppose that we start with $m$ apartments, and consider the optimal solution under negotiated envy-freeness for objective functions $f_1,...,f_t$. We would like to know whether the optimal solution after adding an $m + 1$th apartment always increases in value, and define this guarantee formally as apartment monotonicity.

\begin{definition}
    Let $f_1,...,f_t: \mathbb{R}^{n \times m} \to \mathbb{R}$ be linear functions, where $t$ is polynomial in $n$ and $m$. Let $(A, P, j^*)$ be the solution which maximizes the minimum of $f_q(U_1(A_{j^*},P),...,U_n(A_{j^*},P))$ over all $q \in [t]$ subject to negotiated envy-freeness given instance $V \in \mathbb{M}_{n \times m \times n}(\mathbb{R}^+)$. Let $V' \in \mathbb{M}_{n \times (m + 1) \times n}(\mathbb{R}^+)$ be a valuation matrix such that $V' = \big[V \: \: V_{m+1}\big]$. Finally, let $(A', P', j')$ be the solution which maximizes the minimum of $f_q(U_1(A'_{j'},P'),...,U_n(A'_{j'},P'))$ over all $q \in [t]$ subject to negotiated envy-freeness given instance $V'$. Then we say that $f_1,...,f_t$ satisfies \textbf{apartment monotonicity} if $\: \forall \: V'$,
    \[
        f_q(U_1(A'_{j'},P'),...,U_n(A'_{j'},P')) \geq f_q(U_1(A_{j^*},P),...,U_n(A_{j^*},P)).
    \]
\end{definition}

Note that as we require that players have the same total value for all apartments under consideration, and the players do not change their values for the first $m$ apartments, it must be the case that every player has the same total value for the rooms in apartment $m + 1$. We show below that the maximin objective function does not satisfy apartment monotonicity.

\begin{lemma}
    Let the maximin objective function be the linear functions $f_1,...,f_n$ such that 
    \[
        f_i(U_1(A_{j^*},P),...,U_n(A_{j^*},P) = U_i(A_j^*, P)
    \] 
    for all $i \in [n]$. Then the maximin objective function does not satisfy apartment monotonicity. 
\end{lemma}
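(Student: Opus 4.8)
The plan is to prove the lemma by an explicit counterexample: I exhibit an instance $V$ (with $n=3$) together with one additional apartment whose arrival strictly lowers the optimal maximin value over all negotiated envy-free solutions. Since a strict decrease of the achievable maximin value forces at least one coordinate $f_q$ to decrease (take $q$ to be a player attaining the post-addition minimum; then $f_q(\text{after})=\min_i U_i(\text{after}) < \min_i U_i(\text{before}) \le f_q(\text{before})$), this refutes apartment monotonicity. For the initial instance I take a single \emph{balanced} apartment $1$ in which player $i$ values only room $r_{1i}$ (at $R_1=3$) and values the other rooms at $0$. The welfare-maximizing assignment is the identity, the apartment welfare is $W_1=\sum_i V_i(A_1(i))-R_1=6$, and the equal-utility vector $(2,2,2)$ is envy-free. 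Because $m=1$ makes negotiated envy-freeness coincide with single-apartment envy-freeness (Property \ref{prop:single}) and $\sum_i U_i(A_1,P)=W_1=6$, the optimal maximin value before the addition is exactly $W_1/3=2$.

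Next I add an apartment $2$ designed so that individual envy-freeness \emph{pins} player $1$'s utility. Let player $1$ value every room of apartment $2$ equally at $5$, while players $2$ and $3$ each value only the single room $r_{21}$ (at $15$) and value $r_{22},r_{23}$ at $0$, with rent $R_2=15$. First I would verify the welfare $W_2=5$ and note $2W_1/3=4 < W_2 < W_1=6$, so by Lemma \ref{lemma:highest_max_welfare} apartment $1$ remains the unique highest-welfare, hence the forced consensus, apartment. The crucial step is to show that in \emph{every} individually envy-free price matrix $Q$ on apartment $2$, player $1$ receives utility exactly $W_2=5$ (and players $2,3$ receive $0$): whichever of players $2,3$ is not assigned $r_{21}$ sits in a room worth $0$ to them while valuing $r_{21}$ at $15$, so the full envy-freeness system forces the price of $r_{21}$ up to $15$ and the remaining prices to $0$, driving player $1$'s utility to $5$. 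Writing out these inequalities for a fixed welfare-maximizing assignment of apartment $2$ and invoking Lemma \ref{lemma:multi_second_welfare} to make the conclusion assignment-independent completes this step.

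Given the pinning, I would assemble the contradiction. Let $(A,P,j^*)$ be any negotiated envy-free solution of the two-apartment instance; then $j^*$ is apartment $1$, and by Definition \ref{def:REF} there is an individually envy-free $Q$ with $\sum_j P(A_j(i))=\sum_j Q(A_j(i))$ for all $i$, so total utilities are preserved: $T_i:=\sum_j U_i(A_j,P)=\sum_j U_i(A_j,Q)$. For player $1$, the pinning gives $U_1(A_2,Q)=5$, and individual rationality of the single-apartment envy-free solution in apartment $1$ gives $U_1(A_1,Q)\ge 0$, so $T_1\ge 5$. Consensus in apartment $1$ means $U_1(A_2,P)\le U_1(A_1,P)$, whence $U_1(A_1,P)\ge T_1/2\ge 5/2 > W_1/3$. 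Since $\sum_i U_i(A_1,P)=W_1=6$, we get $U_2(A_1,P)+U_3(A_1,P)\le 7/2$, so $\min_i U_i(A_1,P)\le 7/4 < 2$. As this bound holds for \emph{every} negotiated envy-free solution, the after-value is at most $7/4$, strictly below the before-value $2$, and monotonicity fails.

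The main obstacle is the pinning step: rigorously proving that player $1$'s envy-free utility in apartment $2$ is forced to the full welfare $W_2$, which requires solving the complete system of envy-freeness inequalities (allowing negative prices) rather than arguing informally, and checking assignment-invariance via Lemma \ref{lemma:multi_second_welfare}. A secondary, routine point is the translation between "the optimal maximin value strictly decreases" and the per-coordinate form of the monotonicity definition, handled by the $q$-selection above. Finally, I would remark that such a construction genuinely needs $n\ge 3$: for $n=2$ individual rationality forces each player's minimum envy-free utility in every apartment to $0$, which keeps the per-player total utilities $T_i$ flexible enough that equalization in the consensus apartment is always attainable, so the maximin value can never decrease.
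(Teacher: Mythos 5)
Your proposal is correct and takes essentially the same approach as the paper: an explicit $n=3$, two-apartment counterexample in which the added apartment's unique individually envy-free prices pin player 1 to a large utility, so that rent preservation under negotiated envy-freeness plus consensus in the unchanged highest-welfare apartment forces player 1's utility in that apartment above the old maximin value, squeezing the remaining players below it. The paper's instance uses different numbers (rent 300, an asymmetric first apartment) and phrases the final squeeze via the envy-freeness structure of apartment 1 rather than your cleaner total-welfare argument, but the construction of the added apartment (two players valuing only one room, the third valuing all rooms equally) and the overall mechanism are the same.
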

\begin{proof}
Consider below the following two apartments, both with total rent $300$. A maximum welfare assignment in each apartment is in bold.

    \parbox{.45\linewidth}{
        \centering
        \begin{tabular}{c | c c c}
        & $r_{11}$ & $r_{12}$ & $r_{13}$ \\
        \hline
        1 & \textbf{150} &   150 &   0 \\
        2 & 0 &   \textbf{150} &   150 \\
        3 & 75 & 75 & \textbf{150} \\
        \end{tabular}
        }
        \hfill
        \parbox{.45\linewidth}{
        \centering
        \begin{tabular}{c | c c c}
        & $r_{21}$ & $r_{22}$ & $r_{23}$ \\
        \hline
        1 & 100 & 100 & \textbf{100} \\
        2 & 300 &   \textbf{0} &   0 \\
        3 & \textbf{300} &   0 &   0 \\
        \end{tabular}
    }
    
Let the partial solution $(A, P)$ be individually envy-free, which means that $A$ must be a maximum welfare assignment. Then in apartment $1$, we must have 
\[
    P(r_{11}) \leq P(r_{12}) \leq P(r_{13}) \leq P(r_{11}) + 75.
\]
Violating the first inequality would result in player 1 envying player 2, violating the second would result in player 2 envying player 3, and violating the third would result in player 3 envying player 1. Therefore, the maximum rent that player $1$ can pay in an envy-free solution for apartment $1$ is $300/3 = 100$, which implies that the minimum utility player 1 can have in apartment $1$ is $50$. The minimum rent player $1$ can pay in an envy-free solution for apartment $1$ is $50$, which implies that the maximum utility player $1$ can have in apartment $1$ is $100$. In apartment $2$, there is only one envy-free solution; we must have $P(r_{21}) = 300, P(r_{22}) = 0,$ and $P(r_{23}) = 0$. In this solution, player $1$ has utility $100$ in apartment $2$, while players $2$ and $3$ each have utility $0$. 

Let $V_j \in \mathbb{M}_{3  \times 3}$ be the valuation matrix of player values in apartment $j$. Further let $V' = [V_1 V_2]$. Let $(A_{V_1}^*, P_{V_1}^*, 1)$ be the negotiated envy-free solution given instance $V_1$ which has the highest maximin value. Note that because negotiated envy-freeness reduces to envy-freeness in the single apartment setting, this is equivalent to the envy-free solution in apartment $1$ which has the highest maximin value. We can therefore observe that $P_{V_1}^*$ is the symmetric price matrix where $P_{V_1}^*(r_{1k}) = 100$ for $k \in [3]$, which results in each player having utility $50$. Note that the maximin value cannot be higher because the maximum utility of player $3$ in apartment $1$ is $50$. Furthermore, when player $3$'s utility in apartment $1$ is $50$, it must be the case that the utilities of player $1$ and $2$ in apartment $1$ are $50$ as well. 

Let $(A_{V'}^*, P_{V'}^*, j^*)$ be the negotiated envy-free solution given $V'$ which has the highest maximin value, and let $F(A_{V'}^*, P_{V'}^*, j^*)$ be its maximin value. We would like to show that $F(A_{V'}^*, P_{V'}^*, j^*)$ is strictly lower than $50$. Note that
\[
    F(A_{V'}^*, P_{V'}^*, j^*) = \max_{j \in \{1, 2\}} F(A_{V'}^*, P_{V'}^*, j).
\]
Therefore, it suffices to show that the objective value of $(A_{V'}^*, P_{V'}^*, j)$ is less than $50$ both when apartment $1$ is the consensus apartment, and when apartment $2$ is the consensus apartment. 

First, assume apartment $1$ is the consensus apartment. Observe that $\sum_{j = 1}^2 P_{V'}(A_j(1)) \leq 100$, i.e. that player $1$'s rent burden over both apartments is at most $400$. Because apartment $1$ is the consensus apartment, player $1$ must have utility at least
\[
    \frac{V'_1(A_1(i)) + V'_1(A_2(i)) - 400}{2} = 75.
\]
However, we established earlier that the only way for player $3$'s utility in apartment $1$ to be $\geq 50$ is for player $1$'s utility in apartment $1$ to equal $50$. Therefore, $F(A_{V'}^*, P_{V'}^*, 1) < 50$. Now, assume apartment $2$ is the consensus apartment. Observe that $\left(\sum_{i = 1}^3 V'_i(A_2(i))\right) - R_2 = 100$.  This implies that the total utility of the players in apartment $2$ for any solution is at most $100$, and the maximin value can therefore be at most $100/3$. Therefore, $F(A_{V'}^*, P_{V'}^*, 2) < 50$. This shows that $F(A_{V'}^*, P_{V'}^*, j^*) < 50$, as desired.
\end{proof}

It may seem surprising that adding more apartments can lead to a decrease in the utility of the least happy player. However, this result makes sense when we consider that adding more apartments may lead to a wider spread of player preferences over apartments. In particular, adding a new apartment may make consensus more difficult to achieve, as one or more players may find that the new apartment is now their favorite apartment. Recall that in the negotiated envy-free setting, compromising in order to reach consensus takes the form of a player paying more rent in a favorable apartment in order to pay less rent in a less favorable apartment. As more apartments are added, players favoring the consensus apartment may need to pay more in order to convince other players to join them, which in turn may decrease the maximin value.

Note that it is also possible that adding a new apartment increases the maximin value. For example, suppose that apartment $2$ instead consisted of each player preferring a different room, with their value for that room being equal to the total rent. Then the consensus apartment for instance $V'$ would be apartment $2$, and the maximin value would be $200$. Hence, the effect of adding a new apartment on the maximin value depends on the structure of the valuation functions of the players for the new apartment.

\section{Core}\label{app:core}

Throughout our work, we have focused on the setting where there are $n$ players who are considering only $n$-bedroom apartments. An extension would be to consider when apartments of size $<n$ are available as well, and players are willing to split into groups to rent multiple apartments. In this setting, each apartment $j$ has a size $s(j)$. Each player $i$ still has a non-negative value $V_i(r_{jk})$ for any room $r_{jk}$ in apartment $j$. We will define a solution to this problem as a tuple $(A,P,J)$ where $J \subseteq [m]$ such that $\sum_{ j \in J} s(j) = n$, $A$ is a mapping from players to rooms in the apartments in $J$, and $P$ is a price map for all rooms in apartments in $J$ such that for any $j \in J$, $\sum_{r_{jk} \in j} P(r_{jk}) = R_j$.  Intuitively, this last requirement means that the subset of players renting any apartment is responsible for paying exactly the rent of that apartment (without subsidizing or subsidization from players who have been assigned to other apartments). Note that unlike in the rest of the paper, $A$ is now a one-to-one map from players to rooms in the apartments in $J$.

We are interested in whether we can always find a stable solution in which no set of players can achieve higher total utility by deviating to a set of unoccupied apartments. When there are a finite number of copies of each apartment type, we can easily see that the answer is no. Specifically, consider the following example with two players. There are two apartments both with rent $0$. The first apartment has two rooms and both players have utility $-1$ for both rooms. The second apartment has one room and both players have utility $1$ for that room. Then the only valid solution is to assign both players to the two-bedroom apartment, but both players will want to deviate to the one-bedroom apartment.

When there are infinite copies of each type of apartment, it is less straightforward to determine whether such a solution always exists. With infinite copies of apartments, we can treat this problem as a cooperative game with transferable utility. A natural question in cooperative game theory is whether the core is non-empty. In our setting, this boils down to asking whether we can always find a solution $(A, P, J)$ such that no group of players $S$ would want to deviate to a valid set of apartments $J' \subseteq [m]$. Formally, let $S \subseteq [n]$ be a coalition of $|S|$ players and let $\mathcal{S}$ be the set of all valid solutions $(A',P',J')$ for just these $S$ players. Therefore $\sum_{j \in J'} s(j) = |S|$ and $A'$ only assigns players in the coalition $S$ to rooms in the apartments in $J'$. We can then define the value function $v$ for the coalition $S$ as 
\[
    v(S) = \max_{(A',P',J') \in \mathcal{S}} \sum_{i \in S} V_i(A'(i)) - P'(A'(i)).
\]
In this definition, $v(S)$ can be thought of as the utility that the coalition $S$ can get if they deviate. Define $\alpha_i$ as the utility of player $i$ under solution $(A, P, J)$, i.e. $\alpha_i = V_i(A(i)) - P(A(i))$. Then the core is the set of all solutions $(A, P, J)$ such that  
\begin{equation}\label{eq:core_pre}
    \sum_{i \in S} \alpha_i \geq  v(S) \quad \forall S \subseteq [n].
\end{equation}
Note that this is equivalent by construction a solution $(A,P,J)$ such that
\begin{equation}\label{eq:core}
    \sum_{i \in S}  V_i(A(i)) - P(A(i)) \ge \max_{(A',P',J') \in \mathcal{S}} \sum_{i \in S} V_i(A'(i)) - P'(A'(i)) \quad \forall S \subseteq [n]
\end{equation}
We show below that the core can be empty for the multi-apartment rent divsion problem.

\begin{lemma}
    There exists instances of the multi-apartment rent division problem for which the core as described above is empty.
\end{lemma}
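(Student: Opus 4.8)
The plan is to exhibit a single explicit instance, with a small number of players, on which the core is provably empty, rather than to argue abstractly via balancedness. Concretely, I would take $n = 3$ and introduce exactly two apartment types, available in infinitely many copies: a \emph{two-bedroom} type with rent $0$ in which every player values each of its two rooms at $1$, and a \emph{one-bedroom} type with rent $0$ whose single room every player values at $0$. The one-bedroom type is included only so that the three players can be packed into apartments whose sizes sum to $n = 3$ (since $3$ cannot be split into two-bedroom apartments alone), which is required by the feasibility condition $\sum_{j \in J} s(j) = n$.

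The first substantive step is to compute the value function $v$ of Equation \eqref{eq:core_pre} for every coalition. For a single player, the only option is a one-bedroom apartment, giving $v(\{i\}) = 0$. For a pair, placing both in a two-bedroom apartment yields total value $1 + 1 = 2$ and total rent $0$, so regardless of how the (rent-$0$) prices are split, the pair's total utility is exactly $2$; hence $v(\{i,i'\}) = 2$ for each of the three pairs. For the grand coalition $\{1,2,3\}$, the feasible configurations are one two-bedroom plus one one-bedroom (total utility $2 + 0 = 2$) or three one-bedrooms (total utility $0$), so $v(\{1,2,3\}) = 2$. Verifying that no configuration beats $2$ — i.e. establishing $v(N) = 2$ — is the step I expect to require the most care, since it amounts to checking that there is no more profitable way to pack three players given the available sizes and values; with only these two types this is an easy finite case check.

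With the value function in hand, the contradiction is short. Suppose some solution $(A,P,J)$ lies in the core, and write $\alpha_i = V_i(A(i)) - P(A(i))$ as in the excerpt. Applying the core inequality \eqref{eq:core_pre} with $S = N$ gives $\sum_i \alpha_i \geq v(N) = 2$, while $(A,P,J)$ is itself a feasible solution for the grand coalition, so $\sum_i \alpha_i \leq v(N) = 2$; hence $\sum_i \alpha_i = 2$. The pairwise core inequalities force $\alpha_1 + \alpha_2 \geq 2$, $\alpha_1 + \alpha_3 \geq 2$, and $\alpha_2 + \alpha_3 \geq 2$. Summing these three inequalities yields $2\sum_i \alpha_i \geq 6$, i.e. $\sum_i \alpha_i \geq 3$, which contradicts $\sum_i \alpha_i = 2$. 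I would conclude that no core solution exists for this instance, and note that the argument is exactly the classical odd-coalition obstruction: each pair can jointly secure more than its proportional share of the grand-coalition value, and three overlapping pairs cannot be simultaneously satisfied by any efficient split.
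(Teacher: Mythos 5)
Your proof is correct, and the final contradiction is structurally identical to the paper's: both reduce to the classical three-player obstruction in which every pair can secure strictly more than two-thirds of the grand coalition's welfare, so summing the three pairwise core inequalities contradicts efficiency. Where you differ is in the instance. The paper uses three apartment types with rent $0$: a three-bedroom that player $1$ values at $340$ per room but players $2$ and $3$ value at $-20$, a two-bedroom everyone values at $170$ per room, and a one-bedroom with large negative values; the grand coalition's optimum is then the three-bedroom at welfare $300$, while each pair secures $340$ by deviating to a two-bedroom, giving $300 < \tfrac{3}{2}\cdot 340$. Your instance is leaner: two types, symmetric $\{0,1\}$ values, with the grand coalition capped at $2$ purely by the parity constraint $\sum_{j\in J} s(j)=n$ (one player must be stranded in the worthless single), while each pair secures $2$, giving $2 < 3$. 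Two things your construction buys: it honors the appendix's stated assumption that values are non-negative, which the paper's own example technically violates with its $-20$, $-280$, and $-1360$ entries; and it isolates the packing obstruction as the sole source of core emptiness, rather than relying on engineered value asymmetries and punitive negative values to steer the grand coalition's optimal configuration. The paper's version, in exchange, shows that emptiness can occur even when the grand coalition's best outcome places all players together in a single apartment, so the phenomenon is not purely about indivisible pairing.
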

\begin{proof}
    \begin{example}\label{example:core}
        Consider the following example where there are three types of apartments and the rent is $0$ in every apartment.
    
        \parbox{.25\linewidth}{
            \centering
            \begin{tabular}{c | c c c}
            & $r_{11}$ & $r_{12}$ & $r_{13}$ \\
            \hline
            1 & 340 & 340 &  340\\
            2 & -20 & -20  & -20\\
            3 & -20 & -20  & -20\\
            \end{tabular}
        }
        \parbox{.25\linewidth}{
            \centering
            \begin{tabular}{c | c c}
            & $r_{21}$ & $r_{22}$ \\
            \hline
            1 & 170 & 170 \\
            2 & 170 & 170 \\
            3 & 170 & 170 \\
            \end{tabular}
        }
        \parbox{.15\linewidth}{
            \centering
            \begin{tabular}{c | c}
            & $r_{31}$\\
           \hline
            1 & -1360 \\
            2 & -280 \\
            3 & -280 \\
            \end{tabular}
        }

    \end{example}

    Consider any solution $(A,P,J)$ where $J \ne \{1\}$. This solution then only assigns players to rooms of type $2$ and $3$, and therefore the total utility of all three players under this solution must be at most $60$. This means that the three players would want to deviate together to the apartment $1$, and therefore this solution is not in the core. Therefore, any solution in the core must be of the form $(A,P,\{1\})$. Consider such a solution. Because the rooms in apartment $1$ are symmetric, WLOG assume that $A(i) = r_{1i}$ for all players $i$. Since the total rent is $0$, we must have
    \[
        P(r_{11}) + P(r_{12}) + P(r_{13}) = 0.
    \]
    Let $u_1,u_2,u_3$ be the utilities for players $1,2,3$ under solution $(A,P,\{1\})$. We know that $u_1 = 340 - P(r_{11})$, $u_2 = -20 - P(r_{12})$, and $u_3 = -20 - P(r_{13})$. Then $u_1,u_2,u_3$ must satisfy the following system of equations if the solution $(A,P,\{1\})$ is in the core.
    \[
        u_1 + u_2 + u_3 = 300
    \]
    \[
        u_1 + u_2 \ge 340
    \]
    \[
        u_1 + u_3 \ge 340
    \]
    \[
        u_2 + u_3 \ge 340
    \]

    However, this system of equations has no solution. Therefore no such solution exists, and the core is empty.
\end{proof}

\section{Universal Envy-Freeness for correlated distributions}\label{app:uef_corr}

In this section we will look at the existence of a universal envy-free solution in the setting of $m$ apartments and two players who have values drawn from a Bernoulli distribution. First, we need the following lemma which exactly characterizes the event that a universal envy-free solution exists when values of both players are $0$ or $1$. For simplicity, we will assume that the rent of every apartment is equal to $1$.
\begin{lemma}\label{lemma:corr}
    Suppose we have $m$ apartments each with rent $1$ and two players such that $V_i(r) \in \{0,1\}$ for every room $r$. Then there does not exist a universal envy-free solution if and only if all of the following three events occur.
    \begin{itemize}
        \item No apartment has an assignment where both players value their assigned rooms with value $1$. Denote this event as $E_1$.
        \item Both players have value $1$ for at least one room. Denote this event as $E_2$.
        \item In at least one apartment, one player has value $1$ for both rooms and the other player has value $0$ for both rooms. Denote this event as $E_3$.
    \end{itemize}
\end{lemma}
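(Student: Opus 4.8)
The plan is to prove the biconditional as two implications: the ``if'' direction (all three events $\Rightarrow$ no UEF solution) by a direct contradiction, and the ``only if'' direction (no UEF $\Rightarrow$ all three, through its contrapositive) by a short case analysis. Throughout, $n=2$ and $R=1$, and I write $u_1,u_2$ for the two players' utilities in the chosen apartment $j^*$ of a hypothetical solution $(A,P,j^*)$. The two facts I use repeatedly are: (i) universal envy-freeness says each player's utility $u_i$ is at least $V_i(r)-P(r)$ for every room $r$ in every apartment; and (ii) under $E_1$ the welfare of the chosen apartment is at most $0$, since at most one of the two assigned values can be $1$, so $u_1+u_2 = V_1(A_{j^*}(1)) + V_2(A_{j^*}(2)) - 1 \le 0$.

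For the ``if'' direction, assume toward contradiction that a UEF solution exists while $E_1,E_2,E_3$ all hold. By $E_3$ fix an apartment $j'$ in which one player $a$ values both rooms at $1$ and the other player $b$ values both at $0$. Summing player $a$'s two UEF inequalities over the rooms of $j'$ gives $2u_a \ge 2 - 1 = 1$, so $u_a \ge \tfrac12$; combined with $u_a+u_b\le 0$ from fact (ii) this forces $u_b \le -\tfrac12$. Now invoke $E_2$: player $b$ has some room $r_b$ with $V_b(r_b)=1$, and UEF gives $u_b \ge 1 - P(r_b)$, hence $P(r_b) \ge 1 - u_b \ge \tfrac32$. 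The partner room $r_b'$ in that apartment then satisfies $P(r_b') = 1 - P(r_b) \le -\tfrac12$, so player $b$'s utility for $r_b'$ is $V_b(r_b') - P(r_b') \ge \tfrac12 > -\tfrac12 \ge u_b$, violating UEF. This contradiction establishes the direction.

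For the ``only if'' direction I argue the contrapositive: if at least one of $E_1,E_2,E_3$ fails, a UEF solution exists, splitting into three exhaustive cases. If $E_1$ fails, some apartment admits an assignment giving both players value $1$; choosing it as $j^*$, assigning those rooms, and setting \emph{every} price in \emph{every} apartment to $\tfrac12$ makes both assigned utilities $\tfrac12$ while no room anywhere yields utility above $1 - \tfrac12 = \tfrac12$, so UEF holds. If $E_2$ fails, some player values every room at $0$; again setting all prices to $\tfrac12$, that player is indifferent among all rooms (utility $-\tfrac12$) and the other player is placed on one of their favorite (value-$1$, if any) rooms, again giving UEF. The remaining case assumes $E_1$ and $E_2$ hold but $E_3$ fails; here I show that every apartment has $MUW \le 0$ and that the maximum value of $MUW$ is achieved by a bijective (hence valid) assignment, which is exactly event $\mathcal{F}$ of Definition \ref{def:event_mathcal_F}, so a UEF solution exists by the deterministic construction in Lemma \ref{lemma:muw}.

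The main obstacle is this last case. The key structural observation is that under $E_1 \wedge \neg E_3$ no apartment can have both rooms valued by some player: if both rooms were valued, they would be valued either by two different players (yielding an assignment of welfare $1+1-1 = 1 > 0$, contradicting $E_1$) or by a single player alone (which is precisely an $E_3$ pattern). Hence every apartment has at most one valued room, so $MUW(j) \le 0$ by Definition \ref{def:muw}, and an apartment with exactly one valued room attains $MUW(j)=0$ via the bijective assignment that gives that room to a player valuing it and the other room to the remaining player. By $E_2$ some apartment has a valued room, so the maximum $MUW$ equals $0$ and is realized bijectively; thus $\mathcal{F}$ holds and Lemma \ref{lemma:muw} supplies the UEF solution. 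I expect verifying this dichotomy (that the only $E_1$-consistent apartments with both rooms valued are the forbidden $E_3$ patterns) to be the one step requiring care; everything else reduces to the two elementary price constructions above.
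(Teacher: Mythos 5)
Your proof is correct, and its skeleton largely matches the paper's: the same direct contradiction for the ``if'' direction (the $E_3$ apartment forces $u_a \ge \tfrac12$, the $E_1$ welfare bound forces $u_b \le -\tfrac12$, and $E_2$ then yields a contradiction), and the same contrapositive with the same all-prices-$\tfrac12$ constructions when $E_1$ or $E_2$ fails. The one genuine divergence is the case $E_1 \wedge E_2 \wedge \neg E_3$: the paper enumerates the three possible apartment types under $E_1 \cap \neg E_3$ and writes down explicit prices (price $1$ on the unique valued room, $0$ on its partner, $\tfrac12$ each in all-zero apartments), whereas you observe that every apartment then has at most one valued room, hence $MUW \le 0$ with the maximum $0$ attained by a bijective assignment (using $E_2$), so event $\mathcal{F}$ of Definition \ref{def:event_mathcal_F} holds and Lemma \ref{lemma:muw} supplies the solution. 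This reuse is legitimate\,---\,Lemma \ref{lemma:muw} is deterministic and needs only equal rents across apartments, both of which hold here\,---\,and it buys brevity, since the hardest case reduces to machinery already established; under the hood, Lemma \ref{lemma:muw}'s construction instantiates to exactly the paper's explicit prices, and your dichotomy (both rooms valued implies either two distinct valuers, contradicting $E_1$, or a single valuer of both rooms, which is precisely an $E_3$ pattern) is the same structural fact the paper's enumeration encodes. As a minor bonus, your closing step of the ``if'' direction is more explicit than the paper's: where the paper simply asserts that player $b$ has utility at least $0$ for some room in the $E_2$ apartment, you derive it by noting that either the valued room has price at most $1$ or its partner room has negative price.
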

\begin{proof}
    First we will show the ``only if" direction. To show this we will prove the contrapositive, which is that if any one of these three events does not hold, then there does exist a universal envy-free solution. Assume $E_1$ does not hold. Then there must be an apartment $j$ and an corresponding assignment $A_j^*$ such that $V_1(A_j^*(1)) = V_2(A_j^*(2)) = 1$. Let $A$ be any assignment such that $A_j = A_j^*$, and let $P$ be the price matrix with every entry equal to $0.5$. Then $(A,P,j)$ will satisfy universal envy-freeness. This is because no player can have utility of more than $0.5$ for any room in any apartment, and both players have utility of exactly $0.5$ for their assigned rooms in apartment $j$. 

    Assume instead that event $E_2$ does not hold. Then there must be some player who has utility $0$ for every room. WLOG let this player be player $1$. Choose $A$ to be a welfare-maximizing assignment in every apartment and choose $P$ to be a price matrix with every entry equal to $0.5$. Let $j$ be the apartment in which player $2$ has the highest utility.
    Then $(A,P,j)$ is a universal envy-free solution. This is because the utility of player $1$ for every room is $-0.5$, and we chose $j$ such that $U_2(A_j, P) > U_2(A_{j'}, P)$ for $j' \neq j$.

    Finally, assume that event $E_3$ does not hold. Then no apartment has one player with value $1$ for both rooms and the other player with value $0$ for both rooms. If $E_1$ does not hold, then we already have shown that a universal envy-free solution exists. Therefore, assume that $E_1$ holds and $E_3$ does not hold. There are only three possible sets of values for an apartment under $E_1 \cap \neg E_3$. These possibilities are: 1. both players have value $0$ for every room in that apartment, 2. Exactly one player has value $1$ for exactly one room in that apartment, 3. Both players have value $1$ for the same room in that apartment. Construct the price matrix $P$ as follows. For any apartment in case $1$, assign price $0.5$ to both rooms, and for any apartment in cases $2$ and $3$, assign price $1$ to the room that has a non-0 value by at least one of the players. Under these prices, both players will have utility exactly $0$ for both rooms in an apartment satisfying case $3$. Similarly, both players will have utility exactly $-0.5$ for both rooms of any apartment satisfying case $1$. Finally, for any apartment satisfying case $2$, for any maximum welfare assignment both players will have utility $0$ for their assigned rooms and utility at most $0$ for the other room. Choose $A$ to be a welfare-maximizing assignment in every apartment and let $j$ be an apartment which has the highest utility under the partial solution $(A,P)$. Then $(A,P,j)$ is a universal envy-free solution. This completes the proof of the ``only if" direction.

    \vspace{5mm}

    Now we will show the ``if" direction. We want to show that if all of events $E_1, E_2, E_3$ hold, then there is no universal envy-free solution. Proof by contradiction. Assume events $E_1,E_2,E_3$ all hold and there exists a universal envy-free solution $(A,P,j^*)$. Under event $E_1$, there is no possible price assignment and choice of apartment where the sum of the player utilities exceeds $0$. Therefore, in apartment $j^*$, both players have utility at most $0$. Suppose apartment $j$ satisfies event $E_3$, and player $1$ has value $1$ for both rooms in apartment $j$. Then under any prices, player $1$ has utility of at least $0.5$ for one room in apartment $j$. Therefore, since $(A,P,j^*)$ satisfies universal envy-freeness, player $1$ has utility at least $0.5$ in apartment $j^*$. Since the total utility in apartment $j^*$ is at most $0$ under event $E_1$, this implies that player $2$ has utility at most $-0.5$ in apartment $j^*$. However, under event $E_2$, player $2$ must have value $1$ for at least one room in some apartment $j'$. This implies that player $2$ has utility of at least $0$ for at least one room in apartment $j'$ under partial solution $(A,P)$. Since $(A,P,j^*)$ satisfies universal envy-freeness, this implies that player $2$ must have utility of at least $0$ in apartment $j^*$. This is a contradiction.
\end{proof}

\begin{lemma}\label{lemma:corr_result}
    Suppose there are $m$ apartments and two players, and fix $r \in [0,1]$. For every room $r_{jk}$ in any apartment $j$, suppose $V_{1}(r_{jk})$ and $V_{2}(r_{jk})$ are drawn from a joint distribution such that the marginal distributions of both $V_1(r_{jk})$ and $V_2(r_{jk})$ are both $Bernoulli(1/2)$ and $\Pr\left(V_1(r_{jk}) = V_2(r_{jk})\right) = r$. Note that the values are drawn independently for every room. Then
    {\small
    \[
        \Pr(\text{Exists a universal envy-free solution}) =  
        1 - \left(\left(\frac{r^2 + 2}{4}\right)^m - 2\left(\frac{1}{4}\right)^m - \left(\frac{4r - r^2}{4}\right)^m + 2\left(\frac{2r -r^2 }{4}\right)^m\right)
    \]
    }
\end{lemma}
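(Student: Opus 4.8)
The plan is to reduce everything to Lemma \ref{lemma:corr}, which says that a universal envy-free solution fails to exist exactly on the event $E_1 \cap E_2 \cap E_3$; hence I would compute $\Pr(E_1 \cap E_2 \cap E_3)$ and report $1$ minus this quantity. The first ingredient is the per-room joint law: the Bernoulli$(1/2)$ marginals together with $\Pr(V_1(r)=V_2(r))=r$ force the four outcomes $(V_1(r),V_2(r))$ to satisfy $\Pr(00)=\Pr(11)=r/2$ and $\Pr(01)=\Pr(10)=(1-r)/2$. Since rooms, and therefore apartments, are mutually independent, each probability below factorizes as a per-apartment probability raised to the $m$-th power, so it suffices to analyze a single two-room apartment.

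I would compute $\Pr(E_1 \cap E_2 \cap E_3)$ by inclusion--exclusion relative to $E_1$:
\[
    \Pr(E_1 \cap E_2 \cap E_3) = \Pr(E_1) - \Pr(E_1 \cap E_2^c) - \Pr(E_1 \cap E_3^c) + \Pr(E_1 \cap E_2^c \cap E_3^c).
\]
The core of the argument is a case analysis over the $16$ ordered pairs of room-states in a single apartment, recording for each whether it admits a ``good'' assignment (both players matched to a value-$1$ room), whether player $1$ or player $2$ owns a value-$1$ room, and whether it is an $E_3$-apartment (one player values both rooms at $1$ and the other at $0$, i.e.\ both rooms in state $10$ or both in state $01$). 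Writing $E_2^c = A_1 \cup A_2$, where $A_b$ is the event that player $b$ owns no value-$1$ room in any apartment, I would expand $\Pr(E_1 \cap E_2^c)$ and $\Pr(E_1 \cap E_2^c \cap E_3^c)$ by a further two-set inclusion--exclusion on $A_1, A_2$, using player symmetry so that the two $A_b$ terms contribute equally.

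Carrying out the case analysis yields clean closed forms. The per-apartment probability of no good assignment is $(r^2+2)/4$, so $\Pr(E_1)=((r^2+2)/4)^m$, already the leading term. Because ``player $b$ has no value-$1$ room in an apartment'' and ``the apartment is an $E_3$-apartment'' are each automatically subsets of ``no good assignment,'' the remaining per-apartment probabilities collapse: $\Pr(E_1 \cap A_b)$ has per-apartment probability $1/4$; $A_1 \cap A_2$ forces every room into state $00$ with per-apartment probability $r^2/4$; subtracting the two $E_3$-types (each $((1-r)/2)^2$) from the no-good probability gives $(4r-r^2)/4$; and intersecting $E_3^c$ with $A_b$ removes the offending both-rooms-in-state-$01$ (resp.\ $10$) apartment to give $(2r-r^2)/4$. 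Thus $\Pr(E_1 \cap E_2^c)=2(1/4)^m-(r^2/4)^m$, $\Pr(E_1 \cap E_3^c)=((4r-r^2)/4)^m$, and $\Pr(E_1 \cap E_2^c \cap E_3^c)=2((2r-r^2)/4)^m-(r^2/4)^m$. Substituting into the identity, the two $(r^2/4)^m$ terms cancel, leaving exactly the bracketed expression in the lemma, and the stated formula follows by taking the complement. The main obstacle is the bookkeeping in the case analysis: I must apply the player symmetry correctly and, crucially, recognize the subset relations above, since these are what make the per-apartment probabilities reduce to simple forms. The final cancellation of the $(r^2/4)^m$ terms serves as a reassuring consistency check.
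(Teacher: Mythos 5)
Your proposal is correct and follows essentially the same route as the paper: both reduce to Lemma \ref{lemma:corr}, apply inclusion--exclusion relative to $E_1$ (with a further symmetric inclusion--exclusion over the two players for $E_2^c$), and exploit independence across apartments to reduce everything to per-apartment probabilities, arriving at the identical intermediate quantities $\left(\frac{r^2+2}{4}\right)^m$, $2\left(\frac{1}{4}\right)^m - \left(\frac{r^2}{4}\right)^m$, $\left(\frac{4r-r^2}{4}\right)^m$, and $2\left(\frac{2r-r^2}{4}\right)^m - \left(\frac{r^2}{4}\right)^m$, with the same cancellation of the $\left(\frac{r^2}{4}\right)^m$ terms. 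The only cosmetic difference is that the paper organizes the per-apartment case analysis into four symmetry classes ($p_1,\dots,p_4$) rather than your explicit $16$-state enumeration with subset relations, but the computation is the same.
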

\begin{proof}
    We will calculate the probability of $E_1 \cap E_2 \cap E_3$ from Lemma \ref{lemma:corr}. We have four cases for utilities within an apartment if event $E_1$ holds.

    \begin{itemize}
        \item One player has value $1$ for both rooms and the other player has value $0$ for both rooms. There are two symmetric ways to achieve this each with probability $p_1 = \left(\frac{1}{4}(1-r)^2\right)$. 
        \item Every player has value $0$ for every room. There is one way to achieve this with $p_2 = \left(\frac{1}{4}r^2\right)$.
        \item Exactly one player has value $1$ for exactly one room in the apartment. There are four symmetric ways to achieve this, each with probability $p_3 = \left(\frac{1}{4}(1-r)r\right)$.
        \item Both players have value $1$ for the same room and both players have value $0$ for the other room. There are two symmetric ways to achieve this, each with probability $p_4 = \left(\frac{1}{4}r^2\right)$.
    \end{itemize}

    Using these four cases, we have that
    \[
        \Pr(E_1) = (2p_1 + p_2 + 4p_3 + 2p_4)^m = \left(\frac{r^2 + 2}{4}\right)^m
    \]
    \[
        \Pr(\neg E_2 \cap E_1) = 2(p_1 + p_2 + 2p_3)^m - p_2^m = 2\left(\frac{1}{4}\right)^m - \left(\frac{r^2}{4}\right)^m
    \]
    \[
        \Pr(\neg E_3 \cap E_1) = (p_2 + 4p_3 + 2p_4)^m = \left(\frac{4r - r^2}{4}\right)^m
    \]
    \[
        \Pr( \neg E_2 \cap \neg E_3 \cap E_1) = 2(p_2 + 2p_3)^m - p_2^m = 2\left(\frac{2r -r^2 }{4}\right)^m - \left(\frac{r^2}{4}\right)^m
    \]
    Putting this all together, by Lemma \ref{lemma:corr}, we have that
    {\small
    \begin{align*}
        \Pr(\text{No universal envy-free solution}) &= \Pr(E_1) -   \Pr(\neg E_2 \cap E_1) - \Pr(\neg E_3 \cap E_1) +  \Pr( \neg E_2 \cap \neg E_3 \cap E_1) \\
        &= \left(\frac{r^2 + 2}{4}\right)^m - 2\left(\frac{1}{4}\right)^m - \left(\frac{4r - r^2}{4}\right)^m + 2\left(\frac{2r -r^2 }{4}\right)^m
    \end{align*}
    }
    Lemma \ref{lemma:corr_result} exactly characterizes the probability of the existence of a university envy-free solution in terms of the dependency $r$ between the player values. Interestingly, we note that for sufficiently large $m$ (for example $m = 10$), this function is not monotonically decreasing and has a global minimum for $r$ strictly between $0$ and $1$.
    
   \end{proof}

\newpage
\section{Strong Negotiated Envy-Freeness}\label{sec:strong_REF}

\subsection{Definition and Main Result}

We begin with an illustrative example. Consider the following instance, for which there does not exist a universal envy-free solution.

\begin{example}\label{example:EEF}
    There are two players and two apartments. Each apartment has total rent $100$ and contains two symmetric rooms. The value of each player for each room is shown below:

    \parbox{.45\linewidth}{
        \centering
        \begin{tabular}{c | c c}
        & $r_{11}$ & $r_{12}$ \\
        \hline
        1 & 100 & 100 \\
        2 & 1 &   1  \\
        \end{tabular}
        }
        \parbox{.45\linewidth}{
        \centering
        \begin{tabular}{c | c c}
        & $r_{21}$ & $r_{22}$ \\
        \hline
        1 & 0 & 0 \\
        2 & 99 &   99  \\
        \end{tabular}
        }
\end{example}

As the rooms are symmetric within each apartment, we can WLOG consider only the assignment $A$ where player 1 is assigned room 1 in both apartments. We can furthermore conclude that the only price matrix which satisfies individually envy-freeness assigns price $50$ to every room. By Lemma \ref{lemma:highest_max_welfare}, only apartment $1$ can be a consensus apartment. Therefore, through simple calculations, we can observe that the only price matrices which satisfy negotiated envy-freeness are of the form $P(A_1(1)) = 99+x$, $P(A_2(1)) = 1-x$, $P(A_2(1)) = 1-x$, $P(A_2(2)) = 99+x$, for $x \in [0,1]$. 

In every such solution, player 1 is envious of player 2 in the consensus apartment, and therefore no solution of this form will satisfy universal envy-freeness. However, we would like to provide a justification for why this solution is fair even though there exists envy in the consensus apartment. Consider the sequence of negotiations necessary to move from the individually envy-free price matrix where $P(A_1(1)) =P(A_1(2)) =P(A_2(1)) =P(A_2(2)) = 50$ to a negotiated envy-free price matrix as characterized above. The negotiations must have consisted of player 1 increasing their rent in apartment 1 by a total of $49 + x$ and decreasing their rent in apartment 2 by the same amount. In the negotiation interpretation, player 1 is willing to negotiate only to reach consensus with player 2. Therefore, player 1 would want to stop negotiating once player 2 weakly prefers apartment 1. In other words, a ``fair" stopping point for the negotiations would be exactly when consensus is reached. Using this ``fair" stopping point is equivalent to choosing $x = 0$ in the solution set described above. When $x=0$, the envy of player 1 in the consensus apartment can be explained as the result of negotiations that were necessary to reach consensus with player 2.

In this example, the envy of player 1 in the consensus apartment is caused by the rent decrease of player 2 and corresponding rent increase of player 1 in the consensus apartment. Using this intuition, we can impose a cap on the rent decrease any player can have in the consensus apartment in any instance. For any player $i$, apartment $j^*$, and partial solution $(A,P)$, define $S_i(A,P,j^*) \subseteq [m]$ as the set of apartments that player $i$ strictly prefers to apartment $j^*$ under the partial solution $(A,P)$. Equivalently, $S_i(A,P, j^*) = \{j : U_i(A_{j}, P) > U_i(A_{j^*}, P)\}$. 
\begin{lemma}\label{lemma:delta_quantity}
    Starting from partial solution $(A,P)$, let $\{\tau_t\}_{t=1}^T = \{(\delta^t, i,i^t, j^t, j^*)\}_{t=1}^T$ be a sequence of $T$ negotiations that each decrease player $i$'s price in apartment $j^*$ and suppose that after the $T$th negotiation, player $i$ prefers apartment $j^*$ to any other apartment. Then 
    \begin{equation}\label{eq:delta_quantity}
        \min_{T,\{\tau_t\}_{t=1}^T} \sum_{t=1}^T \delta^t =  \frac{1}{|S_i(A,P, j^*) | + 1} \cdot \sum_{j \in S_i(A,P, j^*) } \left( U_i(A_{j}, P) -U_i(A_{j^*}, P)\right).
    \end{equation}
\end{lemma}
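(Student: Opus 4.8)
The plan is to reduce the sequential negotiation problem to a static optimization over aggregate transfers, then prove matching lower and upper bounds and read off the closed form. Write $u_j:=U_i(A_j,P)$ for player $i$'s starting utilities and abbreviate $S:=S_i(A,P,j^*)$. A single negotiation $\tau_t=(\delta^t,i,i^t,j^t,j^*)$ lowers player $i$'s utility in apartment $j^t$ by $\delta^t$, raises it in $j^*$ by $\delta^t$, and leaves all of player $i$'s other apartment-utilities unchanged; in particular player $i$'s total utility across apartments is conserved. Hence the net effect of the whole sequence on player $i$ depends only on the aggregates $x_j:=\sum_{t:\,j^t=j}\delta^t\ge 0$ for $j\neq j^*$: the final utilities are $u_{j^*}+\sigma$ in $j^*$ and $u_j-x_j$ in each $j\neq j^*$, where $\sigma:=\sum_{j\neq j^*}x_j=\sum_t\delta^t$. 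Conversely, any non-negative vector $(x_j)$ is realized by an explicit sequence (one negotiation per apartment, with an arbitrary second player, which exists when $n\ge 2$). So $\min_{T,\{\tau_t\}}\sum_t\delta^t$ equals the value of the program that minimizes $\sigma$ over $x_j\ge 0$ subject to the consensus constraints $u_{j^*}+\sigma\ge u_j-x_j$ for all $j$.

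For the lower bound I would first note that apartments with $u_j\le u_{j^*}$ (those outside $S$) impose no binding constraint, so only the constraints indexed by $j\in S$ are relevant. Summing $u_{j^*}+\sigma\ge u_j-x_j$ over $j\in S$ and using $\sum_{j\in S}x_j\le\sigma$ gives $|S|(u_{j^*}+\sigma)\ge\sum_{j\in S}u_j-\sigma$, which rearranges to $(|S|+1)\sigma\ge\sum_{j\in S}(u_j-u_{j^*})$. This is exactly the claimed quantity, so it bounds the minimum from below; this half uses nothing beyond conservation of player $i$'s total utility.

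For the matching upper bound I would exhibit the equalizing transfer that makes the summed inequality tight. Set the target level $L:=\frac{1}{|S|+1}\big(u_{j^*}+\sum_{j\in S}u_j\big)$, and put $x_j:=u_j-L$ for $j\in S$ and $x_j:=0$ otherwise; then $\sigma=\sum_{j\in S}x_j=L-u_{j^*}$ matches the formula, and player $i$'s final utility equals $L$ in $j^*$ and in every apartment of $S$ while staying below $L$ elsewhere, so $j^*$ is weakly preferred. I expect the feasibility of this construction to be the main obstacle: the transfers $x_j=u_j-L$ are non-negative only if every apartment in $S$ satisfies $u_j\ge L$, i.e. no apartment is preferred by a margin smaller than the average margin. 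If some $j\in S$ lies below the water level $L$, its consensus constraint is slack at the optimum and it should be dropped, with $L$ recomputed over the surviving apartments in the manner of a water-filling fixed point. Confirming that the equalizing profile is simultaneously feasible and optimal (and pinning down the regime in which the single-shot closed form is exactly the minimum) is the crux of the argument; the lower-bound half, by contrast, holds unconditionally.
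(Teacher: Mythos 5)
Your decomposition into aggregate transfers and your lower-bound argument are exactly the paper's: the paper also writes $\Delta_j$ for the total amount negotiated with apartment $j$, sums the consensus constraints $\Delta + \Delta_j \ge U_i(A_j,P) - U_i(A_{j^*},P)$ over $j \in S_i(A,P,j^*)$, and uses $\sum_{j\in S_i(A,P,j^*)}\Delta_j \le \Delta$ to conclude $(|S_i(A,P,j^*)|+1)\,\Delta \ge \sum_{j\in S_i(A,P,j^*)}\left(U_i(A_j,P)-U_i(A_{j^*},P)\right)$. For the upper bound, the paper uses precisely your equalizing profile: it sets $\Delta_j = \left(U_i(A_j,P)-U_i(A_{j^*},P)\right) - \Delta$ for $j \in S_i(A,P,j^*)$ and $\Delta_j = 0$ otherwise, verifies that the consensus constraints hold with equality, and stops there.

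The feasibility obstacle you flag is therefore not a defect of your proposal but a genuine error in the paper: the paper never checks that its $\Delta_j$ are non-negative, yet non-negativity is forced, because every negotiation in the lemma has $\delta^t > 0$ and decreases player $i$'s price in $j^*$, so the aggregate transfer into each other apartment is a sum of positive quantities. Whenever some gap $g_j := U_i(A_j,P)-U_i(A_{j^*},P)$ with $j \in S_i(A,P,j^*)$ falls below the claimed optimum, the paper's $\Delta_j$ is negative and its construction is invalid. In fact the lemma's equality is false, not merely unproven: take $S_i(A,P,j^*)=\{a,b\}$ with gaps $g_a = 10$ and $g_b = 1$. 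The right-hand side of Equation \eqref{eq:delta_quantity} equals $11/3$, but any valid sequence satisfies $\Delta_a \le \Delta$ and $\Delta + \Delta_a \ge 10$, hence $\Delta \ge 5$; and $\Delta = 5$ is attained by $\Delta_a = 5$, $\Delta_b = 0$. The true minimum is the water-filling value you describe, namely the unique $\Delta$ solving $\sum_{j \in S_i(A,P,j^*)}\max\left(0,\, g_j - \Delta\right) = \Delta$, which coincides with the paper's closed form exactly when $\min_{j\in S_i(A,P,j^*)} g_j \ge \frac{1}{|S_i(A,P,j^*)|+1}\sum_{j\in S_i(A,P,j^*)} g_j$. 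Your unconditional lower-bound half is the correct content of the lemma, and it is all that the paper's later development actually requires: the closed form remains a valid lower bound on the negotiation needed, and the existence proof of Theorem \ref{thm:explainable} bounds each price decrease by $\frac{1}{m}\sum_{j} g_j$ directly without invoking Lemma \ref{lemma:delta_quantity}; only the interpretation of strong negotiated envy-freeness as capping price decreases at the \emph{exact} minimum needed for consensus has to be weakened.
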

\begin{proof}
    Define  $\Delta = \sum_{t=1}^T \delta^t$. Then the $T$ negotiations will decrease player $i$'s price in apartment $j^*$ by a total of $\Delta$, thereby increasing player $i$'s utility in apartment $j^*$ by $\Delta$. For every apartment $j \ne j^*$, define $\Delta_j$ as the amount negotiated with that apartment among the $T$ negotiations, which implies that $\sum_{j \ne j^*} \Delta_j = \Delta$. In words, in every apartment $j \ne j^*$, player $i$'s utility decreases by $\Delta_j$. Assume that the $T$ negotiations resulted in player $i$ weakly preferring apartment $j^*$ to every other apartment. This implies that, for every $j \ne j^*$, 
    \begin{equation}\label{eq:trading}
        U_i(A_{j^*}, P) + \Delta \ge U_i(A_{j}, P) - \Delta_j,
    \end{equation}
    which is equivalent to the condition that $\Delta + \Delta_j \ge U_i(A_{j}, P) - U_i(A_{j^*}, P)$ for every $j\ne j^*$. Summing this for all $j$ in $S_i(A,P,j^*)$, we have that
    \[
        |S_i(A,P,j^*)| \cdot \Delta + \sum_{j \in S_i(A,P,j^*)} \Delta_j \ge  \sum_{j \in S_i(A,P,j^*)} \left(U_i(A_{j}, P) - U_i(A_{j^*}, P)\right).
    \]
    Using that $\sum_{j \in S_i(A,P,j^*)} \Delta_j \le \Delta$, we conclude that
    \[
        \Delta \ge \frac{1}{|S_i(A,P,j^*)| + 1} \cdot  \sum_{j \in S_i(A,P,j^*)}\left( U_i(A_{j}, P) - U_i(A_{j^*}, P)\right).
    \]    
    We have shown the $\ge$ direction of the desired result. Now we want to show that there exist a sequence of negotiations that sum to the desired value of $\Delta$. This is equivalent to finding values of $\Delta_j$ that satisfy Equation \eqref{eq:trading} for all $j$ and sum to the value on the right hand side of Equation \eqref{eq:delta_quantity}. Let 
    \begin{equation}\label{eq:delta}
    \Delta_j = U_i(A_{j}, P) - U_i(A_{j^*}, P) - \frac{1}{|S_i(A,P,j^*)| + 1}  \cdot \sum_{j \in S_i(A,P,j^*)} \left(U_i(A_{j}, P) - U_i(A_{j^*}, P)\right)
    \end{equation}
    for every $j \in S(A,P,j^*)$ and $\Delta_j = 0$ otherwise. Then 
    \begin{align*}
        \Delta &= \sum_{j=1} \Delta_j = \frac{1}{|S_i(A,P,j^*)| + 1} \cdot  \sum_{j \in S_i(A,P,j^*)} \left(U_i(A_{j}, P) - U_i(A_{j^*}, P)\right)
    \end{align*}
    as desired. Plugging this expression for $\Delta$ into Equation \eqref{eq:delta} gives that Equation \eqref{eq:trading} is satisfied for all $j \in S_i(A,P,j^*)$. Equation \eqref{eq:trading} is satisfied for all $j \not\in S_i(A,P,j^*)$ because by definition these $j$ satisfy $U_i(A_{j^*},P) \ge U_i(A_j,P)$. Therefore, a sequence of negotiations that achieves the desired minimum value of $\Delta$ is $\{(\Delta_{t}, i,i^t, t, j^*)\}_{t=1}^m$ for any choice of $i^t$.
\end{proof}
In Definition \ref{def:strong_REF}, we define a stronger version of negotiated envy-freeness such that no player's price in the consensus apartment decreases by more than the quantity specified in Lemma \ref{lemma:delta_quantity}.
\begin{definition}\label{def:strong_REF}
    A solution $(A,P,j^*)$ satisfies \textbf{strong negotiated envy-freeness} if 
    \begin{enumerate}
        \item $(A,P,j^*)$ satisfies consensus.
        \item $\exists$ a price matrix $Q$ such that $(A,Q)$ is individually envy-free and Equations \ref{eq:strong_REF_equality} and \ref{eq:strong_REF} hold $\forall i$.
    \end{enumerate}
    \begin{equation}\label{eq:strong_REF_equality}
        \sum_{j = 1}^m P(A_j(i)) = \sum_{j = 1}^m Q(A_j(i))
    \end{equation}
    \begin{equation}\label{eq:strong_REF}
        P(A_{j^*}(i)) \ge Q(A_{j^*}(i)) - \frac{1}{|S_i(A,Q, j^*) | + 1}\cdot \sum_{j \in S_i(A,Q, j^*) } \left(U_i(A_{j}(i), Q) -U_i(A_{j^*}(i), Q)\right).
    \end{equation}
\end{definition}

Note that a direct result of a solution satisfying strong negotiated envy-freeness is that within the consensus apartment $j^*$, no player is envious of any other player by more than \[
2\sum_i \left( \frac{1}{|S_i(A,Q, j^*) | + 1}\cdot\sum_{j \in S_i(A,Q, j^*) } \left(U_i(A_{j}(i), Q) -U_i(A_{j^*}(i), Q)\right)\right).
\]
This upper bound is directly related to how much each player preferred their assignment in other apartments over their assignment in the consensus apartment in the corresponding individually envy-free solution.

Strong negotiated envy-free solutions are a subset of negotiated envy-free solutions, with the advantage that the envy in the consensus apartment of strong negotiated envy-free solutions has an upper bound that is justifiable by necessary negotiations. As the following theorem shows, this benefit comes at no cost in terms of our positive results. 

\begin{thm}\label{thm:explainable}
    There always exists a solution $(A,P,j^*)$ that satisfies strong negotiated envy-freeness. Furthermore, optimizing an objective as in Theorem \ref{thm:max_obj} subject to strong negotiated envy-freeness can be done in time polynomial in both $n$ and $m$. 
\end{thm}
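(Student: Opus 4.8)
The plan is to treat strong negotiated envy-freeness (Definition \ref{def:strong_REF}) as ordinary negotiated envy-freeness plus the single extra requirement \eqref{eq:strong_REF}, and to obtain the two claims by upgrading Theorem \ref{thm:ref_existence} and Theorem \ref{thm:max_obj}. Throughout I would fix a welfare-maximizing assignment $A$ (which every individually envy-free solution must use) and take $j^*$ to be a maximum-welfare apartment, the only candidate consensus apartment by Lemma \ref{lemma:highest_max_welfare}. It is cleanest to work in utility space: encode a price matrix $P$ by $u_i^{(j)} = U_i(A_j,P)$, so that validity, consensus, and the burden equation \eqref{eq:strong_REF_equality} become the linear constraints $\sum_i u_i^{(j)} = W_j$ (apartment balance, $W_j$ the welfare of $j$), $u_i^{(j^*)} \ge u_i^{(j)}$, and $\sum_j u_i^{(j)} = \sum_j U_i(A_j,Q)$. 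Writing $d_i = Q(A_{j^*}(i)) - P(A_{j^*}(i))$ for player $i$'s rent decrease in $j^*$, the extra requirement \eqref{eq:strong_REF} reads $d_i \le \Delta_i$, where $\Delta_i := \tfrac{1}{|S_i(A,Q,j^*)|+1}\sum_{j\in S_i(A,Q,j^*)}\big(U_i(A_j,Q)-U_i(A_{j^*},Q)\big)$ is exactly the minimal feasible total decrease certified by Lemma \ref{lemma:delta_quantity}.

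For existence I would fix an individually envy-free witness $(A,Q)$ and the maximum-welfare apartment $j^*$, and show that the utility-space constraints above — apartment balance, consensus, burden, and the cap $u_i^{(j^*)} \le U_i(A_{j^*},Q)+\Delta_i$ coming from $d_i\le\Delta_i$ — are simultaneously feasible. After choosing the $j^*$-utilities $u_i^{(j^*)}$, the remaining conditions form a capacitated transportation problem: pick off-$j^*$ utilities $u_i^{(j)}$ with row sums $\sum_{j\ne j^*}u_i^{(j)} = \sum_j U_i(A_j,Q)-u_i^{(j^*)}$, column sums $W_j$, and upper caps $u_i^{(j)}\le u_i^{(j^*)}$, which I would verify via a Hall/max-flow condition. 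The key step is to choose the $u_i^{(j^*)}$ (summing to $W_{j^*}$, each at most $U_i(A_{j^*},Q)+\Delta_i$) so that these per-player caps are jointly compatible with the global balance $\sum_i u_i^{(j^*)}=W_{j^*}$; here the maximum-welfare property of $j^*$ and the fact that $\Delta_i$ is precisely the budget of Lemma \ref{lemma:delta_quantity} are what guarantee that the players who disprefer $j^*$ can be compensated by those who prefer it (a degenerate obstruction, where a player is indifferent across all apartments, is ruled out by the averaging inequality $d_i\le \tfrac1m\sum_{j\ne j^*}\big(U_i(A_j,Q)-U_i(A_{j^*},Q)\big)\le\Delta_i$).

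For the optimization claim the plan is to run the algorithm of Theorem \ref{thm:max_obj} with constraint \eqref{eq:strong_REF} adjoined to the linear program in Algorithm \ref{algo:REF_consensus_opt}. Two ingredients transfer directly. First, an exact analog of Lemma \ref{lemma:multi_second_welfare}: the translation $P'(A'_j(i))=V_i(A'_j(i))-V_i(A_j(i))+P(A_j(i))$ preserves every utility, hence preserves each $S_i$, each $\Delta_i$, and \eqref{eq:strong_REF}, so it suffices to optimize over a single arbitrary welfare-maximizing $A$. Second, the reduction of the consensus-apartment choice to the maximum-welfare value via Lemma \ref{lemma:highest_max_welfare}, exactly as in the proof of Theorem \ref{thm:max_obj}. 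With these in hand the objective and all constraints except \eqref{eq:strong_REF} are linear.

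The main obstacle is that $d_i\le\Delta_i$ is \emph{not} linear in the program's variables $(P,Q)$: $\Delta_i$ depends on the combinatorial set $S_i(A,Q,j^*)=\{\,j : U_i(A_j,Q)>U_i(A_{j^*},Q)\,\}$ and on the normalizer $1/(|S_i|+1)$, and rewriting \eqref{eq:strong_REF} in closed form yields the \emph{convex} constraint $\sum_{j\ne j^*}\max\big(0,\,U_i(A_j,Q)-U_i(A_{j^*},P)\big)\ge U_i(A_{j^*},P)-U_i(A_{j^*},Q)$, whose feasible region is therefore non-convex. The way I would defuse this is to observe that once the sign pattern of $U_i(A_j,Q)-U_i(A_{j^*},Q)$ (equivalently, each set $S_i$) is fixed, both $\Delta_i$ and the consistency inequalities defining $S_i$ become linear, so the feasible region decomposes into linearly described pieces; the real work — and where I expect the difficulty to concentrate — is to argue that only polynomially many such preference configurations can be relevant, for instance by choosing the witness $Q$ so as to maximize the available budgets and thereby pin down the $S_i$, and then solving the resulting linear program for each relevant configuration and returning the best.
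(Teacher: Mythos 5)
Your existence argument is sound in outline but takes a genuinely different route from the paper's. The paper proves existence constructively via the rebalancing procedure of Algorithm \ref{algo:EEF}: starting from an individually envy-free $(A,Q)$ it negotiates apartment by apartment until the welfare-maximizing apartment is a consensus apartment, then verifies \eqref{eq:strong_REF} by noting that each player's rent there decreases in total by at most $\frac{1}{m}\sum_{j\in S_i(A,Q,j^*)}\bigl(U_i(A_j,Q)-U_i(A_{j^*},Q)\bigr)$, which is below the budget since $m\ge|S_i(A,Q,j^*)|+1$. Your one-shot formulation can be completed and hinges on exactly the same inequality: writing $a_{ij}=U_i(A_j,Q)-U_i(A_{j^*},Q)$ and $d_i$ for player $i$'s rent decrease in $j^*$, the off-$j^*$ transportation subproblem (which has no lower bounds on utilities) is feasible if and only if $d_i\ge\frac{1}{m}\sum_{j\ne j^*}a_{ij}$ for each $i$ and $W_{j^*}\ge W_j$ for each $j$, so everything reduces to picking the $d_i$ between this lower bound and the cap $\Delta_i$ with $\sum_i d_i=0$, which is possible precisely because of your averaging inequality and the maximality of $W_{j^*}$. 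You did not carry out this verification (the ``Hall/max-flow condition'' is where all the work sits), but the skeleton is correct, and your transfer of Lemma \ref{lemma:multi_second_welfare} to the strong setting is also correct.

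The optimization half, however, contains both an error and a genuine gap. The error: your ``closed form'' $\sum_{j\ne j^*}\max\bigl(0,\,U_i(A_j,Q)-U_i(A_{j^*},P)\bigr)\ge U_i(A_{j^*},P)-U_i(A_{j^*},Q)$ is \emph{not} equivalent to \eqref{eq:strong_REF}. Take two strictly preferred apartments with advantages $a_{i1}=10$ and $a_{i2}=1$: Definition \ref{def:strong_REF} caps the decrease at $\Delta_i=11/3$, yet your inequality admits $d_i=4$ (indeed any $d_i\le 5$), since $\max(0,10-4)+\max(0,1-4)=6\ge 4$. What your closed form characterizes is $\max_{S}\frac{1}{|S|+1}\sum_{j\in S}a_{ij}$, the true optimum of the negotiation problem in Lemma \ref{lemma:delta_quantity}, which strictly exceeds the fixed-$S_i$ expression in \eqref{eq:strong_REF} whenever some strictly preferred apartment has advantage below $\Delta_i$ (this mismatch in fact exposes a flaw in Lemma \ref{lemma:delta_quantity}'s achievability step, whose prescribed $\Delta_j=a_j-\Delta$ can be negative; but the theorem is stated relative to \eqref{eq:strong_REF} as written, so that is what you must enforce). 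The gap: your actual plan, decomposing the non-convex feasible set by the sign patterns $(S_i)_i$ and arguing that only polynomially many configurations matter, is the entire crux, and you explicitly leave it unresolved, so the polynomial-time claim is not established. For comparison, the paper dispatches this half in one line by adjoining \eqref{eq:strong_REF} to the linear program \ref{lp:1} and asserting the added constraint is linear; your observation that, read literally with $S_i$ determined by the variable $Q$, the constraint is neither linear nor convex is a legitimate criticism of that step, but flagging the difficulty is not the same as resolving it.
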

\begin{proof}[Proof sketch]
We provide a constructive proof of existence by presenting a rebalancing algorithm. This algorithm starts with a partial solution $(A,Q)$ that is individually envy-free, and after a sequence of negotiations is guaranteed to terminate with a solution $(A^*,P,j^*)$ that satisfies strong negotiated envy-freeness. Without loss of generality, assume that apartment $1$ is a welfare-maximizing apartment. Informally, the algorithm works as follows. The algorithm iterates over every apartment $j=1,2,...,m$, with the goal of maintaining the invariant that after the $j$th iteration, apartment $1$ is the consensus apartment among the first $j$ apartments. In the body of the $j$th iteration, the goal is to ``rebalance" the prices in apartment $j$ so that no player prefers apartment $j$ to apartment $1$. Any player $i$ who currently prefers apartment $j$ to apartment $1$ by $\Delta$ will have her rent increased by $\frac{m-1}{m}\Delta$ in apartment $j$ and her rent decreased evenly in every other apartment by $\frac{\Delta}{m}$. In order for player $i$ to have her rent changed in this way, she must negotiate with at least one other player $i'$ who will have his rent decreased in apartment $j$ and increased in every other apartment. Specifically, player $i$ will negotiate with the set $T$ of players that currently prefer apartment $1$ to apartment $j$. The complication in the algorithm comes from deciding how much player $i$ can negotiate with each of the players in $T$, as no player in $T$ can have his rent increase drastically enough in apartment $1$ that he no longer prefers apartment $1$ to each of the first $j$ apartments. The formal algorithm and proof are presented in Appendix \ref{app:strong_REF_alg}.

To see that we can optimize objectives under strong negotiated envy-freeness, note that Equation \eqref{eq:strong_REF} can be expressed as a linear constraint in the linear program \ref{lp:1}. Therefore, optimizing an objective with respect to all strong negotiated envy-free solutions can still be done in polynomial time as in \ref{thm:max_obj} with this additional linear constraint added to the linear program. 
\end{proof}

\subsection{Proof of Theorem \ref{thm:explainable}}\label{app:strong_REF_alg}

\begin{proof}

To prove Theorem \ref{thm:explainable}, we will show that Algorithm \ref{algo:EEF} terminates in a polynomial number of steps, and the solution returned by the algorithm satisfies strong negotiated envy-freeness.

\begin{algorithm}[htb!]
\caption{[Strong Negotiated Envy-Freeness]}
\label{algo:EEF}
\begin{algorithmic}[1]
\Require $(A,Q)$ is individually envy-free.
\State $P \gets Q$
\For{$j \leftarrow 1$ to $m$} \label{line:outerfor} 
    \LineComment{Negotiate until every player prefers apartment $1$ to apartment $j$}
    \While{$\left|\{i : U_i(A_{j}(i),P) > U_i(A_1, P)\}\right| > 0$} \label{line:firstwhile} 
        \State $i \gets \min \{i : U_i(A_{j}(i),P) > U_i(A_1, P)\}$
        \State $\Delta = U_i(A_j, P) - U_i(A_1, P)$
        \State $P(A_j(i)) = P(A_j(i)) + \frac{(m-1)\Delta}{m}$ \label{line:dp}
        \State $\forall j' \ne j : P(A_{j'}(i)) = P(A_{j'}(i)) - \frac{\Delta}{m}$ \label{line:incr_util}
        \LineComment{Redistribute the price changes among the other players evenly}
        \While{$\Delta > 0$} \label{line:secondwhile}
            \State $T \gets \{i : U_i(A_{j}(i),P) < U_i(A_1, P)\} $
            \State $\epsilon \gets \min_{i' \in T} \left| U_i(A_1, P) - U_i(A_{j}(i),P)\right|$
            \If{$\epsilon \ge \frac{\Delta}{|T|}$}
                \For{$i' \in T$}
                    \State $P(A_j(i')) = P(A_j(i')) - \frac{(m-1)\Delta}{m|T|}$
                    \State $\forall j' \ne j : P(A_{j'}(i')) = P(A_{j'}(i')) + \frac{\Delta}{m|T|}$
                \EndFor
                \State $\Delta = 0$
            \Else
                \For{$i' \in T$}
                    \State $P(A_j(i')) = P(A_j(i')) - \frac{(m-1)\epsilon}{m}$
                     \State $\forall j' \ne j : P(A_{j'}(i')) = P(A_{j'}(i')) + \frac{\epsilon}{m}$
                \EndFor
                \State $\Delta = \Delta - |T|\epsilon$
            \EndIf
        \EndWhile
    \EndWhile
\EndFor \\
\Return $(A, P, 1)$ 
\end{algorithmic}
\end{algorithm}

    We will first argue that the algorithm finishes in polynomial time. The outer ``for" loop on Line \ref{line:outerfor} goes through $m$ iterations. After each iteration of the ``while" loop in Line \ref{line:firstwhile}, the size of the set $\left|\{i : U_i(A_{j}(i),P) > U_i(A_1, P)\}\right|$ decreases by $1$, and therefore there are at most $n$ iterations of this loop per iteration of the outer ``for'' loop. In each iteration of the ``while" loop on Line \ref{line:secondwhile}, the size of the set $T$ decreases by at least $1$. Therefore, since the size of the set $T$ is at most $n$, the inner while loop has at most $n$ iterations. Therefore, we can conclude that the entire algorithm runs in polynomial time $O(n^2m)$ and always terminates.

    After the $j$th iteration of the ``for" loop on Line \ref{line:outerfor}, apartment $1$ is weakly preferred among the first $j$ apartments by all $n$ players, as this is exactly the condition for terminating the while loop on Line \ref{line:firstwhile}. Therefore, after the ``for" loop on Line \ref{line:outerfor} completes $m$ iterations, the first apartment will be a consensus apartment. The algorithm is also constructed so that no player's total utility for their assigned $m$ rooms ever changes. Since $(A,Q)$ begins as an individually envy-free assignment, this implies that the returned solution $(A,P,1)$ satisfies negotiated envy-freeness.

    In order to show that Algorithm \ref{algo:EEF} further satisfies strong negotiated envy-freeness, we need to show that Equation \ref{eq:strong_REF} holds for all $i$, i.e.
    \[
        P(A_{1}(i)) \ge Q(A_{1}(i)) - \frac{1}{|S_i(A,Q, 1) | + 1}\cdot \sum_{j \in S_i(A,Q, 1) } \left(U_i(A_{j}, Q) -U_i(A_{1}, Q)\right) \quad \forall i
    \]
    Intuitively, Equation \ref{eq:strong_REF} requires that player $i$'s rent in apartment $1$ is not decreased by too much when moving from price matrix $Q$ to price matrix $P$. The only part of the algorithm where $P(A_1(i))$ can decrease is in Line \ref{line:incr_util}, and such a decrease only occurs in round $j$ if player $i$ strictly preferred apartment $j$ to apartment $1$ under $(A,Q)$. In this case, player $i$ will have their rent in apartment $1$ decrease by $\frac{U_i(A_j, P) - U_i(A_1, P)}{m} = \frac{U_i(A_j, Q) - U_i(A_1, Q)}{m} $. This equality is because until Line \ref{line:incr_util} is reached for apartment $j$ and player $i$, we have $U_i(A_1, P) = U_i(A_1, Q)$ and $U_i(A_j, P) = U_i(A_j, Q)$.
    
    Therefore, we can conclude that
    \begin{align*}
        P(A_1(i)) &\ge Q(A_1(i)) - \sum_{j : U_i(A_j, Q) > U_i(A_1, Q)}  \frac{U_i(A_j, Q) - U_i(A_1, Q)}{m} \\
        &= Q(A_1(i)) - \frac{1}{m}\sum_{j \in S_i(A,Q, 1) } U_i(A_{j}, Q) -U_i(A_{1}, Q) \\
        &\ge Q(A_1(i)) - \frac{1}{|S_i(A,Q, 1) | + 1}\sum_{j \in S_i(A,Q, 1) } U_i(A_{j}, Q) -U_i(A_{1}, Q)
    \end{align*}
    where the last inequality is because $m \ge |S_i(A,Q, 1) | + 1$.
\end{proof}

\end{document}